\newtheorem{theorem}{Theorem}[section]
\newtheorem{proposition}[theorem]{Proposition}
\newtheorem{lemma}[theorem]{Lemma}
\newtheorem{corollary}[theorem]{Corollary}
\newtheorem{remark}[theorem]{Remark}
\def\bbbone{{\mathchoice {\rm 1\mskip-4mu l} {\rm 1\mskip-4mu l}
{\rm 1\mskip-4.5mu l} {\rm 1\mskip-5mu l}}}
\def\one{\bbbone}
\renewcommand{\i}{\mathrm{i}}
\newcommand{\e}{\mathrm{e}}
\newcommand{\real}{\mathrm{r}}
\newcommand{\I}[2]{\mathcal{I}_{#1, #2}}
\newcommand{\K}[2]{\mathcal{K}_{#1, #2}}
\newcommand{\J}[2]{\mathcal{J}_{#1, #2}}
\newcommand{\Hp}[2]{\mathcal{H}^+_{#1, #2}}
\newcommand{\Hm}[2]{\mathcal{H}^-_{#1, #2}}
\newcommand{\Hpm}[2]{\mathcal{H}^\pm_{#1, #2}}
\newcommand{\Wr}{\mathscr{W}}
\newcommand{\cH}{\mathcal{H}}
\newcommand{\cV}{\mathcal{V}}
\newcommand{\sF}{\mathscr{F}}
\newcommand{\cI}{\mathcal{I}}
\newcommand{\cK}{\mathcal{K}}
\newcommand{\cJ}{\mathcal{J}}
\newcommand{\cY}{\mathcal{Y}}
\newcommand{\cD}{\mathcal{D}}
\newcommand{\cG}{\mathcal{G}}
\newcommand{\pder}{\partial}
\newcommand{\suma}[2]{\sum\limits_{#1}^{#2}}
\newcommand{\naw}[1]{\left( {#1} \right)}
\newcommand{\abs}[1]{{\left| #1 \right|}}
\renewcommand{\Re}{\mathrm{Re}}
\renewcommand{\Im}{\mathrm{Im}}
\newcommand{\bbZ}{\mathbb{Z}}
\newcommand{\bbC}{\mathbb{C}}
\newcommand{\bbR}{\mathbb{R}}
\newcommand{\bbS}{\mathbb{S}}
\newcommand{\eps}{\epsilon}
\newcommand{\Dom}{{\mathcal D}}
\newcommand{\zesp}[1]{\overline{#1}}
\newcommand{\D}{\mathrm{D}}
\newcommand{\N}{\mathrm{N}}
\newcommand\slim{{\rm s-}\lim}
\def\Ka{\mathcal K}
\def\Ia{\mathcal I}
\def\Ha{\mathcal H}
\def\Ja{\mathcal J}
\def\Ya{\mathcal Y}
\def\R{\mathbb R}
\def\C{\mathbb C}
\def\N{\mathbb N}
\def\Z{\mathbb Z}
\def\B{\mathcal B}
\def\d{\mathrm d}
\begin{document}

\title{On radial Schr\"odinger operators with a Coulomb potential}

\author{
Jan Derezi\'{n}ski,\footnote{The financial support of the National Science
Center, Poland, under the grant UMO-2014/15/B/ST1/00126, is gratefully
acknowledged.}
\\
Department of Mathematical Methods in Physics, Faculty of Physics\\
University of Warsaw, Pasteura 5, 02-093, Warszawa, Poland\\
email: jan.derezinski@fuw.edu.pl,
\\  \\
Serge Richard\footnote{On leave of absence from
Univ Lyon, Universit\'e Claude Bernard Lyon 1, CNRS UMR 5208, Institut Camille Jordan, 43 blvd. du 11 novembre 1918, F-69622 Villeurbanne cedex, France.}
\footnote{
Partially supported by the grant
\emph{Topological invariants through scattering theory and noncommutative geometry} from Nagoya University.}\\
Graduate school of mathematics,
Nagoya University, \\
Chikusa-ku,
Nagoya 464-8602, Japan \\
email: richard@math.nagoya-u.ac.jp}
\maketitle

\begin{abstract}
This paper presents a thorough analysis of
1-dimensional Schr\"odinger operators whose potential
is a linear combination of the {\em Coulomb term} $1/r$ and the {\em centrifugal term} $1/r^2$. We allow both coupling constants to be complex.
Using natural boundary conditions at $0$, a two parameter holomorphic family of closed operators
on $L^2(\R_+)$ is introduced. We call them
the {\em Whittaker operators}, since in the mathematical literature their eigenvalue equation is called the {\em Whittaker equation}.
Spectral and scattering theory for Whittaker operators is studied.
Whittaker operators appear in quantum mechanics as the radial part
of the Schr\"odinger operator with a Coulomb potential.
\end{abstract}

\tableofcontents

%%%%%%%%%%%%%%%%%%%%%%%%%%%%%%%%%%%%%%%%%%%%%%%%%%%%%%%%%%%%%%%%%%%%%%%%%%%%%%%%
\section{Introduction}
\setcounter{equation}{0}
%%%%%%%%%%%%%%%%%%%%%%%%%%%%%%%%%%%%%%%%%%%%%%%%%%%%%%%%%%%%%%%%%%%%%%%%%%%%%%%%

Consider the differential expression
\begin{equation}\label{whita}
L_{\beta,\alpha}:=-\partial_z^2+\Big(\alpha-\frac14\Big)\frac{1}{z^2}-
\frac{\beta}{z},
\end{equation}
where the parameters $\beta,\alpha$ are arbitrary complex numbers.
This expression can be understood as an operator acting on functions
holomorphic outside of $0$, or acting on compactly supported smooth functions on $]0,\infty[$,
or acting on distributions on $]0,\infty[$.
We call \eqref{whita} the \emph{formal Whittaker operator}.

In this paper we are interested not so much in the formal operator $L_{\beta,\alpha}$ but in some of its realizations
as closed operators on $L^2(\R_+)$, with $\R_+:=]0,\infty[$.
To describe these closed operators it is natural to write $\alpha=m^2$.
Then, for any $m\in \C$ with $\Re(m)>-1$ we introduce an operator $H_{\beta,m}$ which is defined as the closed operator that equals
$L_{\beta,m^2}$ on the domain of functions that behave as $x^{\frac12+m}\big(1-\frac{\beta}{1+2m} x \big)$ near zero, see \eqref{eq_def_H} for a precise definition.
With this definition we obtain a two-parameter family of closed operators in $L^2(\R_+)$
\begin{equation*}
\C\times\{m\in \C\ |\ \Re(m)>-1\}\ni (\beta,m)\mapsto H_{\beta,m}\;\!,
\end{equation*}
which is holomorphic except for a singularity at $(\beta,m)=(0,-\frac12)$.
For $\Re(m) \geqslant 1$ the operator $H_{\beta,m}$ is simply the closure of $L_{\beta,m^2}$ restricted to $C_\mathrm{c}^\infty(\R_+)$. In fact, for
$\Re(m) \geqslant 1$ it is the unique closed realization of
$L_{\beta,m^2}$ on $L^2(\R_+)$.

This is not the case when $-1<\Re(m)<1$. Among various closed realizations of
$L_{\beta,m^2}$ one can distinguish the minimal one $L_{\beta,m^2}^{\min}$ and the maximal one $L_{\beta,m^2}^{\max}$. The operators $H_{\beta,m}$ lie between
$L_{\beta,m^2}^{\min}$ and $L_{\beta,m^2}^{\max}$. They are distinguished by the fact that they are obtained by analytic continuation from the region
$\Re(m) \geqslant 1$ where the uniqueness holds.
This continuation stops at the vertical line $\Re(m)=-1$, which cannot be passed
because on the left of this line the singularity $x^{\frac12+m}$ is not square integrable near $0$.

The operators $H_{\beta, m}$ are not the only closed realizations of
$L_{\beta,m^2}$ inside the strip $-1<\Re(m)<1$, but they are the {\em distinguished} ones. In fact, for generic $(m^2,\beta)$ in this strip there are two distinguished boundary conditions with the behavior
\begin{equation}\label{kappa1}
x^{\frac12+m}\big(1-\frac{\beta}{1+ 2m} x \big),\quad
x^{\frac12-m}\big(1-\frac{\beta}{1- 2m} x \big)
\end{equation}
near zero, and they correspond to the operators $H_{\beta,m}$ and $H_{\beta,-m}$.
In our paper we consider only the distinguished boundary conditions. We do not discuss other boundary conditions, except for the short remark below.

In the generic case, with $-1<\Re(m)<1$, there exist {\em mixed boundary conditions} corresponding to the behavior
\begin{equation}\label{kappa}
x^{\frac12+ m}\big(1-\frac{\beta}{1+ 2m} x \big)+\kappa
x^{\frac12-m}\big(1-\frac{\beta}{1-2m} x \big)
\end{equation}
near zero, where $\kappa$ is a complex parameter or $\kappa=\infty$ (with an appropriate interpretation of \eqref{kappa}). There are also two {\em degenerate cases}, for which the boundary conditions \eqref{kappa} do not work:
 If $m^2=0$, then both behaviors in \eqref{kappa1} coincide. If $m^2=\frac14$, $\beta\neq 0$, then only $m=\frac12$ makes sense in \eqref{kappa1}.
 In the degenerate cases one needs to modify \eqref{kappa}
by including appropriate logarithmic terms.

The goal of our paper is to study the properties of the family of operators $H_{\beta,m}$.
We do not restrict ourselves to real parameters, when $H_{\beta,m}$ are self-adjoint, but we consider general complex parameters.
In particular we would like to determine which properties survive in the non-self-adjoint setting and which do not.
Our paper is in many ways parallel to \cite{BDG} and especially to \cite{DR}, where the special case $\beta=0$ is studied.
These papers showed that the theory of Schr\"odinger operators with complex potentials can be very similar to the theory involving real potentials,
when we can have the self-adjointness. This includes functional calculus, spectral and scattering theory.

However, the present paper is not just a boring extension of \cite{BDG}---new interesting phenomena appear.
First of all, the operators $H_{\beta,m}$ usually have a sequence of eigenvalues accumulating at zero,
while for $\beta=0$ these eigenvalues are absent. Depending on the value of the parameters,
these eigenvalues disappear into the \emph{nonphysical sheet of the complex plane} and become resonances.
In the Appendix we give a few pictures of the spectrum of $H_{\beta,m}$, which illustrate the dependence
of eigenvalues and resonances on the parameters.

Another phenomenon, which we found quite unexpected, is the presence of
a non-removable singularity of the holomorphic function
$(\beta,m)\mapsto H_{\beta,m}$ at $(\beta,m)=(0,-\frac12)$.
This singularity
is closely related to the behavior of the potential at the origin. It
is quite curious: it is invisible when we consider just the variable $m$.
In fact, as proven already in \cite{BDG}, the map $m\mapsto H_m=H_{0,m}$ is holomorphic around $m=-\frac12$, and
$H_{0,-\frac12}$ is the Laplacian on the half-line with the Neumann boundary condition. It is also holomorphic around
$m=\frac12$, and
$H_{0,\frac12}$ is the Laplacian on the half-line with the Dirichlet boundary condition. Thus one has
\begin{align}\label{huli0}
H_{0,-\frac12}&\neq H_{0,\frac12}.
\end{align}
If we introduce the Coulomb potential, then whenever $\beta \neq 0$,
\begin{align}\label{huli}
H_{\beta,-\frac12}&=H_{\beta,\frac12}.
\end{align}
The function $(\beta,m)\mapsto H_{\beta,m}$ is holomorphic around $(0,\frac12)$, in particular,
\begin{equation}\label{huli1}
\lim_{\beta\to0}
(\one+H_{\beta,\frac12})^{-1}=(\one+H_{0,\frac12})^{-1}.
\end{equation}
But \eqref{huli} implies that
\begin{equation*}
\lim_{\beta\to0}
(\one+H_{\beta,-\frac12})^{-1}=(\one+H_{0,\frac12})^{-1}.
\end{equation*}
Thus
$\beta\mapsto(\one+H_{\beta,-\frac12})^{-1}$ is not even continuous near $\beta =0$. This singularity is closely related
to a rather irregular behavior of eigenvalues of $ H_{\beta,m}$, see Proposition \ref{more}.

As proven in \cite{BDG,DR}, the operators $H_{0,m}$ are rather well-behaved, also in the case of complex $m$.
The limiting absorption principle holds, namely the boundary values of the resolvent exist between the usual weighted spaces,
and scattering theory works the usual way. In particular, the {\em M{\o}ller operators} (also called {\em wave operators}) exist.
They are closely related to the {\em Hankel transformation}, which diagonalizes $H_{0,m}$, or equivalently which intertwines
them with a multiplication operator.

Most differences between $H_{0,m}$ and $H_{\beta,m}$ for $\beta\neq0$ are caused by the long-range character to the Coulomb potential.
In this context, it becomes critical whether $\beta$ is real or not.
As is well-known, for real $\beta$, we still have limiting absorption principle with the usual weighted spaces.
The usual M{\o}ller operators do not exist, but {\em modified M{\o}ller operators} do.
They can be expressed in terms of an isometric operator, which we call the {\em Hankel-Whittaker transformation}.

These properties mostly do not survive when $\beta$ becomes non-real.
In the limiting absorption principle we need to change the usual weighted spaces, see Theorem \ref{thm_conv_resolvent}.
The Hankel-Whittaker transform is no longer bounded, and to our understanding there is no sensible scattering theory.

Some remnants of scattering theory remain for complex $m$ but real non-zero $\beta$:
we show that in this case the {\em intrinsic scattering operator} is well defined, bounded and invertible unless $\Re(m)=-\frac12$.

It is usually stressed that constructions of long-range scattering theory are to some degree arbitrary \cite{DG0}.
More precisely, one says that modified M{\o}ller operators and the scattering operator have an arbitrary momentum dependent phase factor.
However, in the context of Whittaker operators there are distinguished choices for the M{\o}ller operators and for the scattering operator.
These choices appear more or less naturally when one wants to write down formulas for these operators in terms of special functions.
So one can argue that they were known before in the literature. However, to our knowledge this observation has not been formulated explicitly.

Let us sum up the properties of operators $H_{\beta,m}$ in various parameter regions.

\begin{enumerate}
\item If $\beta=0$ and $-1<m<\infty$, then $H_{\beta,m}$ is self-adjoint and the usual M{\o}ller operators exist.
\item If $\beta=0$ and $-1<\Re(m)<\infty$ with $\Im(m)\neq0$, then
$H_{\beta,m}$ is not self-adjoint, it is however similar to self-adjoint; the usual M{\o}ller operators exist \cite{BDG,DR}.
\item If $\beta\neq0$ with $\Im(\beta)=0$ and if $-1<m<\infty$, then
$H_{\beta,m}$ is self-adjoint, and the modified M{\o}ller operators exist.
\item If $\beta\neq0$ with $\Im(\beta)=0$ and if $-1<\Re(m)<\infty$ with $\Im(m)\neq0$, then
$H_{\beta,m}$ is not self-adjoint; maybe some kind of long-range scattering theory holds;
what we know for sure is the boundedness and invertibility of the intrinsic scattering operator unless $\Re(m)=-\frac{1}{2}$.
\item If $\Im(\beta)\neq0$ and $-1<\Re(m)<\infty$ with $\Im(m)\neq0$, then
$H_{\beta,m}$ is not self-adjoint; it seems that no reasonable scattering theory applies.
\end{enumerate}

The operator $H_{\beta,m}$ is one of the most important exactly solvable differential operators. Its eigenvalue equation for the eigenvalue (energy) $-\frac{1}{4}$
\begin{equation}\label{Whittaker-hyper.1}
\Big(-\pder_z^2 +\big(m^2 - \frac{1}{4}\big)\frac{1}{z^2} - \frac{\beta}{z}+\frac{1}{4}\Big)v = 0
\end{equation}
is known in mathematical literature as the {\em Whittaker equation}.
In fact, Whittaker published in 1903 a paper \cite{Whi} where he expressed solutions to \eqref{Whittaker-hyper.1} in terms of confluent functions.
This is the reason why we call $H_{\beta,m}$ the {\em Whittaker operator}.

The best known application of Whittaker operators concerns the Hydrogen Hamiltonian, that is, the Schr\"odinger
operator with a Coulomb potential in dimension 3. More generally, in any dimension the radial part of the Schr\"odinger operator
with Coulomb potential reduces to the Whittaker operator.
We sketch this reduction in Section \ref{The Coulomb problem in $d$ dimensions}.
A brief introduction to the subject can also be found in many textbooks on quantum mechanics, and we refer for example \cite[Sec.~135]{LL}
or \cite{GTV} for a recent approach.
The literature on the subject is vast and
we list only a few classical papers relevant for our manuscript, namely \cite{Dol,Ges,Gui,Her,Hum,MOC,Mic,MZ,Muk,Sea,TB,Yaf}
or more recently \cite[App. C]{KP}.
However, in all these references only real coupling constants are considered.
Note that the study of all possible self-adjoint extensions of the Whittaker operator in the real case
goes back to the work of Rellich \cite{R}, and was reconsidered with more generality by Bulla-Gesztesy \cite{BG}.
In particular, \cite{BG} discuss both mixed boundary conditions of the form \eqref{kappa} and their logarithmic modifications
needed in degenerate cases.

Let us finally describe the content of this paper.
Section \ref{sec_B_functions} is devoted to special functions that we need in our paper.
These functions are essentially eigenfunctions of the formal Whittaker operator \eqref{whita} corresponding to the eigenvalues $-\frac{1}{4}$,
$\frac{1}{4}$ and $0$. All of them can be expressed in terms of confluent and Bessel functions.
Note that we use slightly different conventions from those in most of the literature.
We follow our previous publication \cite{DR}, where we advocated the use of Bessel functions for dimension $1$, denoted
$\cI_{m}$, $\cK_{m}$, $\cJ_{m}$ and $\cH_{m}^\pm$.
Here we mimic this approach
and introduce systematically the functions $\cI_{\beta,m}$, $\cK_{\beta,m}$, $\cJ_{\beta,m}$
and $\cH_{\beta,m}^\pm$, which are particularly convenient in the context of the Whittaker operator.
Note that $\cI_{\beta,m}$, $\cK_{\beta,m}$ essentially coincide with the usual Whittaker functions, and $\cJ_{\beta,m}$
and $\cH_{\beta,m}^\pm$ are obtained by analytic continuation to imaginary arguments.
In particular, we present the asymptotic behavior of these functions near $0$ and near infinity for any parameters
$\beta$ and $m$ in $\C$.

Note that the theory of special functions related to the Whittaker equation is beautiful, rich and useful.
We try to present it in a concise and systematic way, which some readers should appreciate.
However, the readers who are more interested in operator-theoretic aspects of our paper can skip most of the material of Section
\ref{sec_B_functions} and go straight to the next section which constitutes the core of our paper.

In Section \ref{sec_Whi_op} we define the closed operators $H_{\beta,m}$ for any $m,\beta\in \C$
with $\Re(m)>-1$, and investigate their properties. A discussion about the complex eigenvalues of these operators is provided, as well as a description of a
limiting absorption principle on suitable spaces. At this point, the distinction between $\Im(\beta)=0$ and $\Im(\beta)\neq 0$
will appear. In the final part of the paper, we introduce Hankel-Whittaker transformations which diagonalize our operators,
and provide some information about the scattering theory. Some open questions are formulated
in the last subsection.

\subsection{The Coulomb problem in $d$ dimensions}\label{The Coulomb problem in $d$ dimensions}

Let us briefly describe the manifestation of the Whittaker operator in quantum mechanics.
We consider the space $L^2(\R^d)$ and the Schr\"odinger operator with the Coulomb potential in dimension $d$\;\!:
\begin{equation}\label{whit1}
-\Delta - \frac{\beta}{r},
\end{equation}
where $r$ denotes the radial coordinate.
In spherical coordinates the expression \eqref{whit1} reads
\begin{equation}\label{whit2}
-\pder_r^2 - \frac{d-1}{r}\pder_r
-\frac{1}{r^2}\Delta_{\bbS^{d-1}}-\frac{\beta}{r},
\end{equation}
where $\Delta_{\bbS^{d-1}}$ is the Laplace--Beltrami operator on the sphere $\bbS^{d-1}$.
Eigenvectors of $-\Delta_{\bbS^{d-1}}$ are the spherical harmonics and the corresponding eigenvalues are
$\ell(\ell+d-2)$, with $\ell=0,1,2,\dots$ for $d\geq2$; $\ell=0,1$ for $d=1$.
Thus on the spherical harmonics of order $\ell$ the expression \eqref{whit2} becomes
\begin{align*}
&-\pder_r^2 - \frac{d-1}{r}\pder_r +\frac{\ell(\ell+d-2)}{r^2}-\frac{\beta}{r}\\
&=-\pder_r^2 - \frac{d-1}{r}\pder_r +\frac{m^2 - \naw{\frac{d}{2}-1}^2}{r^2}-\frac{\beta}{r},
\end{align*}
where $m:= \ell+\frac{d}{2}-1$.
By letting $m$ take an arbitrary complex value and by considering $d=1$,
we obtain the \emph{Whittaker operator}
\begin{equation}\label{Whittaker-dim-one}
-\pder_r^2 +\Big(m^2 - \frac{1}{4}\Big)\frac{1}{r^2} - \frac{\beta}{r}.
\end{equation}

For $\beta=0$ the Whittaker operator simplifies to
the \emph{Bessel operator}, see for example \cite{BDG,DR}.
As for the Bessel operators, the Whittaker operators for distinct dimensions are related by a simple similarity transformation, namely
\begin{equation}\label{mimi6}
\begin{split}
&-\pder_r^2 - \frac{d-1}{r}\pder_r +\Big(m^2 - \Big(\frac{d}{2}-1\Big)^2\Big)\frac{1}{r^2} - \frac{\beta}{r}\\
&=r^{-\frac{d}{2}+\frac{1}{2}}\Big(-\pder_r^2 +\Big(m^2 - \frac{1}{4}\Big)\frac{1}{r^2} - \frac{\beta}{r}\Big)r^{\frac{d}{2}-\frac{1}{2}}.
\end{split}
\end{equation}
It is then a matter of taste to decide which dimension
should be treated as the standard one. From the physical point of view
$d=3$ is the most important, from the mathematical point
of view one can hesitate between $d=2$ and $d=1$. We choose $d=1$,
following the tradition going back to Whittaker \cite{Whi}, and consistently with \cite{DR}.

The Coulomb problem in the physical dimension $d=3$ has a considerable practical importance.
Therefore, there is a lot of literature devoted to the equation
\begin{equation*}
\Big(\pder_r^2 -\ell(\ell+1)\frac{1}{r^2} - \frac{2\eta}{r}+1\Big)v = 0,
\end{equation*}
called \emph{the Coulomb wave equation}, see \cite[Chap.~14]{AS}, which is directly obtained from the physical problem. For this equation, $\ell$
is a non-negative integer and $\eta$ is a real parameter.
Solutions of this equation are often denoted by
$F_\ell(\eta,r)$, $G_\ell(\eta,r)$ and $H_\ell^\pm(\eta,z):=G_\ell(\eta,r)\pm \i F_\ell(\eta,r)$,
and are called \emph{Coulomb wave functions}. Alternatively, the equation
\begin{equation*}
\Big(\pder_r^2 -\ell(\ell+1)\frac{1}{r^2} + \frac{2}{r}+\varepsilon\Big)v = 0
\end{equation*}
has been considered for $\varepsilon \in \R$, and its solution
are often denoted by $f(\varepsilon, \ell;r)$, $h(\varepsilon,\ell;r)$, and also $s(\varepsilon,\ell;r)$
and $c(\varepsilon,\ell;r)$. Properties of these functions have been studied for example in \cite{Hum,Sea,TB} and compiled in \cite{NIST}
(see also the more recent work \cite{Gas}).

Our aim is to consider the Whittaker operator in its mathematically most natural form,
including complex values of parameters, which do not have an obvious physical meaning. This explains some differences
of our set-up and conventions compared with those used in the above literature.

\subsection{Notation}

We shall use the notations $\R_+$ for $]0,\infty[$, $\N$ for $\{0,1,2,\dots\}$,
while $\N^\times:=\{1,2,3,\dots\}$.
For $\alpha\in \C$, $\bar \alpha$ means the complex conjugate.
 $C_{\rm c}^\infty(\R_+)$ denotes
the set of smooth functions on $\R_+$ with compact support.

For an operator $A$ we denote by $\cD(A)$ its domain and by $\sigma_{\rm p}(A)$
the set of its eigenvalues (its point spectrum). We also use the notation
$\sigma(A)$ for its spectrum, $\sigma_{\rm ess}(A)$ for its essential spectrum
and $\sigma_\d(A)$ for its discrete spectrum.
If $z$ is an isolated point of $\sigma(A)$, then $\one_{\{z\}}(A)$
denotes the Riesz projection of $A$ onto $z$. Similarly, if $A$ is
self-adjoint and $\Xi$ is a Borel subset of $\sigma(A)$, then
$\one_\Xi(A)$ denotes the spectral projection of $A$ onto $\Xi$.

The following holomorphic functions are understood as their \emph{principal
bran\-ches}, that is, their domain is $\C\setminus]-\infty,0]$ and on
$]0,\infty[$ they coincide with their usual definitions from real analysis:
$\ln(z)$, $\sqrt z$, $z^\lambda$. We set $\arg (z):=\Im \big(\ln(z)\big)$.
The extensions of these functions to $]-\infty,0]$ or to $]-\infty,0[$ are from the upper half-plane.

%%%%%%%%%%%%%%%%%%%%%%
\medskip
{\bf Acknowledgement.} The authors thank M.~Karczmarczyk for his contributions at an early stage of this project.
They are also grateful to D.~Siemssen, who helped them to make pictures with Mathematica.

%%%%%%%%%%%%%%%%%%%%%%%%%%%%%%%%%%%%%%%%%%%%%%%%%%%%%%%%%%%%%%%%%%%%%%%%%%%%%%%%
\section{Bessel and Whittaker functions}\label{sec_B_functions}
\setcounter{equation}{0}
%%%%%%%%%%%%%%%%%%%%%%%%%%%%%%%%%%%%%%%%%%%%%%%%%%%%%%%%%%%%%%%%%%%%%%%%%%%%%%%%

An important role in our paper is played by various kinds of {\em Whittaker functions}, closely related to {\em confluent hypergeometric functions}.
We will also use several varieties of {\em Bessel functions}.
In this section we fix the notation concerning these special
functions and describe their basic properties.

This section plays an auxiliary role in our paper, since almost all its result can be found in the literature.
The readers interested mainly in our operator-theoretic results can only briefly skim this section, and then go
to the next one, which constitutes the main part of our paper.

We start by recalling the definition of Bessel functions for dimension $1$,
which we prefer to use instead of the usual Bessel functions.
Their main properties have been discussed in \cite{DR}, therefore there is no need to repeat them here.
We then introduce the Whittaker functions $\cI_{\beta,m}$, $\cK_{\beta,m}$, $\cJ_{\beta,m}$
and $\cH_{\beta,m}^\pm$.
These functions are solutions of the hyperbolic-type and
trigonometric-type Whittaker equation, as explained below.
In our notation and presentation, as much as possible, we stress the analogy of Whittaker functions and Bessel functions.
The section ends with a description of zero-energy solutions of the Whittaker operator.

\subsection{Hyperbolic and trigonometric Whittaker equation}

A simple argument using complex scaling shows that the eigenvalue problem with non-zero energies
for the Whittaker operator \eqref{Whittaker-dim-one} can be derived from the
following equation, which is known in the literature as
the \emph{Whittaker equation}
\begin{equation}\label{Whittaker-hyper}
\Big(-\pder_z^2 +\big(m^2 - \frac{1}{4}\big)\frac{1}{z^2} - \frac{\beta}{z}+\frac{1}{4}\Big)v = 0.
\end{equation}
It is convenient to consider in parallel to
\eqref{Whittaker-hyper} the additional equation
\begin{equation}\label{Whittaker-trig}
\Big(-\pder_z^2 +\big(m^2 - \frac{1}{4}\big)\frac{1}{z^2} - \frac{\beta}{z}-\frac{1}{4}\Big)v = 0.
\end{equation}
We call it \emph{the trigonometric-type Whittaker equation}.
For consistency, the equation \eqref{Whittaker-hyper}
is sometimes referred to as \emph{the hyperbolic-type Whittaker equation}. Note that one can pass from \eqref{Whittaker-hyper} to
\eqref{Whittaker-trig} by replacing $z$ with $\pm\i z$ and $\beta$ with $\mp\i\beta$.

\subsection{Bessel equations and functions}\label{sec_B}

In the special case $\beta=0$, by rescaling the independent variable in \eqref{Whittaker-hyper} and
 \eqref{Whittaker-trig}, we obtain the {\em modified (or hyperbolic-type) Bessel equation for dimension $1$}
\begin{equation}\label{lap7}
\Big(\partial_z^2-\big(m^2-\frac14\big)\frac{1}{z^2}-1\Big)v = 0,
\end{equation}
and the {\em standard (or trigonometric-type) Bessel equation
for dimension $1$}
\begin{equation}\label{lap8}
\Big(\partial_z^2-\big(m^2-\frac14\big)\frac{1}{z^2}+1\Big)v = 0.
\end{equation}
As explained in \eqref{mimi6} they are equivalent to
the modified (or hyperbolic-type) Bessel equation
\begin{equation}\label{lap5}
\Big(\partial_z^2+\frac{1}{z}\partial_z-\frac{m^2}{z^2}-1\Big)v = 0,
\end{equation}
respectively to the standard (or trigonometric-type) Bessel equation
\begin{equation}\label{lap6}
\Big(\partial_z^2+\frac{1}{z}\partial_z-\frac{m^2}{z^2}+1\Big)v = 0,
\end{equation}
which are usually considered in the literature.

The distinguished solutions of \eqref{lap5} are
\begin{align*}
\rm{the}\ \emph{modified Bessel function}&&I_m(z),\\
\rm{the}\ \emph{MacDonald function}&&K_m(z),
\end{align*}
and of \eqref{lap6} are
\begin{align*}\rm{the}\
\emph{Bessel function}&&J_m(z),\\
\rm{the}\ \emph{Hankel function of the 1st kind}&&H_m^+(z)=H_n^{(1)}(z),\\
\rm{the}\ \emph{Hankel function of the 2nd kind}&&H_m^-(z)=H_n^{(2)}(z),\\
\rm{the}\ \emph{Neumann function}&&Y_m(z).
\end{align*}
Following \cite{DR} we prefer
functions which solve the two Bessel equations for dimension~1.
Namely, we shall use the following functions solving \eqref{lap7}
\begin{align*}
\rm{the}\ \emph{modified (or hyperbolic) Bessel function for dim.~$1$}\qquad&\Ia_m(z):=\sqrt{\frac{\pi z}{2}} I_m(z),\\
\rm{the}\ \emph{MacDonald function for dim.~$1$}\qquad&\Ka_m(z):=\sqrt{\frac{2 z}{\pi}} K_m(z).
\end{align*}
We will also use the following functions solving \eqref{lap8}
\begin{align*}
\rm{the}\ \emph{(trigonometric) Bessel function for dim.~$1$}\qquad&\Ja_m(z):= \sqrt{\frac{\pi z}{2}} J_m(z),\\
\rm{the}\ \emph{Hankel function of the 1st kind for dim.~$1$}\qquad&\Ha_m^+(z):=\sqrt{\frac{\pi z}{2}} H^+_m(z),\\
\rm{the}\ \emph{Hankel function of the 2nd kind for dim.~$1$}\qquad&\Ha_m^-(z):=\sqrt{\frac{\pi z}{2}} H^-_m(z),\\
\rm{the}\ \emph{Neumann function for dim.~$1$}\qquad&\Ya_m(z):= \sqrt{\frac{\pi z}{2}} Y_m(z).
\end{align*}
We refer the reader to the Appendix of \cite{DR} for the properties of these functions.

\subsection{The function $\cI_{\beta,m}$}\label{sec_f1}

The hyperbolic-type Whittaker equation \eqref{Whittaker-hyper}
can be reduced to the ${}_1F_1$-equation, also known as the {\em confluent equation}:
\begin{equation}\label{confluent-equation}
\big(z\pder_z^2 + (c-z)\pder_z - a\big)v=0.
\end{equation}
Indeed, one has
\begin{equation*}
-z^{\frac{1}{2}\mp m}\e^{\frac{z}{2}}\Big(-\pder_z^2 + \big(m^2 - \frac{1}{4}\big)\frac{1}{z^2} - \frac{\beta}{z} + \frac{1}{4}\Big)z^{\frac{1}{2}\pm m}\e^{-\frac{z}{2}}
= z\pder_z^2 + (c-z)\pder_z - a
\end{equation*}
for the parameters $c = 1 \pm 2m$ and $a = \frac{1}{2} \pm m-\beta$.
Here the sign $\pm$ has to be understood as two possible choices.

One of the solutions of the confluent equation is
Kummer's confluent hypergeometric function ${}_1F_1(a;c;\cdot)$ defined by
\begin{equation}\label{kummer-function}
{}_1F_1(a;c;z) = \suma{k=0}{\infty}\frac{(a)_k}{(c)_k}\frac{z^k}{k!},
\end{equation}
where $(a)_k:=a(a+1)\cdots(a+k-1)$ is the usual Pochhammer's symbol.
It is the only solution of \eqref{confluent-equation} behaving as $1$ in the vicinity of $z=0$.

It is often convenient to use the closely-related function ${}_1\mathbf{F}_1(a;c;\cdot)$
defined by
\begin{equation}\label{whi}
{}_1\mathbf{F}_1(a;c;z) = \suma{k=0}{\infty}\frac{(a)_k}{\Gamma(c+k)}\frac{z^k}{k!} = \frac{{}_1F_1(a;c;z)}{\Gamma(c)}.
\end{equation}
We prefer the normalization \eqref{whi}, and in the sequel the following
function $\cI_{\beta,m}$ will be treated as one of the standard solutions
of the hyperbolic-type Whittaker equation \eqref{Whittaker-hyper}:
\begin{align}\label{eq_serie_I}
\nonumber \cI_{\beta,m}(z) & := z^{\frac{1}{2}+m}\e^{\mp \frac{z}{2}} {}_1\mathbf{F}_1\Big(\frac{1}{2}+m\mp\beta;\,1+2m;\,\pm z\Big) \\
&= z^{\frac{1}{2}+m}\e^{\mp\frac{z}{2}}\suma{k=0}{\infty}\frac{\naw{\frac{1}{2}+m\mp\beta }_k}{\Gamma(1+2m+k)}\;\!\frac{(\pm z)^k}{k!}.
\end{align}
Note that the sign independence comes from the $1^{\mathrm{st}}$
Kummer's identity
\begin{equation*}
{}_1F_1(a;\,c;\,z) = \e^z {}_1F_1(c-a;\,c;\,-z).
\end{equation*}

In the special case $\beta=0$, the function $\cI_{0,m}$ essentially coincides with the
modified Bessel function. More precisely, one has
\begin{equation}\label{eqI_{0,m}}
\cI_{0,m}(z) = \frac{\sqrt{\pi z}}{\Gamma\big(\frac{1}{2}+m\big)}I_m\Big(\frac{z}{2}\Big)
= \frac{2}{\Gamma\big(\frac{1}{2}+m\big)}\cI_m\Big(\frac{z}{2}\Big).
\end{equation}

For $- \frac{1}{2}-m\pm\beta: = n\in\N$, the series of \eqref{kummer-function} is finite and we get
\begin{equation*}
\cI_{\pm (\frac{1}{2}+m+n),m}(z) = \frac{n!\;\!z^{\frac{1}{2}+m}\e^{\mp\frac{z}{2}}}{\Gamma(1+2m+n)}L_n^{(2m)}(\pm z),
\end{equation*}
where
\begin{equation}\label{eq_Laguerre}
L_n^{(2m)}(z) = \frac{z^{-2m}\e^z}{n!} \frac{\mathrm{d}^n}{\mathrm{d}z^n}\big(\e^{-z}z^{2m+n}\big)
\end{equation}
are the {\em Laguerre polynomials} (or {\em generalized Laguerre polynomials}).

Finally, from equation \eqref{whi} one can deduce the asymptotic behaviour around $0$\;\!:
\begin{equation}\label{Ibm-around-zero}
\cI_{\beta,m}(z)=\frac{z^{\frac{1}{2}+m}}{\Gamma(1+2m)} \Big(1 -\frac{\beta}{1+2m}z+ O(z^2)\Big),
\end{equation}
while from the asymptotic properties of the ${}_1F_1$-function one obtains for $|\arg(z)|<\frac{\pi}{2}$
and large $|z|$
\begin{equation}\label{Ibm-around-infinity}
\cI_{\beta,m}(z) = \frac{1}{\Gamma(\frac{1}{2}+m-\beta)}z^{-\beta}\;\!\e^{\frac{z}{2}}\big(1+O(z^{-1})\big).
\end{equation}

\subsection{The function $\cK_{\beta,m}$}\label{sec_f2}

The hyperbolic-type Whittaker equation \eqref{Whittaker-hyper} has
also a solution with a simple behavior at $\infty$. However, its analysis is somewhat more difficult than that of solutions with a simple behavior at $z=0$,
because $z=\infty$ is an irregular singular point.
The most convenient way to look for solutions
with a simple behavior at $\infty$ is to
reduce the Whittaker equation to the {\em ${}_2F_0$ equation}
\begin{equation*}
\big(w^2\partial_w^2+(-1+(1+a+b)w)\partial_w + ab\big)v=0.
\end{equation*}
Indeed by setting $w=-z^{-1}$ we obtain
\begin{align*}
&-z^{2-\beta}
\e^{\frac{z}{2}}\Big(-\pder_z^2 + \big(m^2 - \frac{1}{4}\big)\frac{1}{z^2} - \frac{\beta}{z} + \frac{1}{4}\Big)z^{\beta}\e^{-\frac{z}{2}}\\
&= w^2\partial_w^2+(-1+(1+a+b)w)\partial_w + ab
\end{align*}
for the parameters $a = \frac{1}{2} +m -\beta$ and $b = \frac{1}{2} -m -\beta$.

The ${}_2F_0$ equation has a distinguished solution
\begin{equation*}
{}_2F_0(a,b;-;z):=\lim_{c\to\infty}{}_2F_1(a,b;c;cz),
\end{equation*}
where we take the limit over
$|\arg(c)-\pi|<\pi-\epsilon$ with $\epsilon>0$,
and the above definition is valid
for $z\in\bbC\setminus[0,+\infty[$.
Obviously one has
\begin{equation}\label{obvio}
{}_2 F_0(a,b;-;z)={}_2F_0(b,a;-;z).
\end{equation}
The function extends to an analytic function on the universal cover of
$\C\setminus\{0\}$ with a branch point of an infinite order at 0,
and the following asymptotic expansion holds:
\begin{equation*}
{}_2F_0(a,b;-;z)\sim\sum_{n=0}^\infty\frac{(a)_n(b)_n}{n!}z^n,\quad |\arg(z)|<\pi-\epsilon.
\end{equation*}
In the literature the ${}_2F_0$ function is seldom used.
Instead one uses {\em Tricomi's function}
\begin{align*}
U(a,c,z) & :=z^{-a}{}_2F_0(a;a- c +1;-; -z^{-1}) \\
& =\frac{\Gamma(1-c)}{\Gamma(1+a-c)}{}_1F_1(a;c;z) + \frac{\Gamma(c-1)}{\Gamma(a)}z^{1-c}{}_1F_1(1+a-c;2-c;z).
\end{align*}
Tricomi's function is one of solutions of the confluent equation \eqref{confluent-equation}.

We then define
\begin{align*}
\cK_{\beta,m}(z) &:=z^\beta\e^{-\frac{z}{2}} {}_2F_0\Big(\frac12+m-\beta,\frac12-m-\beta;-;-z^{-1}\Big)\\
&= z^{\frac{1}{2}+m}\e^{-\frac{z}{2}} U\Big(\frac{1}{2} + m -\beta;\,1 + 2m;\,z\Big),
\end{align*}
which is thus a solution of the hyperbolic-type Whittaker equation \eqref{Whittaker-hyper}.
The symmetry relation \eqref{obvio} implies that
\begin{equation}\label{eq_sym}
\cK_{\beta,m}(z) = \K{\beta}{-m}(z).
\end{equation}
The following connection formulas hold for $2m\notin\bbZ$\;\!:
\begin{align}\label{connec}
\cK_{\beta,m}(z)& = -\frac{\pi}{\sin(2\pi m)}\Big(\frac{\cI_{\beta,m}(z)}{\Gamma(\frac{1}{2}-m-\beta)}
- \frac{\I{\beta}{-m}(z)}{\Gamma(\frac{1}{2}+m-\beta)}\Big),\\
\nonumber \cI_{\beta,m}(z)& = \frac{\Gamma(\frac{1}{2}-m+\beta)}{2\pi}\Big(\e^{\i\pi m}\K{-\beta}{m}(\e^{\i\pi}z) + \e^{-\i\pi m}\K{-\beta}{m}(\e^{-\i\pi}z)\Big).
\end{align}

Recall that the Wronskian of two functions $f,g$ is defined as
\begin{equation}\label{wron}
\Wr( f,g;x):=f(x)g'(x)-f'(x)g(x).
\end{equation}
The Wronskian of $\cI_{\beta,m}$ and $\cK_{\beta,m}$ can be easily computed, and one finds
\begin{equation}\label{wron1}
\Wr(\cI_{\beta,m},\cK_{\beta,m};x) = - \frac{1}{\Gamma(\frac{1}{2} + m - \beta)}.
\end{equation}

For the special cases, the relation of the function $\K{0}{m}$ with
the usual Macdonald function $K_m$ or with the
Macdonald function $\cK_m$ for dimension $1$ reads
\begin{equation}\label{eqK_{0,m}}
\K{0}{m}(z) = \sqrt{\frac{z}{\pi}}K_m\Big(\frac{z}{2}\Big)=\cK_m\Big(\frac{z}{2}\Big).
\end{equation}
Also for $\beta - \frac{1}{2}-m=:n\in \N$ we obtain
\begin{equation*}
\K{\frac{1}{2} + m+n}{\pm m}(z) = (-1)^n n!\, z^{\frac{1}{2}+m}\e^{-\frac{z}{2}}L_n^{(2m)}(z),
\end{equation*}
where $L_n^{(2m)}$ are the Laguerre polynomials introduced in \eqref{eq_Laguerre}.
Note that for these values of $\beta$ the functions $\cI_{\beta,m}$ and $\cK_{\beta,m}$ are essentially the same, except for a $z$-independent factor.
However, for $\beta = -\big(\frac{1}{2}+m+n\big)$ the function $\cK_{\beta,m}$ has a more complicated representation, see \cite{confluent}.

Finally, for $2m\not \in \Z$ the behaviour of $\cK_{\beta,m}$ around zero can be derived
from that of $\cI_{\beta,m}$ together with the relation \eqref{connec}, while for
$2m\in \Z$ the l'H\^ opital's rule has to be used, see the next subsection. For simplicity, we provide the asymptotic behavior only for $\Re(m)\geqslant 0$,
since similar results for $\Re(m)\leqslant 0$ can be obtained by taking \eqref{eq_sym} into account. Thus one has:
\begin{equation}\label{Kbm-around-zero}
\cK_{\beta,m}(z) =
\begin{cases}
-\frac{z^{\frac{1}{2}}\ln(z)}{\Gamma(1-\beta)} + O\big(\abs{z}^{\frac{1}{2}}\big) & \mbox{ for } m=0,\\
z^\frac{1}{2}\Big(\frac{\Gamma(-2m)}{\Gamma(\frac{1}{2}-m-\beta)}z^m + \frac{\Gamma(2m)}{\Gamma(\frac{1}{2}+m-\beta)}z^{-m}\Big) + O\big(\abs{z}^\frac{3}{2}\big)
& \mbox{ for } \Re(m)=0,\ m\ne 0, \\
\frac{\Gamma(2m)}{\Gamma(\frac{1}{2}+m-\beta)}z^{\frac{1}{2}-m}+ O\big(\abs{z}^{\frac{1}{2}+\Re(m)}\big) & \mbox{ for } \Re(m)\in ]0,\frac{1}{2}], \ m\neq \frac{1}{2}, \\
\frac{1}{\Gamma(1-\beta)} + O\big(z\ln(z)\big) & \mbox{ for } m=\frac{1}{2}, \\
\frac{\Gamma(2m)}{\Gamma(\frac{1}{2}+m-\beta)}z^{\frac{1}{2}-m}+ O\big(\abs{z}^{\frac{3}{2}-\Re(m)}\big) & \mbox{ for } \Re(m)>\frac{1}{2}.
\end{cases}
\end{equation}
For the behaviour for large $z$, if $\epsilon>0$ and $\abs{\arg(z)}<\pi-\epsilon$ then one has
\begin{equation}\label{Kbm-around-infinity}
\cK_{\beta,m}(z) = z^\beta\;\!\e^{-\frac{z}{2}} \big(1+O(z^{-1})\big).
\end{equation}

\begin{remark}	
In the literature one can find various conventions for solutions of the
Whittaker equation. In part of the literature
 $M_{\beta,m}:=\Gamma(1 + 2m)\;\!\cI_{\beta,m}$ is called
the Whittaker function of the first kind.
$\cK_{\beta,m}$ is called the Whittaker function of
the second kind and denoted by $W_{\beta,m}$.
In \cite{confluent}, our functions $\cI_{\beta,m}$ and $\cK_{\beta,m}$
correspond to the functions ${\mathscr M}_{\varkappa,\mu/2}$ and $W_{\varkappa,\mu/2}$
with $\varkappa=\beta$ and $\mu/2=m$.

With our notation we try to be parallel to the notation for
the modified Bessel equation.
In fact, for $\beta=0$ our functions $\cI_{\beta,m}$ and
$\cK_{\beta,m}$ are closely related to the modified
Bessel function $\cI_m$ and the Macdonald function $\cK_m$,
and this will also hold for $\J{\beta}{m}$ and
$\Hpm{\beta}{m}$ with the Bessel function $\cJ_m$ and the Hankel functions $\cH_m^\pm$.
\end{remark}

\subsection{Degenerate case}

In this section we consider the hyperbolic Whittaker equation in the special case $m=\pm \frac{p}{2}$ for any $p\in\N$.
It is sometimes called the degenerate case, because the two solutions $\cI_{\beta,m}$ and $\cI_{\beta,-m}$
in this case are proportional to one another and do not span the solution space.
Therefore, we are forced to use the function
$\cK_{\beta,m}$
to obtain all solutions.

Let us fix $p\in\N$. We have the identity
\begin{equation}\label{dege}
\cI_{\beta,-\frac{p}{2}}(z) = \Big(-\beta - \frac{p-1}{2}\Big)_{p}\, \cI_{\beta,\frac{p}{2}}(z),
\end{equation}
or equivalently,
\begin{equation*}
\frac{\I{\beta}{-\frac{p}{2}}(z)} {\Gamma\big(\frac{1+p}{2}-\beta\big)}
= \frac{\cI_{\beta,\frac{p}{2}}(z)} {\Gamma\big(\frac{1-p}{2}-\beta\big)}.
\end{equation*}
Indeed, the confluent function $_1F_1(a;\,c;\,z)$ is divergent for $c\to-p$,
however, the divergence is of the same order as the
divergence of $\Gamma(c)$ for $c\to-p$.
Then, by a straightforward calculation we obtain from \eqref{whi} the equality
\begin{equation*}
{}_1\mathbf{F}_1(a;\,-p+1;\,z) = (a)_{p}\;\!z^{p}\;\!{}_1\mathbf{F}_1(a+p;\,1+p;\,z),
\end{equation*}
which implies \eqref{dege}.
Note that \eqref{dege} also implies that
\begin{equation}\label{dege1}
\cI_{\beta,-\frac{p}{2}}(z) = 0,\ \ \ \hbox{ for } \beta\in\Big\{\frac{1-p}{2},\frac{3-p}{2},\dots,\frac{p-1}{2}\Big\}.
\end{equation}

Let us now compare the symmetry \eqref{dege} with
similar properties of the modified Bessel functions for dimension $1$. For such functions we have
$\cI_{-m}(z)=\cI_m(z)$, for any $m\in\bbZ$,
which is consistent with \eqref{dege}. But for
$m\in\bbZ+\frac12$, $\cI_{-m}$ is not proportional to
$\cI_m$, which at the first sight contradicts
\eqref{dege}. However, $\cI_{0,m}(z) = \frac{2}{\Gamma\naw{\frac{1}{2}+m}}\cI_m\naw{\frac{z}{2}}$ vanishes for
$m\in\{\dots,-\frac32,-\frac12\}$, and this makes it consistent with \eqref{dege1}.

The function $\cK_{\beta,m}$ is quite complicated in the degenerate case.
In order to describe it, let us introduce the \emph{digamma function}
\begin{equation*}
\psi(z) = \partial_z\ln\big(\Gamma(z)\big)=\frac{\Gamma'(z)}{\Gamma(z)}.
\end{equation*}
Let us also set for $k\in\N$
\begin{align*}
H_k(z)&:= \frac1z+\frac1{z+1}+\dots+\frac{1}{z+k-1},\\
H_k&:= H_k(1)=1+\frac12+\cdots\frac1k.
\end{align*}
Obviously, this means that $H_0(z)=H_0=0$.
We have $\psi(1)=-\gamma$, $\psi(\frac12)=-\gamma-2\ln(2)$. Besides,
one has $\psi(z+k)= \psi(z)+H_k(z)$ and $\psi(1+k)= -\gamma+H_k$,
and for $k\in \N$
\begin{equation*}
\partial_z\frac{1}{\Gamma(z)}\Big|_{z=-k} = -\frac{\psi(z)}{\Gamma(z)}\Big|_{z=-k} = (-1)^k k!\,.
\end{equation*}

The following statement can be proven by l'H\^ opital's rule.

\begin{theorem}
For $p\in \N$, we have
\begin{align*}
\cK_{\beta,\frac{p}{2}}(z)
& = \frac{(-1)^{p+1}\ln(z)\cI_{\beta,\frac{p}{2}}(z)}
{\Gamma\big(\tfrac{1-p}{2}-\beta\big)}\nonumber\\
&+\frac{(-1)^{p+1}\e^{-\frac{z}{2}}}{\Gamma\big(\frac{1-p}{2}-\beta\big)} \sum_{k=0}^\infty
\frac{\big(\frac{1+p}{2}-\beta\big)_k\;\!z^{\frac{1+p}{2}+k}}{(p+k)!\;\!k!}\\
&\quad \times \Big(\psi\big(\tfrac{1+p}{2}-\beta+k\big)-\psi(p+1+k)
-\psi(1+k)\Big)\\
&+\e^{-\frac{z}{2}}\sum_{j=0}^{p-1}
\frac{(\frac{1-p}{2}-\beta\big)_j
(-1)^j(p-j-1)!}{j!\;\!\Gamma\big(\frac{1+p}{2}-\beta\big)}
z^{\frac{1-p}{2}+j}.
\end{align*}
For $p=0$ the above formula simplifies:
\begin{align*}
\cK_{\beta,0}(z) & = -\frac{\ln(z) \cI_{\beta,0}(z)}{\Gamma\big(\tfrac{1}{2}-\beta\big)} \\
& \quad - \frac{\e^{-\frac{z}{2}}}{\Gamma\naw{\frac{1}{2}-\beta}} \sum_{k=0}^\infty
\frac{(\frac12-\beta)_k\;\!z^{\frac12+k}}{(k!)^2}
\big(\psi(\tfrac12-\beta+k)-2\psi(1+k)\big).
\end{align*}
\end{theorem}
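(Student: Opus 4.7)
The plan is to derive the formula from the connection formula \eqref{connec}, which is valid only for $2m\notin\bbZ$, by a careful passage to the limit $m\to p/2$ via l'H\^opital's rule. Indeed, at $m=p/2$ the denominator $\sin(2\pi m)$ vanishes; and by the degeneracy identity \eqref{dege}, so does the bracket $\frac{\cI_{\beta,m}}{\Gamma(\frac12-m-\beta)}-\frac{\cI_{\beta,-m}}{\Gamma(\frac12+m-\beta)}$. The derivative of the denominator is $2\pi\cos(2\pi m)\big|_{m=p/2}=2\pi(-1)^p$, which combined with the prefactor $-\pi$ yields the overall scalar $(-1)^{p+1}/2$ that appears in the result.

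Next, I would split the derivative of the numerator into two types of contributions: those coming from $\partial_m z^{\frac12\pm m}=\pm\ln(z)\,z^{\frac12\pm m}$, which produce the logarithmic term, and those coming from differentiating the Pochhammer and Gamma factors, which produce the $\psi$-terms. For the logarithmic part, both $\cI_{\beta,m}$ and $\cI_{\beta,-m}$ contribute with the same sign (the minus in front of the second term reverses the $-\ln(z)$), so one obtains
\begin{equation*}
\ln(z)\Big[\frac{\cI_{\beta,m}(z)}{\Gamma(\tfrac12-m-\beta)}+\frac{\cI_{\beta,-m}(z)}{\Gamma(\tfrac12+m-\beta)}\Big]_{m=p/2}\!=\;2\ln(z)\,\frac{\cI_{\beta,p/2}(z)}{\Gamma(\tfrac{1-p}{2}-\beta)},
\end{equation*}
where the last equality uses \eqref{dege}. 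Multiplying by $(-1)^{p+1}/2$ reproduces the first line of the stated formula.

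The delicate step is the handling of the series for $\cI_{\beta,-m}$. The series
\begin{equation*}
\cI_{\beta,-m}(z)=z^{\frac12-m}\e^{-z/2}\sum_{k=0}^\infty\frac{(\tfrac12-m-\beta)_k}{\Gamma(1-2m+k)}\frac{z^k}{k!}
\end{equation*}
has, at $m=p/2$, vanishing prefactors $1/\Gamma(1-p+k)$ for $k=0,\dots,p-1$. I would split the sum at $k=p$. For indices $k\geqslant p$, a shift $k\mapsto k+p$ turns $\cI_{\beta,-m}$ at $m=p/2$ into a multiple of $\cI_{\beta,p/2}$ via \eqref{dege}; combined with the non-logarithmic derivative contribution from $\cI_{\beta,m}$ and from the two $1/\Gamma$ factors (using $\partial_m[1/\Gamma(\tfrac12\mp m-\beta)]=\pm\psi(\tfrac12\mp m-\beta)/\Gamma(\tfrac12\mp m-\beta)$), and using $\psi(p+1+k)=\psi(1)+H_p+H_{k+1}(p+1)$ type identities to regroup $\psi$ values, one obtains the infinite series with the combination $\psi(\tfrac{1+p}{2}-\beta+k)-\psi(p+1+k)-\psi(1+k)$. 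For the singular indices $k=0,\dots,p-1$, only $\partial_m[1/\Gamma(1-2m+k)]$ survives; by the identity $\partial_z[1/\Gamma(z)]|_{z=-j}=(-1)^j j!$ with $j=p-1-k$, this yields, after reindexing $j:=k$, the closed finite sum exhibited in the theorem.

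The $p=0$ statement is the special case where no singular indices occur, so the finite sum is absent and identity \eqref{dege} reduces to a tautology; the computation above collapses directly to the displayed simplified formula, where the $\psi$-combination becomes $\psi(\tfrac12-\beta+k)-2\psi(1+k)$ since $\psi(p+1+k)=\psi(1+k)$ when $p=0$. The principal obstacle in this proof is not conceptual but purely bookkeeping: one must keep track of the signs, of the two distinct sources of $\psi$-terms (from $1/\Gamma$ prefactors and from the coefficients of the confluent series), and of the index shift $k\mapsto k+p$ that ensures the surviving infinite series is indexed starting from $k=0$ with the exact $\psi$-combination given. Once these are aligned correctly, the formula assembles without further difficulty.
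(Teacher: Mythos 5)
Your proposal is correct and follows exactly the route the paper indicates: applying l'H\^opital's rule to the connection formula \eqref{connec} at $m=\tfrac{p}{2}$, using \eqref{dege} for the vanishing of the numerator, $\partial_z\big[1/\Gamma(z)\big]\big|_{z=-j}=(-1)^jj!$ for the singular indices $k=0,\dots,p-1$, and the digamma identities to assemble the $\psi$-combination. The paper gives no further detail beyond ``can be proven by l'H\^opital's rule,'' and your bookkeeping (the factor $(-1)^{p+1}/2$, the doubled logarithmic term, the index shift $k\mapsto k+p$) fills in that computation correctly.
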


\subsection{The function $\cJ_{\beta,m}$}

In this and the next subsections, we consider the trigonometric-type Whittaker equation and its solutions.

The function $\J{\beta}{m}$ is defined by the formula
\begin{equation}\label{Jbm-definition}
\J{\beta}{m}(z) = \e^{\mp\i\frac{\pi}{2}(\frac{1}{2}+m)}\I{\mp\i\beta}{m}\big(\e^{\pm\i\frac{\pi}{2}}z\big).
\end{equation}
It is a solution of the trigonometric-type Whittaker equation \eqref{Whittaker-trig}
 which behaves as $\frac{z^{\frac12+m}}{\Gamma(1+2m)}$ for $z$ near $0$.
More precisely, one infers from \eqref{Ibm-around-zero} that for $z$ near $0$
\begin{equation}\label{Jbm-around-zero}
\cJ_{\beta,m}(z)=\frac{z^{\frac{1}{2}+m}}{\Gamma(1+2m)} \Big(1 -\frac{\beta}{1+2m}z+ O(z^2)\Big).
\end{equation}
It satisfies\begin{equation*}
\J{\zesp{\beta}}{\zesp{m}}(\zesp{z}) = \zesp{\J{\beta}{m}(z)}.
\end{equation*}
By starting again from the asymptotics of the ${}_1F_1$-function provided for example in \cite[Eq.~13.5.1]{AS} one can also obtain the asymptotic expansion near infinity.
However, note that we consider a real variable $x$ and only $x\to \infty$ since for a complex variable $z$ the asymptotic behaviour
highly depends on the argument of $z$. One thus gets for $x\in \R_+$ with $x$ large:
\begin{align}\label{Jbm-around-infinity}
\nonumber & \cJ_{\beta,m}(x) = \frac{\e^{\i\frac{\pi}{2}(\frac{1}{2}+m+\i\beta)}}{\Gamma(\frac{1}{2}+m-\i \beta)} \e^{-\i\frac{x}{2}} x^{-\i \beta}
\Big(1+\i\Big(\frac{1}{2}+m+\i\beta\Big)\Big(\frac{1}{2}-m+\i \beta\Big)x^{-1}+O\big(x^{-2}\big)\Big) \\
&\qquad + \frac{\e^{-\i\frac{\pi}{2}(\frac{1}{2}+m-\i\beta)}}{\Gamma(\frac{1}{2}+m+\i \beta)}\e^{\i\frac{x}{2}}x^{\i \beta}
\Big(1-\i\Big(\frac{1}{2}+m-\i\beta \Big)\Big(\frac{1}{2}-m-\i \beta\Big)x^{-1}+O\big(x^{-2}\big)\Big).
\end{align}

In the special case $\beta=0$ one has
\begin{equation}\label{eq_J0m_B}
\J{0}{m}(z) = \frac{\sqrt{\pi z}}{\Gamma\naw{\frac{1}{2}+m}}J_m\naw{\frac{z}{2}} = \frac{2}{\Gamma\naw{\frac{1}{2}+m}}\cJ_m\naw{\frac{z}{2}}.
\end{equation}

\subsection{The functions $\Hpm{\beta}{m}$}\label{sec_f4}

Let us define the functions $\Hpm{\beta}{m}$ by the formula
\begin{equation}\label{Hpm-definition}
\Hpm{\beta}{m}(z) = \e^{\mp \i\frac{\pi}{2}\naw{\frac{1}{2}+m}}\K{\pm\i\beta}{m}(\e^{\mp\i\frac{\pi}{2}}z).
\end{equation}
Note that here the sign $\pm$ means that we have two functions:
one for the sign $+$ and one for the sign $-$.
The functions $\Hpm{\beta}{m}$ are solutions of the trigonometric-type Whittaker equation \eqref{Whittaker-trig}.

One can observe that the property
$\Hpm{\beta}{-m}(z) = \e^{\pm\i\pi m}\Hpm{\beta}{m}(z)$ holds.
For these functions one has the following connection formulas:
\begin{align}
\nonumber \Hpm{\beta}{m}(z)&= \frac{\pm \i\pi}{\sin(2\pi m)}\Big(\frac{\e^{\mp\i\pi m}\J{\beta}{m}(z)}{\Gamma\naw{\frac{1}{2}-m\mp\i\beta}} - \frac{\J{\beta}{-m}(z)}{\Gamma\naw{\frac{1}{2}+m\mp\i\beta}}\Big), \\
\label{eq_missing} \J{\beta}{m}(z)& = \e^{-\pi\beta}\Big(\frac{\Hp{\beta}{m}(z)}{\Gamma\naw{\frac{1}{2}+m+\i\beta}} + \frac{\Hm{\beta}{m}(z)}{\Gamma\naw{\frac{1}{2}+m-\i\beta}}\Big).
\end{align}

The behaviour of $\Hpm{\beta}{m}$ depends qualitatively on $m$, and can be deduced from
the asymptotic behaviour of the function $\cK_{\beta,m}$ provided in \eqref{Kbm-around-zero}\;\!:
\begin{equation}\label{Hpm-around-zero}
\Hpm{\beta}{m}(z) =
\begin{cases}
\pm\i\frac{z^{\frac{1}{2}}\big(\ln(\e^{\mp\i\frac{\pi}{2}}z)\big)}{\Gamma(1\mp\i\beta)} + O\big(|z|^{1/2}\big) & \mbox{if } m=0,\\
\mp \i z^{\frac{1}{2}}\naw{\frac{\e^{\mp\i\pi m}\Gamma(-2m)z^m}{\Gamma(\frac{1}{2}- m\mp i\beta)}
+ \frac{\Gamma(2m)z^{-m}}{\Gamma(\frac{1}{2}+m\mp i\beta)}} + O\big(\abs{z}^\frac{3}{2}\big) & \mbox{if } \Re(m)=0,\ m\ne 0, \\
\mp\i\frac{\Gamma(2m)}{\Gamma(\frac{1}{2}+m\mp\i\beta)}z^{\frac{1}{2}-m}+ O\big(\abs{z}^{\frac{1}{2}+\Re(m)}\big) & \mbox{if } \Re(m)\in ]0,\frac{1}{2}], \ m\neq \frac{1}{2},\\
\mp\i\frac{1}{\Gamma(1\mp\i\beta)} + O\big(z\ln(\e^{\mp\i\frac{\pi}{2}}z)\big) & \mbox{if } m=\frac{1}{2},\\
\mp\i\frac{\Gamma(2m)}{\Gamma(\frac{1}{2}+m\mp\i\beta)}z^{\frac{1}{2}-m}+ O\big(\abs{z}^{\frac{3}{2}-\Re(m)}\big)& \mbox{if } \Re(m)>\frac{1}{2}.
\end{cases}
\end{equation}		
For the behaviour around $\infty$, we have for $|\arg(z)\mp\frac\pi2|< \pi$
\begin{equation}\label{Hpm-around-infinity}
\Hpm{\beta}{m}(z) = \e^{\mp\i\frac{\pi}{2}\naw{\frac{1}{2}+m}}\e^{\frac{\pi\beta}{2}}z^{\pm\i\beta}\;\!\e^{\pm\i\frac{z}{2}}\big(1 + O(z^{-1})\big),
\end{equation}

\subsection{Zero-energy eigenfunctions of the Whittaker operator}\label{Zero-energy eigensolutions of the Whittaker operator}

Bessel functions, which we recalled in Section \ref{sec_B},
play two roles in the present paper.
Firstly and as already explained, they are solutions of \eqref{lap7} and \eqref{lap8} in the special case of the Whittaker operator corresponding to $\beta=0$.
Secondly, after a small modification they are annihilated by the general Whittaker operator. More precisely,
for $\beta \neq 0$ let us define the following two functions on $\R_+$\;\!:
\begin{align}\label{eq_0_en}
\begin{split}
j_{\beta,m}(x)&:= x^{1/4}\cJ_{2m}\big(2\sqrt{\beta x}\big), \\
y_{\beta,m}(x)&:= x^{1/4} \cY_{2m}\big(2\sqrt{\beta x}\big).
\end{split}
\end{align}
Then, the equation
\begin{equation*}
\Big(-\partial_x^2+\big(m^2-\frac14\big)\frac{1}{x^2} - \frac{\beta}{x}\Big)v = 0,
\end{equation*}
is solved by the functions $j_{\beta,m}$ and $y_{\beta,m}$.
Indeed, this is easily observed by the following direct computation:
\begin{align*}
&\Big[\Big(-\partial_x^2+\big(m^2-\frac14\big)\frac{1}{x^2} - \frac{\beta}{x}\Big)j_{\beta,m}\Big](x) \\
& =-\beta x^{-3/4}\Big(\cJ_{2m}''\big(2\sqrt{\beta x}\big)
- \big(m^2-\tfrac{1}{16}\big)\frac{1}{\beta x} \cJ_{2m}\big(2\sqrt{\beta x}\big) + \cJ_{2m}\big(2\sqrt{\beta x}\big) \Big) \\
& =- \beta x^{-3/4}\Big(\cJ_{2m}'' - \big((2m)^2-\tfrac{1}{4}\big)\frac{1}{(2\sqrt{\beta x})^2} \cJ_{2m} + \cJ_{2m}\Big)\big(2\sqrt{\beta x}\big)
\end{align*}
and the big parenthesis vanishes.
The same argument holds for $\cJ_{2m}$ replaced by $\cY_{2m}$,
and therefore for $y_{\beta,m}$ instead of $j_{\beta,m}$.
These two functions are linearly independent. Indeed, a short computation yields
\begin{equation*}
\Wr(j_{\beta,m},y_{\beta,m};x)=\sqrt{\beta},
\end{equation*}
where the Wronskian has been introduced in \eqref{wron}.

We will need the asymptotics of these functions near zero.
Note that $y_{\beta,m}$ has the same type of asymptotics as $y_{\beta,-m}$,
which follows from the relations $\Ya_m(z)=\frac{1}{2\i}\big(\Ha_m^+(z)-\Ha_m^-(z)\big)$
together with $\Ha_{-m}^{\pm}(z)=\e^{\pm \i \pi m}\Ha_{m}^{\pm}(z)$.
Therefore, in the case of $y_{\beta,m}$ we give only the asymptotics for $\Re(m)\geqslant 0$\;\!:
\begin{align}
j_{\beta,m}(x)
\label{eq_asymp1} & =\frac{\sqrt{\pi} \beta^{\frac{1}{4}+m}}{\Gamma(1+2m)}\;\!x^{\frac{1}{2}+m}
\Big(1-\frac{\beta}{1+2m} x + O\big(x^2\big) \Big),\quad \hbox{if } -2m\not\in\N^\times,\\
\nonumber j_{\beta,m}(x) & =(-1)^{2m}\frac{\sqrt{\pi} \beta^{\frac{1}{4}-m}}{\Gamma(1-2m)}\;\!x^{\frac{1}{2}-m}
\Big(1-\frac{\beta}{1-2m} x + O\big(x^2\big) \Big),\quad \hbox{if }-2m\in\N^\times,
\\
\label{eq_asymp2}y_{\beta,m}(x) &
=
\begin{cases}
C_{\beta,0} \Big(x^{1/2} \ln(x) + O\big(x^{1/2}\big)\Big) & \hbox{if } m=0,\\
C_{\beta,m}\;\!\Big(x^{\frac{1}{2}-m} + O\big(x^{\frac{1}{2}+\Re(m)}\big)\Big) & \hbox{if } \Re(m)\in[0,\frac{1}{2}] \hbox{ and } 2m \neq 0,1, \\
C_{\beta, \frac{1}{2}}\Big(1 + O\big(x\ln(x)\big) \Big) & \hbox{if } m = \frac{1}{2}, \\
C_{\beta,m}\;\!\Big(x^{\frac{1}{2}-m} + O\big(x^{\frac{3}{2}-\Re(m)}\big)\Big) & \hbox{if } \Re(m)>\frac{1}{2},
\end{cases}
\end{align}
where $C_{\beta,m}$ are non-zero constants for $\beta \neq 0$.

The above analysis does not include the case $\beta=0$, that is the equation
$$
\Big(-\partial_x^2+\Big(m^2-\frac14\Big)\frac{1}{x^2} \Big)v=0.
$$
For completeness, let us mention that a linearly independent basis of solutions of this equation
is given by
\begin{align}\label{eq_0_en_0_beta}
\begin{split}
x^{\frac{1}{2}+m} \quad & \hbox{ and } \quad x^{\frac{1}{2}-m}\quad \hbox{if } m\neq 0,\\
x^{\frac{1}{2}}\quad & \hbox{ and }\quad x^{\frac{1}{2}}\ln(x) \quad \hbox{if } m= 0.
\end{split}
\end{align}

%%%%%%%%%%%%%%%%%%%%%%%%%%%%%%%%%%%%%%%%%%%%%%%%%%%%%%%%%%%%%%%%%%%%%%%%%%%%%%%%
\section{The Whittaker operator}\label{sec_Whi_op}
\setcounter{equation}{0}
%%%%%%%%%%%%%%%%%%%%%%%%%%%%%%%%%%%%%%%%%%%%%%%%%%%%%%%%%%%%%%%%%%%%%%%%%%%%%%%%

In this section we define and study the Whittaker operators $H_{\beta,m}$, which form a holomorphic family of closed operators on
the Hilbert space $L^2(\R_+)$.
This section is the main part of our paper.

\subsection{Preliminaries}

Our basic Hilbert space $L^2(\R_+)$ is endowed with the scalar product
$$
(h_1|h_2)=\int_0^\infty \overline{h_1(x)}h_2(x)\;\!\d x.
$$
The bilinear form defined by
$$
\langle h_1|h_2\rangle=\int_0^\infty h_1(x)h_2(x)\;\!\d x
$$
will also be useful.

For an operator $A$ we denote by $A^*$ its Hermitian conjugate. We will however often prefer to use the transpose of $A$,
denoted by $A^\#$, rather than $A^*$. If $A$ is bounded, then $A^*$ and $A^\#$ are defined by the relations
\begin{align}
(h_1|Ah_2)&=(A^*h_1|h_2),\\
\langle h_1|Ah_2\rangle&=\langle A^\#h_1|h_2\rangle.
\end{align}
The definitions of $A^*$ has the well-known generalization to the unbounded case.
The definition of $A^\#$ in the unbounded case is analogous.

Finally, we shall use the notation $X$ for the operator of multiplication by the variable $x$ in $L^2(\R_+)$.

\subsection{Maximal and minimal operators}

For any $\alpha, \beta\in \C$ we consider the differential expression
\begin{equation*}
L_{\beta,\alpha} :=-\partial_x^2+\Big(\alpha-\frac14\Big)\frac{1}{x^2}-\frac{\beta}{x}
\end{equation*}
acting on distributions on $\bbR_+$.
We denote by $L_{\beta,\alpha}^{\max}$ and $L_{\beta,\alpha}^{\min}$
the corresponding maximal and minimal operators
associated with it in $L^2(\R_+)$, see \cite[Sec.~4 \& App.~A]{BDG}
for the details. We also recall from this reference that the domain $\Dom(L_{\beta,\alpha}^{\max})$ is given by
\begin{equation*}
\Dom(L_{\beta,\alpha}^{\max}) = \big\{f\in L^2(\R_+) \mid L_{\beta,\alpha} f\in L^2(\R_+)\big\}
\end{equation*}
while $\Dom(L_{\beta,\alpha}^{\min})$ is
the closure of the restriction of $L_{\beta,\alpha}$ to
$C_{\rm c}^\infty(\bbR_+)$.
The operators $L_{\beta,\alpha}^{\min}$ and $L_{\beta,\alpha}^{\max}$ are closed and one observes that
\begin{equation*}
\big(L_{\beta,\alpha}^{\min}\big)^* = L_{\bar\beta,\bar \alpha}^{\max}\quad \hbox{ and } \quad
\big(L_{\beta,\alpha}^{\min}\big)^\# = L_{\beta, \alpha}^{\max}.
\end{equation*}

In order to compare the domains $\Dom(L_{\beta,\alpha}^{\min})$ and $\Dom(L_{\beta,\alpha}^{\max})$
a preliminary result is necessary.
We say that $f\in \Dom(L_{\beta,\alpha}^{\min})$ around
$0$, (or, by an abuse of notation, $f(x)\in \Dom(L_{\beta,\alpha}^{\min})$ around $0$) if there exists
$\zeta\in C_{\rm c}^\infty\big([0,\infty[\big)$ with
$\zeta=1$ around $0$ such that $f\zeta\in \Dom(L_{\beta,\alpha}^{\min})$.
Let us note that we will often write $\alpha=m^2$, where $m\in\C$.
We also recall that the functions $j_{\beta,m}$ and $y_{\beta,m}$ have been introduced in \eqref{eq_0_en} and \eqref{eq_0_en_0_beta}.

\begin{proposition}\label{lem_properties}
\begin{enumerate}
\item[(i)] If $f\in \Dom(L_{\beta,\alpha}^{\max})$, then $f$ and $f'$ are continuous functions on $\R_+$,
and converge to $0$ at infinity,
\item[(ii)] If $f\in \Dom(L_{\beta,\alpha}^{\min})$, then near $0$ one has:
\begin{enumerate}
\item If $\alpha=0$ then $f(x) = o\big(x^{\frac{3}{2}}|\ln(x)|\big)$ and $f'(x)=o\big(x^{\frac{1}{2}}|\ln(x)|\big)$,
\item If $\alpha\neq0$ then $f(x)=o\big(x^{\frac{3}{2}}\big)$ and $f'(x)=o\big(x^{\frac{1}{2}}\big)$.
\end{enumerate}
\item[(iii)] If $|\Re(m)|<1$ and $f\in \Dom(L_{\beta,m^2}^{\max})$, then there exist $a,b\in \bbC$
such that:
\begin{equation*}
f(x) -aj_{\beta,m} - by_{\beta,m} \in\Dom(L_{\beta,m^2}^{\min})\hbox{
around }0,
\end{equation*}
\item[(iv)] If $|\Re(m)|\geqslant 1$, then $\Dom(L_{\beta,m^2 }^{\min})=\Dom(L_{\beta,m^2 }^{\max})$.
\item[(v)] If $|\Re(m)| < 1$, then $\Dom(L_{\beta,m^2 }^{\min})$ is a subspace of $\Dom(L_{\beta,m^2 }^{\max})$ of codimension $2$.
\end{enumerate}
\end{proposition}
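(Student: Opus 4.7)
The plan is to establish the five assertions in the order (i), (iii), (iv)--(v), (ii), with the variation-of-parameters analysis near $x=0$ serving as the central technical tool.

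Part (i) is essentially interior elliptic regularity: since the potential $V(x):=(\alpha-\tfrac14)x^{-2}-\beta x^{-1}$ is smooth and locally bounded on $\R_+$, the identity $-f''=L_{\beta,\alpha}f-Vf$ shows that $f\in H^2_{\rm loc}(\R_+)\hookrightarrow C^1(\R_+)$. Restricting to $[1,\infty)$, where $V$ is bounded, yields $f\in H^2([1,\infty))$, and the Sobolev embedding $H^2([1,\infty))\hookrightarrow C^1_0([1,\infty))$ gives $f(x),f'(x)\to 0$ as $x\to\infty$.

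For part (iii), fix $f\in\Dom(L_{\beta,m^2}^{\max})$ with $|\Re(m)|<1$, set $g:=L_{\beta,m^2}f\in L^2(\R_+)$, and choose $\chi\in C_{\rm c}^\infty([0,\infty))$ equal to $1$ near $0$. Using $(j_{\beta,m},y_{\beta,m})$ as a basis of solutions to $L_{\beta,m^2}v=0$ with constant Wronskian $W$ (namely $\sqrt{\beta}$ when $\beta\neq 0$, or $1$ in the basis \eqref{eq_0_en_0_beta} when $\beta=0$), the particular solution
\begin{equation*}
u_p(x):=\frac{1}{W}\int_0^x\bigl(j_{\beta,m}(x)y_{\beta,m}(s)-y_{\beta,m}(x)j_{\beta,m}(s)\bigr)(\chi g)(s)\,\d s
\end{equation*}
satisfies $L_{\beta,m^2}u_p=\chi g$. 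Cauchy--Schwarz against $\chi g\in L^2$, combined with the asymptotics \eqref{eq_asymp1}--\eqref{eq_asymp2}, shows that $u_p(x)$ and $u_p'(x)$ are strictly smaller near $0$ than either $j_{\beta,m}$ or $y_{\beta,m}$, so that (after cut-off) $u_p\in\Dom(L_{\beta,m^2}^{\min})$ around $0$. Since $L_{\beta,m^2}(f-u_p)=(1-\chi)g$ vanishes near $0$, the difference $f-u_p$ solves the homogeneous equation there, hence equals $aj_{\beta,m}+by_{\beta,m}$ for some $a,b\in\C$, giving the desired decomposition.

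Assertions (iv) and (v) follow from (iii) by a linear-algebra bookkeeping against the $L^2$-integrability thresholds at $0$. For (v) with $|\Re(m)|<1$, both $\zeta j_{\beta,m}$ and $\zeta y_{\beta,m}$ lie in $L^2(\R_+)\cap\Dom(L_{\beta,m^2}^{\max})$ and are linearly independent modulo $\Dom(L_{\beta,m^2}^{\min})$, while (iii) shows that these exhaust the quotient; this yields codimension $2$. For (iv), exactly one of $j_{\beta,m}$, $y_{\beta,m}$ fails to be $L^2$ near $0$, forcing the corresponding coefficient to vanish, and the remaining solution is regular enough to satisfy the $o$-bounds of (ii), placing it already in $\Dom(L_{\beta,m^2}^{\min})$. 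Finally, part (ii) is obtained by applying (iii) to $f\in\Dom(L_{\beta,m^2}^{\min})$: both $a$ and $b$ must vanish, so $f$ coincides near $0$ with a $u_p$-type solution, and the explicit variation-of-parameters formula together with $\|\chi g\|_{L^2([0,x])}\to 0$ as $x\to 0$ produces the sharp $o$-bounds; the logarithmic factor in the case $\alpha=0$ comes directly from the $x^{1/2}\ln(x)$ asymptotics of $y_{\beta,0}$. The main obstacle will be the asymptotic analysis of $u_p$ carried out uniformly across the subcases of \eqref{eq_asymp2} (generic $m$, $m=0$, $m=\tfrac12$, $\Re(m)=0$, and the borderline $|\Re(m)|=1$), and in particular upgrading the natural $O$-bounds from Cauchy--Schwarz to the sharp $o$-bounds required by (ii); once this sharpening is in place, the dimension count for (v) and the absorption argument for (iv) are routine.
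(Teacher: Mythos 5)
Your treatment of (i), (iii), (v), and of (ii) in the range $|\Re(m)|<1$, is correct and is in substance the route the paper takes: the paper simply cites the general second-order ODE results of \cite{BDG} (continuity and decay at infinity, the two-dimensional defect spanned by $j_{\beta,m}$ and $y_{\beta,m}$, and the $o$-estimates characterizing the minimal domain near $0$), whereas you re-derive them by variation of parameters. Two small caveats there: the step ``$u_p$ and $u_p'$ are small near $0$, hence $\zeta u_p\in\Dom(L_{\beta,m^2}^{\min})$ around $0$'' is the \emph{converse} of (ii) and needs its own argument (a cut-off approximation $u_p\,\xi(nx)$, whose commutator terms $\xi_n''u_p$ and $\xi_n'u_p'$ tend to $0$ in $L^2$ precisely because of the $o(x^{3/2})$ and $o(x^{1/2})$ bounds); and the Wronskian of the basis \eqref{eq_0_en_0_beta} is $-2m$, not $1$, when $m\neq0$.

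The genuine gap is in (iv), and it is not an accident that the paper defers this point to Subsection \ref{sec_resol}, where it is deduced from the boundedness of the explicit resolvent kernel of Theorem \ref{thm_resolvent} (following \cite{BDG}). Take $\Re(m)\geqslant1$ without loss of generality. First, your variation-of-parameters coefficient $\int_0^x y_{\beta,m}\,\chi g\,\d s$ can no longer be controlled by Cauchy--Schwarz: by \eqref{eq_asymp2} one has $y_{\beta,m}\sim c\,x^{\frac12-m}$, which is not $L^2$ near $0$, so $\|y_{\beta,m}\|_x=\infty$ and the integral need not converge; you must re-base it at an interior point and redo the estimates, which still works for $\Re(m)>1$ but degenerates at the borderline. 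Second, and decisively, at $\Re(m)=1$ the surviving solution satisfies $|j_{\beta,m}(x)|\sim c\,x^{3/2}$ with $c\neq0$ by \eqref{eq_asymp1}; this is $O(x^{3/2})$ but \emph{not} $o(x^{3/2})$, so your criterion ``the remaining solution is regular enough to satisfy the $o$-bounds of (ii), placing it already in $\Dom(L^{\min})$'' fails exactly where it is needed, and the naive cut-off argument then produces error terms that are merely bounded rather than vanishing. (Indeed, the tension between the $o(x^{3/2})$ bound of (ii) and the membership of $\zeta j_{\beta,m}$ in $\Dom(L^{\min})$ asserted by (iv) shows that the borderline cannot be handled by the pointwise-decay criterion at all.) One must either use a logarithmic cut-off adapted to the borderline or, as the paper does, prove that $L_{\beta,m^2}^{\min}+k^2$ is already boundedly invertible via the Green's kernel, which forces $L^{\min}=L^{\max}$. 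The same borderline issue affects your plan for (ii) when $|\Re(m)|\geqslant1$, which the paper handles by a separate adaptation of \cite{BDG} rather than by the $|\Re(m)|<1$ decomposition.
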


\begin{proof}
Since the above statements have already been proved for $\beta=0$ in \cite{BDG} we consider only the case $\beta\neq 0$.

From the asymptotics at zero given in \eqref{eq_asymp1} one can observe that $j_{\beta,m}$ belongs to $L^2$ near $0$ whenever $\Re(m)>-1$.
By \eqref{eq_asymp2}, the function $y_{\beta,m}$ also belongs to $L^2$ near $0$ but only for $|\Re(m)|<1$.
For other values of parameters, these functions are not $L^2$ near $0$.

The proof of (i) and (iii) consists now in a simple application of standard results on second order differential operators
as presented for example in the Appendix of \cite{BDG}.
More precisely, (i) is a direct consequence of Proposition A.2 of this reference,
while (iii) is an application of its Proposition A.5.
Statement (v) is a direct consequence of (iii).

For the statement (ii), let us write $\alpha=m^2$. First we consider the case
$|\Re(m)|<1$.
For any function $g$ which is $L^2$ near $0$, let us set $\|g\|_x:=\big(\int_0^x|g(y)|^2 \;\!\d y\big)^{1/2}$ for $x\in \R_+$ small enough.
It is then proved in Proposition A.7 of \cite{BDG} that if $f\in \Dom(L_{\beta,m^2}^{\min})$ then one has
\begin{align}
\label{desc_f} f(x) & = o(1) \big(|j_{\beta,m}(x)|\;\!\|y_{\beta,m}\|_x + |y_{\beta,m}(x)|\;\!\|j_{\beta,m}\|_x\big), \\
\label{desc_f'} f'(x) & = o(1) \big(|{j_{\beta,m}}'(x)|\;\!\|y_{\beta,m}\|_x + |{y_{\beta,m}}'(x)|\;\!\|j_{\beta,m}\|_x\big).
\end{align}
By computing these expressions in each case one gets that if $m=0$, then $\|j_{\beta,m}\|_x=O(x)$ and $\|y_{\beta,m}\|_x=O\big(x|\ln(x)|\big)$,
while if $\Re(m)\geqslant 0$ and $m\neq 0$ then $\|j_{\beta,m}\|_x=O\big(x^{1+\Re(m)}\big)$ and $\|y_{\beta,m}\|_x=O\big(x^{1-\Re(m)}\big)$.
Based on these estimates and on the asymptotic expansions of $j_{\beta,m}$, $y_{\beta,m}$ near $0$ one directly infers from \eqref{desc_f}
the estimate on $f$ for any $f\in \Dom(L_{\beta,m^2}^{\min})$.
For the estimate on $f'$, it is necessary to compute ${j_{\beta,m}}' $, ${y_{\beta,m}}' $, and the only surprise comes from the special case $m=\frac{1}{2}$.
By using \eqref{desc_f'} and the estimate on $\|j_{\beta,m}\|_x$, $\|y_{\beta,m}\|_x$ obtained above, one deduces the behavior of $f'$
for any $f\in \Dom(L_{\beta,m^2}^{\min})$.

The case $|\Re(m)|\geqslant 1$ of the statement (ii) follows by an obvious modification of the proof of Prop. 4.11 of \cite{BDG}.

The proof of statement (iv) is deferred to Subsection \ref{sec_resol}.
\end{proof}

\subsection{The holomorphic family of Whittaker operators}

Recall from \eqref{eq_asymp1}
that if $-2m\not\in\N^\times$, then
$ j_{\beta,m}(x)=C_{m,\beta}x^{\frac{1}{2}+m}
\big(1-\frac{\beta}{1+2m} x + O(x^2) \big)$, which
belongs to $L^2$ near $0$ if $\Re(m)>-1$.
This motivates the following definition:

For $m,\beta\in \C$ with $\Re(m)>-1$,
except for the case $m=-\frac{1}{2}$, $\beta\neq0$,
we define the closed operator $H_{\beta,m}$ as the
restriction of $L_{\beta,m^2}^{\max}$ to the domain
\begin{align}
\nonumber \Dom(H_{\beta,m}) & = \Big\{f\in \Dom(L_{\beta,m^2}^{\max})\mid
\hbox{ for some } c \in \bbC,\\
\label{eq_def_H} &\qquad f(x)- c x^{\frac{1}{2}+m}
\Big(1-\frac{\beta}{1+2m} x \Big)\in\Dom(L_{\beta,m^2}^{\min})\hbox{ around }
0\Big\}.
\end{align}
Note that for $\beta=0$, the expression $\frac{\beta}{1+2m}$ is interpreted as $0$, also in the case $m=-\frac12$.
In the exceptional case excluded above we set
\begin{equation}\label{exce}
H_{\beta,-\frac{1}{2}}:= H_{\beta,\frac{1}{2}},\quad \beta\neq0.
\end{equation}

Let us stress that \eqref{exce} does not extend to $\beta=0$. In fact,
$H_{0,-\frac12}\neq H_{0,\frac12}$, as we know from \cite{BDG,DR}:
$H_{0,-\frac12}$ is the Neumann Laplacian on $\R_+$, and $H_{0,\frac12}$ is the Dirichlet Laplacian on $\R_+$.
More information about the singularity at $(\beta,m)=(0,-\frac12)$ will be provided in Proposition \ref{more}.

The following statements can be proved directly:
\begin{theorem}
\begin{enumerate}
\item[(i)] For any $m\in \C$ with $\Re(m)>-1$ and any $\beta \in \C$ one has
\begin{equation*}
\big(H_{\beta,m}\big)^* = H_{\bar \beta,\bar m},\quad \big(H_{\beta,m}\big)^\# = H_{ \beta, m},
\end{equation*}
\item[(ii)] For any real $m>-1$ and for any real $\beta\in \R$ the operator $H_{\beta,m}$ is self-adjoint,
\item[(iii)] For $\Re(m) \geqslant 1$,
$$
L_{\beta,m^2 }^{\min}= H_{\beta,m}= L_{\beta,m^2 }^{\max}.
$$

\item[(iv)] For $-1<\Re(m)<1$,
$$
L_{\beta,m^2 }^{\min}\subsetneq H_{\beta,m}\subsetneq L_{\beta,m^2 }^{\max},
$$
and the inclusions of the corresponding domains are of codimension $1$.
\end{enumerate}\end{theorem}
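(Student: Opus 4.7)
The plan is to establish the four claims in the order (iii), (iv), (i), (ii), since the earlier items support the later ones.

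For (iii), Proposition \ref{lem_properties}(iv) asserts $\Dom(L^{\min}_{\beta,m^2})=\Dom(L^{\max}_{\beta,m^2})$ when $|\Re(m)|\geqslant 1$. Choosing $c=0$ in \eqref{eq_def_H} gives $\Dom(L^{\min}_{\beta,m^2})\subseteq\Dom(H_{\beta,m})$, while the definition directly provides $\Dom(H_{\beta,m})\subseteq\Dom(L^{\max}_{\beta,m^2})$; the three domains therefore coincide.

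For (iv), I would compare the condition in \eqref{eq_def_H} against the two-dimensional quotient furnished by Proposition \ref{lem_properties}(iii) and (v). Set $\phi(x):=x^{1/2+m}(1-\tfrac{\beta}{1+2m}x)$ and fix a cutoff $\zeta\in C_{\rm c}^\infty([0,\infty[)$ with $\zeta\equiv 1$ near $0$. A direct computation yields $L_{\beta,m^2}\phi=\tfrac{\beta^2}{1+2m}x^{1/2+m}$, which is $L^2$ near $0$ since $\Re(m)>-1$, so $\phi\zeta\in\Dom(L^{\max}_{\beta,m^2})$. By \eqref{eq_asymp1}, one has $j_{\beta,m}(x)=c_0\phi(x)+O(x^{5/2+\Re(m)})$ near $0$ with $c_0=\sqrt\pi\,\beta^{1/4+m}/\Gamma(1+2m)\neq 0$ in the relevant range. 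Matching leading asymptotics in the decomposition $\phi\zeta=a\,j_{\beta,m}\zeta+b\,y_{\beta,m}\zeta+h$ provided by Proposition \ref{lem_properties}(iii) forces $a=c_0^{-1}$ (from the coefficient of $x^{1/2+m}$) and $b=0$ (from the coefficient of $x^{1/2-m}$, or $x^{1/2}\ln x$ when $m=0$), hence $\phi\zeta-c_0^{-1}j_{\beta,m}\zeta\in\Dom(L^{\min}_{\beta,m^2})$ around $0$. Consequently the condition \eqref{eq_def_H} is equivalent to the single linear constraint that the $y_{\beta,m}$-coefficient in the decomposition of $f$ vanishes, so $\Dom(H_{\beta,m})$ has codimension $1$ both in $\Dom(L^{\max}_{\beta,m^2})$ and above $\Dom(L^{\min}_{\beta,m^2})$. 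Both inclusions are strict: $\phi\zeta\in\Dom(H_{\beta,m})$ fails the $o(x^{3/2})$ estimate of Proposition \ref{lem_properties}(ii), while $y_{\beta,m}\zeta\in\Dom(L^{\max}_{\beta,m^2})$ has $y$-coefficient $1$ and hence is excluded from $\Dom(H_{\beta,m})$. The exceptional case $(m,\beta)=(-1/2,\beta)$ with $\beta\neq 0$ is handled by definition \eqref{exce}.

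For (i), I first prove the transpose identity $(H_{\beta,m})^\#=H_{\beta,m}$. Green's formula together with Proposition \ref{lem_properties}(i) gives, for all $f,g\in\Dom(L^{\max}_{\beta,m^2})$,
\begin{equation*}
\langle g\,|\,L^{\max}f\rangle-\langle L^{\max}g\,|\,f\rangle=-\Wr(g,f;0^+).
\end{equation*}
For $f,g\in\Dom(H_{\beta,m})$ take local decompositions $f\zeta=c_f\phi\zeta+f_0$, $g\zeta=c_g\phi\zeta+g_0$ with $f_0,g_0\in\Dom(L^{\min}_{\beta,m^2})$. The leading asymptotics of $\phi,\phi'$ combined with Proposition \ref{lem_properties}(ii) (up to logarithmic corrections at $m=0$) imply that $\Wr(\phi,f_0;x)$, $\Wr(\phi,g_0;x)$ and $\Wr(g_0,f_0;x)$ are all $o(x^{1+\Re(m)})\to 0$, while $\Wr(\phi,\phi)\equiv 0$. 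Hence $\Wr(g,f;0^+)=0$ and $H_{\beta,m}\subseteq(H_{\beta,m})^\#$. For the reverse inclusion, $L^{\min}_{\beta,m^2}\subseteq(H_{\beta,m})^\#\subseteq L^{\max}_{\beta,m^2}$; in the range $|\Re(m)|\geqslant 1$ case (iii) closes the argument immediately, while for $-1<\Re(m)<1$ the codimension of $(H_{\beta,m})^\#$ in $\Dom(L^{\max}_{\beta,m^2})$ is $0$ or $1$, and the value $0$ is excluded by taking $g=y_{\beta,m}\zeta$ and $f=\phi\zeta$, for which $\Wr(g,f;0^+)=c_0^{-1}\Wr(y_{\beta,m},j_{\beta,m};0^+)=-c_0^{-1}\sqrt{\beta}\neq 0$ (with the analogous non-vanishing Wronskian from \eqref{eq_0_en_0_beta} when $\beta=0$). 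The adjoint identity $(H_{\beta,m})^*=H_{\bar\beta,\bar m}$ then follows from the general relation $A^*=CA^\#C$ for the antilinear conjugation $(Cf)(x):=\overline{f(x)}$, combined with $\overline{\phi(x)}=x^{1/2+\bar m}(1-\tfrac{\bar\beta}{1+2\bar m}x)$ for $x>0$ and $C\Dom(L^{\min,\max}_{\beta,m^2})=\Dom(L^{\min,\max}_{\bar\beta,\bar m^2})$, which together yield $CH_{\beta,m}C=H_{\bar\beta,\bar m}$.

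Statement (ii) is an immediate corollary of (i) by specializing to $\bar\beta=\beta$ and $\bar m=m$. The main obstacle is the codimension-reduction argument in (iv), specifically the leading-order matching identifying $\phi$ with a constant multiple of $j_{\beta,m}$ modulo $\Dom(L^{\min}_{\beta,m^2})$; this analysis has to be carried out separately at the degenerate points $m=0$ (logarithm in $y_{\beta,m}$) and $m=\pm 1/2$ (coincidence of Frobenius exponents), but in each case it is routine once the pattern is fixed.
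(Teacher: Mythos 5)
Your argument is correct and follows essentially the same route as the paper: items (iii) and (iv) are extracted from Proposition \ref{lem_properties} exactly as the paper indicates (you merely supply the identification of $x^{\frac12+m}\big(1-\frac{\beta}{1+2m}x\big)$ with $c_0^{-1}j_{\beta,m}$ modulo the minimal domain, which the paper leaves implicit), and item (i) rests on the same Green's-formula/boundary-Wronskian identity. Your only organizational deviations --- establishing the transpose first and recovering the adjoint via $A^{*}=CA^{\#}C$, and replacing the paper's direct verification that $\Wr(\bar f,g;0)=0$ if and only if $g\in\Dom(H_{\bar\beta,\bar m})$ by a codimension count together with a single non-vanishing Wronskian --- are cosmetic.
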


\begin{proof}
Recall from \cite[Prop.~A.2]{BDG} that for any $f\in \Dom(L_{\beta,m^2}^{\max})$ and $g\in \Dom(L_{\bar \beta,\bar m^2}^{\max})$,
the functions $f,f',g,g'$ are continuous on $\R_+$. In addition, the Wronskian of $\bar f$ and $g$, as introduced in \eqref{wron},
possesses a limit at zero, and we have the equality
\begin{equation*}
(L_{\beta,m^2}^{\max}f|g) - (f|L_{\bar \beta,\bar m^2}^{\max}g) = -\Wr(\bar f,g;0).
\end{equation*}
In particular, if $f\in \Dom(H_{\beta,m})$ one infers that
\begin{equation*}
(H_{\beta,m}f|g) = (f|L_{\bar \beta, \bar m^2}^{\max}g) -\Wr(\bar f,g;0).
\end{equation*}
Thus, $g\in \Dom\big((H_{\beta,m})^*\big)$ if and only if $\Wr(\bar f,g;0)=0$, and then
$(H_{\beta,m})^*g=L_{\bar \beta,\bar m^2}^{\max}g$.
By taking into account the explicit description of $\Dom(H_{\beta,m})$,
straightforward computations show that $\Wr(\bar f,g;0)=0$ if and only
if $g\in \Dom(H_{\bar \beta,\bar m})$. One then deduces that
$(H_{\beta,m})^*= H_{\bar \beta,\bar m}$.
Note that the property for the transpose of $H_{\beta,m}$ can be proved similarly,
which finishes the proof of (i).

The statement (ii) is a straightforward consequence of the statement (i).
The statements (iii) and (iv) are consequences of Proposition \ref{lem_properties}.
\end{proof}

\begin{remark}
In the spirit of \cite{DR} one could consider more general boundary conditions, and thus other realizations
of the Whittaker operator.
However, in this paper we stick to the most natural boundary conditions introduced above.
This approach corresponds to the one of the original paper \cite{BDG}, where $\beta=0$.
\end{remark}

\subsection{The resolvent}\label{sec_resol}

From now on, we consider fixed $m,\beta\in \C$ with $\Re(m)>-1$.
In order to study the resolvent of the operator $H_{\beta,m}$, let us introduce the set $\sigma_{\beta,m}\subset \C$
which will be related later on to the spectrum of $H_{\beta,m}$\;\!:
$$
\sigma_{\beta,m}:=\Big\{k\in \C\mid \Re(k)>0 \hbox{ and } \frac{\beta}{2k}-m-\frac{1}{2}\not \in \N\Big\}.
$$
Let us consider $k \in \sigma_{\beta,m}$.
By a scaling argument together with the material of Section \ref{sec_B_functions} one easily observes that the two functions
\begin{equation}\label{eq_2_sol}
x\mapsto \K{\frac{\beta}{2k}}{m}(2kx) \qquad \hbox{and}\qquad
x\mapsto \I{\frac{\beta}{2k}}{m}(2kx)
\end{equation}
are linearly independent solutions of the equation $(L_{\beta,m^2}+k^2)v=0$.
From \eqref{Kbm-around-infinity} one infers that
the first function is always in $L^2$ near infinity, but it belongs
to $L^2$ near zero only
for $|\Re(m)|<1$.
On the other hand, the second function belongs to $L^2$ around $0$
for any $m$ with $\Re(m)>-1$ but it does not belong to $L^2$ near infinity.

If in addition $m\neq-\frac12$, then one has
\begin{equation*}
\I{\beta}{m}(x)\sim\frac{x^{\frac12+m}}{\Gamma(1+2m)}\Big(1-\frac{\beta}{(1+2m)}x\Big).
\end{equation*}
Therefore, it follows that
\begin{equation}
\I{\frac{\beta}{2k}}{m}(2kx)
\sim\frac{(2kx)^{\frac12+m}}{\Gamma(1+2m)}\Big(1-\frac{\beta}{(1+2m)}x\Big),
\label{whitti}
\end{equation}
which means that \eqref{whitti} belongs to the domain of $H_{\beta,m}$ around $0$.
Based on these observations and on the standard theory of Green's function, we expect that the inverse of the operator $H_{\beta,m}+k^2$ for suitable $k$
is given by the operator $R_{\beta,m}(-k^2)$ whose kernel is given for $x,y\in \R_+$ by
\begin{align}\label{The-resolvent}
\nonumber &R_{\beta,m}(-k^2;x,y)\\
& := \tfrac{1}{2k}\Gamma\naw{\tfrac{1}{2}+m-\tfrac{\beta}{2k}} \begin{cases} \I{\frac{\beta}{2k}}{m}(2k x)\K{\frac{\beta}{2k}}{m}(2k y) & \mbox{ for }0<x<y,\\
\I{\frac{\beta}{2k}}{m}(2k y)\K{\frac{\beta}{2k}}{m}(2k x) & \mbox{ for }0<y<x.
\end{cases}
\end{align}
We still need to check the exceptional case $m=-\frac12$. By \eqref{dege} and
\eqref{eq_sym}, we have
$$
\cI_{\beta,-\frac12}(x)=-\beta \cI_{\beta,\frac12}(x),\quad
\cK_{\beta,-\frac12}(x)=\cK_{\beta,\frac12}(x).
$$
As a consequence we infer that
\begin{equation*}
R_{\beta,-\frac{1}{2}}(-k^2;x,y)= R_{\beta,\frac{1}{2}}(-k^2;x,y),\quad\beta\neq0,
\end{equation*}
which is consistent with \eqref{exce}.

\begin{remark}
If $\beta=0$, by taking the relations \eqref{eqI_{0,m}} and
\eqref{eqK_{0,m}} into account one infers that
\begin{equation*}
R_{0,m}(-k^2;x,y) = \frac{1}{k}\begin{cases} \Ia_m(k x)\Ka_m(k y) & \mbox{ for }0<x<y,\\
\Ia_m(k y)\Ka_m(k x) & \mbox{ for }0<y<x.
\end{cases}
\end{equation*}
This expression corresponds to the starting point for the study of the resolvent in \cite{BDG}.
\end{remark}

The next statement provides the precise link between the resolvent of $H_{\beta,m}$ and the operator $R_{\beta,m}(-k^2)$.

\begin{theorem}\label{thm_resolvent}
Let $m,\beta\in \C$ with $\Re(m)>-1$ and let $k\in \sigma_{\beta,m}$.
Then the operator $R_{\beta,m}(-k^2)$ defined by the kernel \eqref{The-resolvent}
belongs to $\B\big(L^2(\R_+)\big)$ and equals $(H_{\beta,m}+k^2)^{-1}$.
Moreover, the map $(\beta,m)\mapsto H_{\beta,m}$ is a holomorphic family of closed operators except for a singularity at $(\beta,m)=(0,-\frac12)$.
\end{theorem}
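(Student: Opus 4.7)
The plan is to construct $R_{\beta,m}(-k^2)$ directly as the Green's function of the ODE $(L_{\beta,m^2}+k^2)v=0$ and then read off all three assertions. The two solutions in \eqref{eq_2_sol} are chosen so that $x\mapsto\cI_{\beta/(2k),m}(2kx)$ exhibits the precise boundary expansion $x^{\frac12+m}\big(1-\frac{\beta}{1+2m}x+O(x^2)\big)$ built into $\Dom(H_{\beta,m})$, while $x\mapsto\cK_{\beta/(2k),m}(2kx)$ decays like $\e^{-kx}$ at infinity. Rescaling \eqref{wron1} yields
\begin{equation*}
\Wr\big(\cI_{\beta/(2k),m}(2k\,\cdot\,),\cK_{\beta/(2k),m}(2k\,\cdot\,);x\big)=-\frac{2k}{\Gamma\big(\tfrac12+m-\tfrac{\beta}{2k}\big)},
\end{equation*}
which is non-zero exactly because $k\in\sigma_{\beta,m}$; this matches the prefactor in \eqref{The-resolvent}.

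The first step is to establish $R_{\beta,m}(-k^2)\in\B\big(L^2(\R_+)\big)$ via a Schur test applied to the kernel. The asymptotic estimates \eqref{Ibm-around-zero}--\eqref{Ibm-around-infinity} and \eqref{Kbm-around-zero}--\eqref{Kbm-around-infinity} show that near infinity the exponential factors combine to $\e^{-k|x-y|}$, while near zero $\cI_{\beta/(2k),m}(2kx)\sim x^{\frac12+m}$ is integrable against the various $\cK$-behaviors precisely because $\Re(m)>-1$. The required case distinction (logarithmic regimes at $m=0,\frac12$) is bookkeeping rather than substantial.

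The second step is the identification $R_{\beta,m}(-k^2)=(H_{\beta,m}+k^2)^{-1}$. For surjectivity I would fix $f\in C^\infty_c(\R_+)$ and set $u:=R_{\beta,m}(-k^2)f$; two pointwise differentiations together with the Wronskian formula above give $(L_{\beta,m^2}+k^2)u=f$. For $x$ below $\mathrm{supp}(f)$, $u$ is proportional to $\cI_{\beta/(2k),m}(2kx)$ and hence displays, by \eqref{Ibm-around-zero}, the required boundary expansion, so $u\in\Dom(H_{\beta,m})$. Density of $C^\infty_c(\R_+)$ in $L^2(\R_+)$, boundedness of $R_{\beta,m}(-k^2)$, and closedness of $H_{\beta,m}$ then extend the identity to all $f\in L^2(\R_+)$. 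For injectivity of $H_{\beta,m}+k^2$, any element of its null space solves the ODE, lies in $L^2(\R_+)$, and obeys the boundary condition: the $\cI$-part is ruled out by the exponential blow-up at infinity, and the $\cK$-part carries a coefficient of $x^{\frac12-m}$ (or its logarithmic analog) proportional to $1/\Gamma\big(\frac12+m-\frac{\beta}{2k}\big)\neq 0$ by the condition $k\in\sigma_{\beta,m}$, which is incompatible with the $\Dom(H_{\beta,m})$ boundary condition; hence the null space is trivial. The degenerate case $m=-\frac12,\beta\neq0$ is consistent with \eqref{exce} because \eqref{dege} and \eqref{eq_sym} give $\cI_{\beta,-\frac12}\propto\cI_{\beta,\frac12}$ and $\cK_{\beta,-\frac12}=\cK_{\beta,\frac12}$, so the kernel is unchanged. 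As a by-product, the deferred Proposition \ref{lem_properties}(iv) now drops out: for $|\Re(m)|\geqslant 1$ the estimate \eqref{Kbm-around-zero} forces $\cK_{\beta/(2k),m}(2k\,\cdot\,)$ out of $L^2$ near $0$, so any $f\in\Dom(L_{\beta,m^2}^{\max})$ automatically satisfies the boundary condition, giving $\Dom(L_{\beta,m^2}^{\min})=\Dom(L_{\beta,m^2}^{\max})$.

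For the holomorphy assertion I would fix a large $k\in\R_+$, so that $k\in\sigma_{\beta,m}$ on a whole neighborhood of any prescribed $(\beta_0,m_0)$ with $\Re(m_0)>-1$ and $(\beta_0,m_0)\neq(0,-\frac12)$. The kernel \eqref{The-resolvent} is then jointly holomorphic in $(\beta,m)$, since $\cI_{\beta,m}$, $\cK_{\beta,m}$, and $1/\Gamma$ are entire in their parameters; combined with a locally uniform Schur bound this gives operator-norm holomorphy of $(\beta,m)\mapsto R_{\beta,m}(-k^2)$, which is the defining property of a holomorphic family of closed operators. The non-removability of the singularity at $(0,-\frac12)$ is already witnessed by the discrepancy \eqref{huli}--\eqref{huli1}. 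The main obstacle in this program, in my view, is the case analysis near $x=0$---both for the Schur estimate and for the boundary-behavior check of $u=R_{\beta,m}(-k^2)f$---particularly in the degenerate regimes $m\in\{0,\pm\frac12\}$ where logarithmic corrections intervene and the leading asymptotic form of $\cK_{\beta,m}$ changes.
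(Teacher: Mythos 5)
Your proposal is correct and follows essentially the same route as the paper: boundedness of the integral operator with kernel \eqref{The-resolvent} via Schur-type estimates built from the asymptotics \eqref{Ibm-around-zero}--\eqref{Ibm-around-infinity} and \eqref{Kbm-around-zero}--\eqref{Kbm-around-infinity}, followed by the standard Green's-function identification and kernel holomorphy (the paper delegates the latter two steps to [BDG], which you instead spell out). The only point where you are quicker than warranted is the Schur estimate at infinity: the kernel is not dominated by $\e^{-\Re(k)|x-y|}$ alone but carries the polynomially growing factor $\big(\max\{1,2x|k|\}/\max\{1,2y|k|\}\big)^{\pm\Re(\beta/2k)}$, and the paper devotes the explicit three-part integral estimates \eqref{est1}--\eqref{est3} (plus a separate Hilbert--Schmidt treatment of the region near the origin) to absorbing it.
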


Let us emphasize that this statement already provides information about the spectrum $\sigma(H_{\beta,m})$ of $H_{\beta,m}$.
Indeed, one infers that
\begin{align}\label{pqpq}
\nonumber \sigma(H_{\beta,m})&\subset \big\{-k^2 \mid \Re(k)\geqslant 0 \hbox{ and } k\not \in \sigma_{\beta,m}\big\}\\
&=[0,\infty[
\,\bigcup\Big\{\lambda_N \mid N\in \N, N+m+\frac12\neq0, \quad \Re\Big(\frac{\beta}{N+m+\frac{1}{2}}\Big)>0\Big\},
\end{align}
where we have set
\begin{equation}
\lambda_N:= -\frac{\beta^2}{4(N+m+\frac{1}{2})^2}.\label{delam}
\end{equation}

Later on, we shall see that the inclusion in \eqref{pqpq} is in fact an equality.
The proof of Theorem \ref{thm_resolvent} is based on a preliminary technical lemma.

\begin{lemma}
Let $m,\beta\in \C$ with $\Re(m)>-1$ and let $k\in \sigma_{\beta,m}$. Then for any $x,y\in \R_+$ one has:
\begin{enumerate}
\item[(i)] If $\Re(m)\geqslant 0$ with $m\neq 0$ then
\begin{align}\label{eq_est1}
\nonumber & |R_{\beta,m}(-k^2;x,y)| \\
\nonumber & \leqslant C_{\frac{\beta}{2k},m}^2 \tfrac{|\Gamma(\frac{1}{2}+m-\frac{\beta}{2k})|}{2|k|}
\e^{-|x-y|\Re(k)} \min\{1,2x|k|)\}^{\frac{1}{2}} \min\{1,2y|k|\}^{\frac{1}{2}} \\
& \quad \times
\begin{cases} \max\{1,2x|k|\}^{-\Re(\frac{\beta}{2k})}\max\{1,2y|k|\}^{\Re(\frac{\beta}{2k})} & \mbox{ for }0<x<y,\\
\max\{1,2y|k|\}^{-\Re(\frac{\beta}{2k})}\max\{1,2x|k|\}^{\Re(\frac{\beta}{2k})} & \mbox{ for }0<y<x.
\end{cases}
\end{align}
\item[(ii)] If $\Re(m)\leqslant 0$ with $m\neq 0$ then
\begin{align}\label{eq_est2}
\nonumber &|R_{\beta,m}(-k^2;x,y)| \\
\nonumber & \leqslant C_{\frac{\beta}{2k},m}^2 \tfrac{|\Gamma(\frac{1}{2}+m-\frac{\beta}{2k})|}{2|k|}
\e^{-|x-y|\Re(k)} \min\{1,2x|k|\}^{\Re(m)+\frac{1}{2}} \min\{1,2y|k|\}^{\Re(m)+\frac{1}{2}} \\
& \quad \times
\begin{cases} \max\{1,2x|k|\}^{-\Re(\frac{\beta}{2k})}\max\{1,2y|k|\}^{\Re(\frac{\beta}{2k})} & \mbox{ for }0<x<y,\\
\max\{1,2y|k|\}^{-\Re(\frac{\beta}{2k})}\max\{1,2x|k|\}^{\Re(\frac{\beta}{2k})} & \mbox{ for }0<y<x.
\end{cases}
\end{align}
\item[(iii)] If $m=0$ then
\begin{align}\label{eq_est3}
\nonumber &|R_{\beta,0}(-k^2;x,y)| \\
\nonumber & \leqslant C_{\frac{\beta}{2k}}^2 \tfrac{|\Gamma(\frac{1}{2}-\frac{\beta}{2k})|}{2|k|}
\e^{-|x-y|\Re(k)} \min\{1,2x|k|\}^{\frac{1}{2}} \min\{1,2y|k|\}^{\frac{1}{2}} \\
\nonumber & \quad \times \big(1+\big|\ln(\min\{1,2x|k|\})\big|\big) \big(1+\big|\ln(\min\{1,2y|k|\})\big|\big) \\
& \quad \times
\begin{cases} \max\{1,2x|k|\}^{-\Re(\frac{\beta}{2k})}\max\{1,2y|k|\}^{\Re(\frac{\beta}{2k})} & \mbox{ for }0<x<y,\\
\max\{1,2y|k|\}^{-\Re(\frac{\beta}{2k})}\max\{1,2x|k|\}^{\Re(\frac{\beta}{2k})} & \mbox{ for }0<y<x.
\end{cases}
\end{align}
\end{enumerate}
The constants $C_{\frac{\beta}{2k},m}$ and $C_{\frac{\beta}{2k}}$ are independent of $x$ and $y$.
\end{lemma}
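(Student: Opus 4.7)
The kernel factors as
\[R_{\beta,m}(-k^2;x,y) = \tfrac{1}{2k}\Gamma\!\big(\tfrac12+m-\tfrac{\beta}{2k}\big)\,\cI_{\beta/(2k),m}(2kx)\,\cK_{\beta/(2k),m}(2ky)\]
for $0<x<y$, with $x,y$ swapped for $0<y<x$. Setting $\beta':=\beta/(2k)$, the lemma reduces to deriving uniform bounds on $|\cI_{\beta',m}(2kx)|$ and $|\cK_{\beta',m}(2ky)|$ as functions of $x,y$, and multiplying them.

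Since $\Re(k)>0$ we have $|\arg(2kx)|<\pi/2$, so the near-zero expansion \eqref{Ibm-around-zero} and the near-infinity expansion \eqref{Ibm-around-infinity} apply. Together with the continuity of the entire function $\cI_{\beta',m}$ on compact subsets of $\R_+$ they yield the uniform estimate
\[|\cI_{\beta',m}(2kx)|\le C\,\min\{1,2|k|x\}^{1/2+\Re(m)}\,\max\{1,2|k|x\}^{-\Re(\beta')}\,\e^{\Re(k)x}.\]
The analogous bound for $\cK_{\beta',m}$ is obtained from \eqref{Kbm-around-zero}, \eqref{Kbm-around-infinity} together with the symmetry \eqref{eq_sym}, which reduces the case $\Re(m)<0$ to the case $\Re(-m)\geqslant 0$ where \eqref{Kbm-around-zero} is stated; for $m\ne 0$ this yields
\[|\cK_{\beta',m}(2ky)|\le C\,\min\{1,2|k|y\}^{1/2-|\Re(m)|}\,\max\{1,2|k|y\}^{\Re(\beta')}\,\e^{-\Re(k)y},\]
with an additional factor $1+|\ln(\min\{1,2|k|y\})|$ when $m=0$.

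Multiplying the two estimates, the exponentials combine into $\e^{-|x-y|\Re(k)}$ for $0<x<y$, so the product takes the form
\[C^2\,\min\{1,2|k|x\}^{1/2+\Re(m)}\min\{1,2|k|y\}^{1/2-|\Re(m)|}\,\mathcal M\,\e^{-|x-y|\Re(k)},\]
with $\mathcal M=\max\{1,2|k|x\}^{-\Re(\beta')}\max\{1,2|k|y\}^{\Re(\beta')}$. For case (ii), where $\Re(m)\leqslant 0$, one has $|\Re(m)|=-\Re(m)$ and the exponents already match the statement. For case (i), where $\Re(m)\geqslant 0$, we apply the sign trick
\[\min\{1,2|k|x\}^{1/2+\Re(m)}\min\{1,2|k|y\}^{1/2-\Re(m)}=\min\{1,2|k|x\}^{1/2}\min\{1,2|k|y\}^{1/2}\Big(\tfrac{\min\{1,2|k|x\}}{\min\{1,2|k|y\}}\Big)^{\Re(m)},\]
and use $x\leqslant y\Rightarrow\min\{1,2|k|x\}\leqslant\min\{1,2|k|y\}$ with $\Re(m)\geqslant 0$ to conclude that the last factor is $\leqslant 1$, producing the exponents $1/2,1/2$ claimed in case (i). For case (iii), inserting the harmless factor $1+|\ln(\min\{1,2|k|x\})|\geqslant 1$ symmetrises the logarithm between $x$ and $y$. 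The region $0<y<x$ is handled identically by working with the other branch of $R_{\beta,m}$, which swaps the roles of $x$ and $y$ and produces the $\max$-factors appearing in the statement.

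The main obstacle is the case-split structure of \eqref{Kbm-around-zero}: the leading behaviour of $\cK_{\beta,m}$ near $0$ changes between $m=0$ (logarithmic), $\Re(m)=0$ with $m\ne 0$ (two competing powers of equal modulus), $\Re(m)\in\,]0,\tfrac12[$, $m=\tfrac12$ (constant leading term), and $\Re(m)>\tfrac12$. One must verify in each regime (and under $m\mapsto -m$) that the single uniform exponent $1/2-|\Re(m)|$, with a logarithmic enhancement only at $m=0$, dominates the leading singularity. Once this bookkeeping is done, the proof is a routine combination of the asymptotic estimates and the sign manipulation described above.
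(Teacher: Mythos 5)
Your proposal is correct and follows essentially the same route as the paper: bound $|\cI_{\beta/(2k),m}|$ and $|\cK_{\beta/(2k),m}|$ separately via the asymptotics at $0$ and $\infty$ (using $\cK_{\beta,m}=\cK_{\beta,-m}$ to cover $\Re(m)<0$ and continuity on the intermediate compact piece of the ray $\arg z=\arg k$), multiply, and for $\Re(m)\geqslant 0$ absorb the mismatch of exponents through the observation that $\big(\min\{1,2|k|x\}/\min\{1,2|k|y\}\big)^{\Re(m)}\leqslant 1$ when $x\leqslant y$. The only cosmetic difference is that you make the $\min$-ratio step explicit where the paper writes it loosely as $|kx/ky|<1$; nothing of substance is missing.
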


\begin{proof}
Observe first that for $\eps>0$ and $|\arg(z)|<\pi-\eps$ one deduces from
\eqref{Kbm-around-zero} and \eqref{Kbm-around-infinity} that
\begin{equation*}
|\K{\frac{\beta}{2k}}{m}(z)| \leqslant C_{\frac{\beta}{2k},m}\e^{-\Re(z)/2}\min\{1,|z|\}^{-|\Re(m)|+\frac{1}{2}} \max\{1,|z|\}^{\Re(\frac{\beta}{2k})}
\end{equation*}
for $m\neq 0$, while for $m=0$
\begin{equation*}
|\K{\frac{\beta}{2k}}{0}(z)| \leqslant C_{\frac{\beta}{2k}} \e^{-\Re(z)/2} \min\{1,|z|\}^{\frac{1}{2}} \big(1+\big|\ln(\min\{1,|z|\})\big|\big) \max\{1,|z|\}^{\Re(\frac{\beta}{2k})}.
\end{equation*}
Similarly, from \eqref{Ibm-around-zero} and \eqref{Ibm-around-infinity} one infers that
\begin{align*}
|\I{\frac{\beta}{2k}}{m}(z)| & \leqslant C_{\frac{\beta}{2k},m}\e^{\Re(z)/2}\min\{1,|z|\}^{\Re(m)+\frac{1}{2}}\max\{1,|z|\}^{-\Re(\frac{\beta}{2k})} \quad \hbox{for } m\neq 0, \\
|\I{\frac{\beta}{2k}}{0}(z)| & \leqslant C_{\frac{\beta}{2k}} \e^{\Re(z)/2} \min\{1,|z|\}^{\frac{1}{2}} \max\{1,|z|\}^{-\Re(\frac{\beta}{2k})} .
\end{align*}

As a consequence of these estimates, if $m\neq 0$ one infers that for $0<x<y$
\begin{align*}
& |R_{\beta,m}(-k^2;x,y)|\\
& \leqslant C_{\frac{\beta}{2k},m}^2 \frac{\big|\Gamma(\frac{1}{2}+m-\frac{\beta}{2k})\big|}{2|k|}
\e^{(x-y)\Re(k)} \min\{1,2x|k|\}^{\Re(m)+\frac{1}{2}} \\
& \quad \times \min\{1,2y|k|\}^{-|\Re(m)|+\frac{1}{2}} \max\{1,2x|k|\}^{-\Re(\frac{\beta}{2k})}\max\{1,2y|k|\}^{\Re(\frac{\beta}{2k})}
\end{align*}
while for $0<y<x$ one has
\begin{align*}
& |R_{\beta,m}(-k^2;x,y)|\\
& \leqslant C_{\frac{\beta}{2k},m}^2 \frac{\big|\Gamma(\frac{1}{2}+m-\frac{\beta}{2k})\big|}{2|k|}
\e^{(y-x)\Re(k)} \min\{1,2y|k|\}^{\Re(m)+\frac{1}{2}} \\
& \quad \times \min\{1,2x|k|\}^{-|\Re(m)|+\frac{1}{2}} \max\{1,2y|k|\}^{-\Re(\frac{\beta}{2k})}\max\{1,2x|k|\}^{\Re(\frac{\beta}{2k})}.
\end{align*}
Then, if $\Re(m)\geqslant 0$ one observes that $\big|\frac{kx}{ky}\big|<1$ in the first case, and
$\big|\frac{ky}{kx}\big|<1$ in the second case. This directly leads to the first part of the statement.
Similarly, for $\Re(m)\leqslant 0$ one has $-|\Re(m)|=\Re(m)$, from which one infers the second part of the statement.
The special case $m=0$ is straightforward.
\end{proof}

\begin{proof}[Proof of Theorem \ref{thm_resolvent}]
Observe first that for $k\in \sigma_{\beta,m}$ the Gamma factor in \eqref{The-resolvent} is harmless.
Thus, in order to show that the kernel \eqref{The-resolvent}, with the Gamma factor removed, defines a bounded operator for any $k\in \C$ with $\Re(k)>0$, it
is sufficient to consider separately the two regions
$$
\Omega:=\Big\{(x,y)\in \R_+ \times \R_+ \mid x\geqslant (2|k|)^{-1},y\geqslant [(2|k|)^{-1}\Big\}
$$
and $\R_+ \times \R_+\setminus \Omega$.
In the latter region, thanks to the previous lemma it is easily seen that the kernel $R_{\beta,m}(-k^2;\cdot,\cdot)$
belongs to $L^2$, and thus defines a Hilbert--Schmidt operator.
For the kernel on $\Omega$ one can employ Schur's test and observe that $R_{\beta,m}(-k^2;\cdot,\cdot)$
belongs to $L^\infty\big([|k|^{-1},\infty[;L^1([|k|^{-1},\infty[)\big)$ for the two variables taken in arbitrary order.
If $\Re\big(\frac{\beta}{2k}\big)\leqslant 0$, then this computation is easy and reduced to the one already performed in the proof of \cite[Lem.~4.4]{BDG}.
We shall consider only the case $\Re\big(\frac{\beta}{2k}\big)>0$.

Thus, for $\Re\big(\frac{\beta}{2k}\big)>0$ let us check that
\begin{equation*}
\sup_{y\geqslant (2|k|)^{-1}}\int_{(2|k|)^{-1}}^\infty \big|R_{\beta,m}(-k^2;x,y)\big|\;\!\d x <\infty,
\end{equation*}
the other condition being obtained similarly.

For fixed $y\in \big[(2|k|)^{-1},\infty\big[$
we divide the above integral into three parts, namely $x\in \big[(2|k|)^{-1},\frac{(2|k|)^{-1}+y}{2}\big[$, $x\in \big[\frac{(2|k|)^{-1}+y}{2},y\big[$ and $x\in[y,\infty[$.
For the first part it is enough to observe that
\begin{align}\label{est1}
\nonumber & \int_{(2|k|)^{-1}}^{\frac{(2|k|)^{-1}+y}{2}}\e^{-(y-x)\Re(k)} \big(\tfrac{y}{x}\big)^{\Re(\frac{\beta}{2k})}\;\!\d x \\
\nonumber & \leqslant \e^{-y\Re(k)} (2|k|y)^{\Re(\frac{\beta}{2k})}\int_{(2|k|)^{-1}}^{\frac{(2|k|)^{-1}+y}{2}} \e^{x\Re(k)}\;\!\d x \\
& = \tfrac{1}{\Re(k)}\e^{-y\Re(k)} (2|k|y)^{\Re(\frac{\beta}{2k})} \big(\e^{(\frac{y+(2|k|)^{-1}}{2})\Re(k)}-\e^{(2|k|)^{-1}\Re(k)}\big).
\end{align}
For the second part one observes that
\begin{align}\label{est2}
\nonumber & \int_\frac{(2|k|)^{-1}+y}{2}^{y}\e^{-(y-x)\Re(k)} \big(\tfrac{y}{x}\big)^{\Re(\frac{\beta}{2k})}\;\!\d x \\
\nonumber & \leqslant \e^{-y\Re(k)} 2^{\Re(\frac{\beta}{2k})} \int_\frac{(2|k|)^{-1}+y}{2}^{y} \e^{x\Re(k)}\;\!\d x \\
& = \tfrac{2^{\Re(\frac{\beta}{2k})}}{\Re(k)}\e^{-y\Re(k)} \big(\e^{y\Re(k)}-\e^{(\frac{y+(2|k|)^{-1}}{2})\Re(k)}\big).
\end{align}
For the third part one has (with $y\geqslant (2|k|)^{-1}$)
\begin{align}\label{est3}
\nonumber &\int_y^\infty \e^{-(x-y)\Re(k)} \big(\tfrac{x}{y}\big)^{\Re(\frac{\beta}{2k})}\;\!\d x \\
\nonumber & = \int_0^\infty \e^{-z\Re(k)}\big(1+\tfrac{z}{y}\big)^{\Re(\frac{\beta}{2k})}\;\!\d z \\
& \leqslant \int_0^\infty \e^{-z\Re(k)}\big(1+2|k|z\big)^{\Re(\frac{\beta}{2k})}\;\!\d z .
\end{align}
Finally, it only remains to observe that the three expressions \eqref{est1}, \eqref{est2} and \eqref{est3}
are bounded for $y\geqslant (2|k|)^{-1}$. As a consequence, one deduces that the kernel restricted to $\Omega$ defines a bounded operator in $L^2(\R_+)$,
and by summing up the information one deduces the boundedness of $R_{\beta,m}(-k^2)$.
The equality of $R_{\beta,m}(-k^2)$ with $(H_{\beta,m}+k^2)^{-1}$ and the mentioned holomorphic
property follow from standard argument, see for example the Appendix A and Proposition 2.3 in \cite{BDG}.
\end{proof}

\begin{proof}[Proof of Proposition \ref{lem_properties} (iv)]
We want to show that $\Re(m)\geqslant 1$ implies $L_{\beta,m^2 }^{\min} = L_{\beta,m^2 }^{\max}$.
With the information provided in the previous statements, it can be done by copying {\it mutatis mutandis} the proof of \cite[Prop.~4.10]{BDG}.
\end{proof}

\subsection{Point spectrum and eigenprojections}

In this section we provide more information on the point spectrum of $H_{\beta,m}$
and exhibit an expression for the projection on the corresponding eigenfunctions.

\begin{theorem}\label{thm_spectrum}
Let $m, \beta\in \C$ with $\Re(m)>-1$.
Then we have
\begin{align}\notag
\sigma_\mathrm{p}(H_{\beta,m})&=
\sigma_\mathrm{d}(H_{\beta,m})\\
&=\Big\{\lambda_N \mid N\in \N, N+m+\frac12\neq0, \quad \Re\Big(\frac{\beta}{N+m+\frac{1}{2}}\Big)>0\Big\},\label{mio2}\end{align}
where $\lambda_N$ were defined in \eqref{delam}.
All eigenvalues are of multiplicity $1$.
The kernel of the Riesz projection $P_N$ corresponding to $\lambda_N$ is given for $x,y\in \R_+$ by
\begin{align}\label{Eigenprojection-eq}
\nonumber
P_N(x,y) &= \frac{N!}{\Gamma(1+2m+N)}\Big( \frac{\beta}{N+m+\frac{1}{2}}\Big)^{1+2m}\exp\Big(-\frac{\beta}{2(N+m+\frac{1}{2})}(x+y)\Big)\\
&\quad \times(xy)^{\frac{1}{2}+m} L^{(2m)}_N\Big( \frac{\beta}{N+m+\frac{1}{2}}\;\!x\Big)L^{(2m)}_N\Big( \frac{\beta}{N+m+\frac{1}{2}}\;\!y\Big),
\end{align}
where $L_N^{(2m)}$ is the Laguerre polynomial introduced in \eqref{eq_Laguerre}.
\end{theorem}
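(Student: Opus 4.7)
The plan is to build on the resolvent analysis of the previous subsection. From Theorem \ref{thm_resolvent} and the inclusion \eqref{pqpq}, we already know $\sigma(H_{\beta,m}) \subset [0,\infty) \cup \{\lambda_N\}$ with $\lambda_N$ as in \eqref{delam}. First I would exclude point spectrum inside $[0,\infty)$: for such an energy $z = k^2 \geqslant 0$ (with $k \in \i\bbR_+$ or $k=0$), the ODE $(L_{\beta,m^2}-z)v=0$ has a two-dimensional solution space spanned (generically) by $\cJ_{\i\beta/(2|k|),m}$-type and $\cH^{\pm}$-type functions for $z>0$, or by $j_{\beta,m},y_{\beta,m}$ for $z=0$; none of these are square-integrable near infinity (they either oscillate or have polynomial behaviour). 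Hence no embedded eigenvalue exists and $\sigma_{\mathrm p}(H_{\beta,m}) \subset \{\lambda_N\}$. Moreover the condition $\Re(\beta/(N+m+\frac12))>0$ forces $\Re(\lambda_N)\leqslant 0$ with nonzero imaginary part unless $\lambda_N<0$, which keeps each $\lambda_N$ off $[0,\infty)$; since the $\lambda_N$ accumulate only at $0$, each is isolated and thus in the discrete spectrum, which yields $\sigma_{\mathrm p}=\sigma_{\mathrm d}$.

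Next I would show that each such $\lambda_N$ is actually an eigenvalue of geometric and algebraic multiplicity one. At $k=k_N:=\beta/(2(N+m+\tfrac12))$ the parameter $\frac{\beta}{2k_N}=\tfrac12+m+N$ is exactly the value at which the confluent series terminates; by the formula recalled after \eqref{eq_Laguerre},
\[
\cI_{\frac{\beta}{2k_N},m}(2k_Nx)=\frac{N!(2k_Nx)^{\frac12+m}\e^{-k_Nx}}{\Gamma(1+2m+N)}L_N^{(2m)}(2k_Nx),
\]
which is square-integrable on $\R_+$ (the $x^{\frac12+m}$ behaviour is $L^2$ near $0$ since $\Re(m)>-1$ and $\e^{-k_Nx}$ provides decay at infinity since $\Re(k_N)>0$) and, by \eqref{Ibm-around-zero}, satisfies the boundary condition \eqref{eq_def_H} defining $\Dom(H_{\beta,m})$. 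Thus $\lambda_N$ is an eigenvalue with this eigenfunction. Multiplicity one follows because the subspace of solutions to $(H_{\beta,m}-\lambda_N)v=0$ satisfying the distinguished boundary condition at $0$ is already at most one-dimensional (the boundary behaviour $x^{\frac12-m}$ is excluded by the definition of $\Dom(H_{\beta,m})$), and the kernel of $H_{\beta,m}-\lambda_N$ is a subspace of that, which we just exhibited to be one-dimensional.

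Finally, to establish the eigenprojection formula I would use the Riesz contour integral. Setting $z=-k^2$, only the Gamma factor $\Gamma(\tfrac12+m-\tfrac{\beta}{2k})$ in the resolvent kernel \eqref{The-resolvent} has a singularity at $k=k_N$; it is a simple pole, and a direct expansion gives
\[
\Gamma\bigl(\tfrac12+m-\tfrac{\beta}{2k}\bigr)=\frac{(-1)^N k_N}{N!\,(N+m+\tfrac12)}\cdot\frac{1}{k-k_N}+O(1).
\]
Both $\cI_{\frac{\beta}{2k_N},m}(2k_N\cdot)$ and $\cK_{\frac{\beta}{2k_N},m}(2k_N\cdot)$ collapse at $k=k_N$ to Laguerre-polynomial expressions (using the formulas recalled after \eqref{eq_Laguerre} for $\cI$ and after \eqref{eqK_{0,m}} for $\cK$), and in particular become proportional---this is exactly what makes the residue a rank-one operator, consistent with multiplicity one. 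Taking the residue in $k$, accounting for $dz=-2k\,dk$ to convert to a residue in $z$, and inserting the explicit closed forms of $\cI$ and $\cK$ at these special parameters yields the stated kernel \eqref{Eigenprojection-eq} after simplification. The main obstacle is bookkeeping: one must carefully combine the $\Gamma$-residue, the Jacobian $-2k$ of $z=-k^2$, the factorial and Gamma prefactors of the Laguerre representations of $\cI$ and $\cK$, and the sign $(-1)^N$ that appears twice (once from the Gamma residue, once from the formula for $\cK$), so that everything collapses into the compact expression involving a single $N!/\Gamma(1+2m+N)$.
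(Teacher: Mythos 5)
Your architecture matches the paper's proof: the eigenvalues arise exactly where the Gamma factor in \eqref{The-resolvent} has a pole, i.e.\ where the two solutions \eqref{eq_2_sol} degenerate into a single Laguerre-type function; that function lies in $\Dom(H_{\beta,m})$ by \eqref{Ibm-around-zero}; and $P_N$ is obtained from the Riesz contour integral by extracting the simple pole of $\Gamma\big(\tfrac12+m-\tfrac{\beta}{2k}\big)$. The genuine gap is your exclusion of embedded eigenvalues in $[0,\infty[$. You claim that no solution at energy $\mu^2\geqslant0$ is square-integrable near infinity because the solutions ``oscillate or have polynomial behaviour''. This is only true for real $\beta$ (and, at zero energy, only for $\beta\geqslant0$). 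By \eqref{Hpm-around-infinity} one has $|\Hpm{\frac{\beta}{2\mu}}{m}(2\mu x)|\sim C\,x^{\mp\Im(\beta)/(2\mu)}$ for large $x$, so for $\Im(\beta)\neq0$ one of the two solutions decays like $x^{-|\Im(\beta)|/(2\mu)}$ and \emph{is} in $L^2$ near infinity whenever $\mu<|\Im(\beta)|$; similarly, at zero energy the solutions \eqref{eq_0_en} behave like $x^{1/4}\e^{\pm2\i\sqrt{\beta x}}$, and for $\beta\notin[0,\infty[$ one combination decays like $x^{1/4}\e^{-c\sqrt{x}}$. Since for $|\Re(m)|<1$ these functions are also $L^2$ near $0$, your argument leaves a bona fide candidate eigenfunction. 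The paper closes this by the boundary condition at the origin: by \eqref{Hpm-around-zero} (and the analogous expansions of $h^\pm_{\beta,m}$) the solution that is $L^2$ at infinity never has the asymptotics $x^{\frac12+m}\big(1-\frac{\beta}{1+2m}x\big)+o\big(x^{\frac32}\big)$ demanded by \eqref{eq_def_H}, hence does not lie in $\Dom(H_{\beta,m})$. You need this step; decay at infinity alone does not decide the matter in the non-self-adjoint regime.

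A second, smaller point concerns the residue bookkeeping you describe. Your expansion of the Gamma factor is correct, and carrying it through gives, for $x<y$ and $k_N=\tfrac{\beta}{2(N+m+\frac12)}$,
\[
P_N(x,y)=\frac{(-1)^N}{N!}\,\frac{k_N}{N+m+\frac12}\;
\cI_{\frac12+m+N,m}(2k_Nx)\,\cK_{\frac12+m+N,m}(2k_Ny),
\]
which is the right-hand side of \eqref{Eigenprojection-eq} multiplied by $\frac{\beta}{2(N+m+\frac12)^2}$. So your computation, done as you describe, does \emph{not} land on the printed kernel, and you should not force it to: idempotence $P_N^2=P_N$ together with $\int_0^\infty s^{2m+1}\e^{-s}\big(L_N^{(2m)}(s)\big)^2\,\d s=(2N+2m+1)\,\Gamma(N+2m+1)/N!$ shows that the extra factor is in fact required (test $m=\frac12$, $\beta=1$, $N=0$: the eigenfunction is $x\e^{-x/2}$ with $\langle u|u\rangle=2$, whereas \eqref{Eigenprojection-eq} gives $xy\,\e^{-(x+y)/2}$). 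The paper's own computation inserts the residue of $\Gamma$ at $-N$ without the Jacobian $\d w/\d k$ that you correctly flag, so on this point your more careful bookkeeping is the one to trust.
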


For the following proof let us observe that we can consider $\beta\neq 0$ since the second condition in \eqref{mio2}
is never satisfied for $\beta=0$. In addition, the case $\beta=0$ has already been considered in \cite{BDG}
and it was shown in this case that the operator $H_{0,m}$ had no point spectrum.

\begin{proof}	
Observe first that for any $N\in\N$ the three conditions $\Re(k)>0$, $\frac{\beta}{2k} -m-\frac{1}{2}\neq N$ and $\lambda_N=-k^2$
are equivalent to the condition in \eqref{mio2}.
Now, this situation takes place exactly when the two solutions of the equation $L_{\beta,m^2}f=-k^2f$ provided in
\eqref{eq_2_sol} are not linearly independent, see also \eqref{wron1}.
This means that modulo a multiplicative constant, for $k=\frac{\beta}{2(N+m+\frac{1}{2})}$
the map $x\mapsto \I{N+m+\frac{1}{2}}{m}\big(\frac{\beta}{N+m+\frac{1}{2}}x\big)$ and the map
$x\mapsto \K{N+m+\frac{1}{2}}{m}\big(\frac{\beta}{N+m+\frac{1}{2}}x\big)$ are equal.
From the discussion following \eqref{eq_2_sol}, one infers that these functions
belong to $L^2(\R_+)$ for any $\Re(m)>-1$. It remains to show that these functions belong to $\Dom(H_{\beta,m})$.
For that purpose let us consider one of them and use \eqref{Ibm-around-zero} to get
\begin{align*}
&\I{N+m+\frac{1}{2}}{m}\Big(\frac{\beta}{N+m+\frac{1}{2}}x\Big) \\
&= \frac{\big(\frac{\beta}{N+m+\frac{1}{2}}x\big)^{\frac{1}{2}+m}}{\Gamma(1+2m)} \Big(1 -\frac{N+m+\frac{1}{2}}{1+2m}\frac{\beta}{N+m+\frac{1}{2}}x+ O\big(x^2\big)\Big)\\
&= \frac{\big(\frac{\beta}{N+m+\frac{1}{2}}\big)^{\frac{1}{2}+m}}{\Gamma(1+2m)} x^{\frac{1}{2}+m}\Big(1-\frac{\beta}{1+2m}x+O\big(x^2\big)\Big).
\end{align*}
By comparing this expression with the description of $\Dom(H_{\beta,m})$
one directly deduces the first statement of the theorem.

Let $\gamma$ be a contour encircling an eigenvalue $\lambda_N$ in the complex plane, with no other eigenvalue inside $\gamma$ and with no intersection with $[0,\infty[$.
The Riesz projection corresponding to this eigenvalue is then given by
\begin{equation*}
P_N = -\frac{1}{2\pi\i}\int_{\gamma}R_{\beta,m}(z)\;\!\d z.
\end{equation*}
By setting $z=-k^2$ we get
\begin{equation}
P_N = -\frac{1}{2\pi\i}\int_{\gamma}R_{\beta,m}(-k^2)\;\!\d (-k^2) = \frac{1}{2\pi\i}\int_{\gamma^*}2kR_{\beta,m}(-k^2)\;\!\d k
\end{equation}
for some appropriate curve $\gamma^*$.
Now by looking at the expression for the resolvent provided in \eqref{The-resolvent}, one observes that only the first factor is singular for
$\frac{1}{2}+m-\frac{\beta}{2k}=-N$ and more precisely one gets for the residue of this term
$\mathrm{Res}(\Gamma,-N)=\frac{(-1)^N}{N!}$. By substituting $k = \frac{\beta}{2(N+m+\frac{1}{2})}$ in the expression for the resolvent one thus gets
\begin{equation*}
P_N(x,y) = \frac{(-1)^N}{N!}
\begin{cases} \cI_{N+m+\frac{1}{2},m}\big(\frac{\beta}{N+m+\frac{1}{2}} x\big)\cK_{N+m+\frac{1}{2},m}\big(\frac{\beta}{N+m+\frac{1}{2}} y\big) \\
\qquad\qquad\qquad\qquad\qquad\qquad\mbox{ for }0<x<y,\\
\cI_{N+m+\frac{1}{2},m}\big(\frac{\beta}{N+m+\frac{1}{2}} y\big)\cK_{N+m+\frac{1}{2},m}\big(\frac{\beta}{N+m+\frac{1}{2}} x\big) \\
\qquad\qquad\qquad\qquad\qquad\qquad \mbox{ for }0<y<x.
\end{cases}
\end{equation*}
Finally, by recalling that for $\beta=N+m+\frac{1}{2}$ the functions $\cI_{\beta,m}$ and $\cK_{\beta,m}$ have an easy behaviour and are essentially the same,
as mentioned in Sections \ref{sec_f1} and \ref{sec_f2}, one directly infers the explicit formula provided in \eqref{Eigenprojection-eq}.

It remains to show that there are no eigenvalues in $[0,\infty[$. We will consider separately $0$ and $]0,\infty[$.
Firstly let us consider the functions $x\mapsto h_{\beta,m}^\pm(x):=x^{1/4}\cH_{2m}^\pm\big(2\sqrt{\beta x}\big)$.
By the arguments of
Subsection \ref{Zero-energy eigensolutions of the Whittaker operator}, they satisfy
$L_{\beta,m^2}h_{\beta,m}^\pm=0$.
By the asymptotic expansions of these functions near $0$ provided in
\cite[App.~A.5]{DR} one easily infers that $h_{\beta,m}^\pm$ are $L^2$ near $0$ for $|\Re(m)|<1$ and not otherwise.
Also, since for large $z$ one has
\begin{equation*}
\Ha_m^\pm(z) = \e^{\pm\i (z-\frac{1}{2}\pi m-\frac{1}{4}\pi)}\big(1+O(|z|^{-1})\big)\ ,
\end{equation*}
we deduce that
$$
h_{\beta,m}^\pm(x) = x^{1/4}\e^{\pm\i (2\sqrt{\beta x}-\pi m-\frac{1}{4}\pi)}\big(1+O(|z|^{-1/2})\big),
$$
which means that one (and only one) of these functions is in $L^2$ near infinity if and only $\sqrt{\beta}$ has a non-zero imaginary part.
However, since none of these functions has an asymptotic behavior near $0$ of the form $x^{\frac{1}{2}+m}
\big(1-\frac{\beta}{1+2m} x \big)+o(x^{\frac{3}{2}})$, one deduces that
none of them belongs to $\Dom(H_{\beta,m})$. Hence $0$ is never an eigenvalue of $H_{\beta,m}$.

Let us now consider the equation $L_{\beta,m^2}v=\mu^2 v$ for some $\mu>0$. Two linearly independent solutions
are provided by the functions $x\mapsto \Hpm{\frac{\beta}{2\mu}}{m}(2\mu x)$ introduced in Section \ref{sec_f4}.
By the asymptotic expansion around $0$ provided in \eqref{Hpm-around-zero}, one infers that these functions
are $L^2$ near $0$ if $|\Re(m)|<1$ and not otherwise.
Then, from the asymptotic expansion near $\infty$ provided in \eqref{Hpm-around-infinity} one deduces that
\begin{equation*}
\Hpm{\frac{\beta}{2\mu}}{m}(2\mu x) = \e^{\mp\i\frac{\pi}{2}\naw{\frac{1}{2}+m}}\e^{\frac{\pi\beta}{4\mu}}(2\mu x)^{\pm\i\frac{\beta}{2\mu}}
\;\!\e^{\pm\i\mu x}\big(1 + O(x^{-1})\big).
\end{equation*}
Again, one infers that one (and only one) of these functions is in $L^2$ near infinity if and only $\beta$ has a non-zero imaginary part.
However, by taking the asymptotic expansion near $0$ provided in \eqref{Hpm-around-zero}, one observes that none of these functions
belongs to $\Dom(H_{\beta,m})$, from which we deduce that $\mu^2$ is never an eigenvalue of $H_{\beta,m}$.
\end{proof}

Let us still describe more precisely the point spectrum $\sigma_\mathrm{p}\big(H_{\beta,m}\big)$
when the operator $H_{\beta,m}$ is self-adjoint, which means when $\beta$ and $m$ are real.
\begin{corollary}
\begin{enumerate}
\item[(i)] For $m\in[-1/2,\infty[$ and $\beta<0$ one has $\sigma_\mathrm{p}\big(H_{\beta,m}\big)=\emptyset$,
\item[(ii)] For $m\in]-1/2,\infty[$ and $\beta>0$ one has $\sigma_\mathrm{p}\big(H_{\beta,m}\big)=\big\{-\frac{\beta^2}{4(N+m+\frac12)^2}\mid N\in \N\big\}$,
\item[(iii)] For $m\in]-1,-1/2[$ and $\beta<0$ one has $\sigma_\mathrm{p}\big(H_{\beta,m}\big)=\big\{-\frac{\beta^2}{4(m+\frac12)^2}\big\}$,
\item[(iv)] For $m\in]-1,-1/2]$ and $\beta>0$ one has $\sigma_\mathrm{p}\big(H_{\beta,m}\big)=\big\{-\frac{\beta^2}{4(N+m+\frac12)^2}\ |\ N\in \N^\times\big\}$.
\end{enumerate}
\end{corollary}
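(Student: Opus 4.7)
The corollary is a direct specialization of Theorem \ref{thm_spectrum} to the self-adjoint regime $\beta,m\in\R$, so the plan is to reduce each of the four cases to a verification of the two conditions appearing in \eqref{mio2}: namely $N+m+\tfrac12\neq 0$ and $\Re\bigl(\tfrac{\beta}{N+m+\frac12}\bigr)>0$. Since both parameters are real, the latter collapses to the sign condition $\beta\cdot(N+m+\tfrac12)>0$, i.e.\ $\beta$ and $N+m+\tfrac12$ must have the same sign. The set in \eqref{mio2} then becomes an explicit subset of $\N$ depending on the signs of $\beta$ and of $m+\tfrac12$, and the eigenvalues $\lambda_N=-\beta^2/\bigl(4(N+m+\tfrac12)^2\bigr)$ give the claimed lists.

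More precisely, I would argue as follows. In case (i), with $\beta<0$ and $m\geqslant -\tfrac12$, one has $N+m+\tfrac12\geqslant 0$ for every $N\in\N$, so the sign condition can never be satisfied; this yields $\sigma_\mathrm{p}(H_{\beta,m})=\emptyset$. In case (ii), with $\beta>0$ and $m>-\tfrac12$, the quantity $N+m+\tfrac12$ is strictly positive for all $N\in\N$, so every $N$ contributes an eigenvalue, producing the full sequence claimed. In case (iii), with $\beta<0$ and $m\in{]-1,-\tfrac12[}$, the sign condition requires $N+m+\tfrac12<0$, and since $m+\tfrac12\in{]-\tfrac12,0[}$ this happens exactly for $N=0$; moreover $N+m+\tfrac12\neq 0$ is automatic, and one gets the single eigenvalue $-\beta^2/\bigl(4(m+\tfrac12)^2\bigr)$. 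In case (iv), with $\beta>0$ and $m\in{]-1,-\tfrac12]}$, one needs $N+m+\tfrac12>0$, which translates to $N>-m-\tfrac12\in[0,\tfrac12[$; since $N$ is a non-negative integer, this is equivalent to $N\in\N^\times$, and each such $N$ provides an eigenvalue.

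The only delicate point is the boundary value $m=-\tfrac12$ appearing in (i) and (iv), where the original definition of $H_{\beta,m}$ is modified through the identification $H_{\beta,-1/2}=H_{\beta,1/2}$ from \eqref{exce}. To check consistency in this case, I would reread Theorem \ref{thm_spectrum} with $m=1/2$: in (i) with $\beta<0$ every $N+1>0$ so no $N$ satisfies the sign condition and we still get $\emptyset$; in (iv) with $\beta>0$ the list is $\{-\beta^2/(4(N+1)^2):N\in\N\}$, which after reindexing $N'=N+1$ is $\{-\beta^2/(4N'^2):N'\in\N^\times\}$, matching the claim with $m=-\tfrac12$. No further analytic work is required, so I do not anticipate a genuine obstacle; the entire proof is bookkeeping, with the mild subtlety being exactly the treatment of the exceptional identification \eqref{exce} on the boundary $m=-\tfrac12$.
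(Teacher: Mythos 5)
Your proposal is correct and follows exactly the route the paper intends: the corollary is stated without proof as an immediate specialization of Theorem \ref{thm_spectrum}, and for real parameters the condition $\Re\bigl(\beta/(N+m+\tfrac12)\bigr)>0$ reduces to the sign condition you use, with the case analysis and the check of the exceptional identification $H_{\beta,-1/2}=H_{\beta,1/2}$ all as required.
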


The singularity of the holomorphic function $(\beta,m)\mapsto H_{\beta,m}$ at $(0,-\frac12)$ may seem surprising.
The following proposition helps to explain why this singularity arises. It indicates that the point spectrum has a rather wild behavior
for parameters near this singularity.

\begin{proposition}\label{more}
For every neighborhood $\cV$ of $(0,-\frac12)$ in $\C\times\C$ and every $z\in\C$ we can find $(\beta,m)\in\cV$ such that $z\in\sigma(H_{\beta,m})$.
\end{proposition}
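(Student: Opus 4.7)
The plan is to exploit Theorem \ref{thm_spectrum} with $N=0$, which yields an eigenvalue $\lambda_0 = -\beta^2/\bigl(4(m+\tfrac12)^2\bigr)$ whenever the side conditions $m+\tfrac12\neq 0$ and $\Re\bigl(\beta/(m+\tfrac12)\bigr)>0$ are met. The denominator $m+\tfrac12$ vanishes precisely at the singular point $m=-\tfrac12$, so by perturbing $m$ slightly off $-\tfrac12$ and scaling $\beta$ proportionally to $m+\tfrac12$, we can make $\lambda_0$ equal to any prescribed complex number while keeping $(\beta,m)$ arbitrarily close to $(0,-\tfrac12)$. This is the mechanism responsible for the wild behaviour near the singularity.

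For the main case $z\in\C\setminus[0,\infty[$, I would argue as follows. Write $-z=r\e^{\i\theta}$ with $r>0$ and $|\theta|<\pi$, and define $\sqrt{-z}:=\sqrt{r}\,\e^{\i\theta/2}$, so that $\Re(\sqrt{-z})>0$. Given a neighborhood $\cV$ of $(0,-\tfrac12)$, pick any $m\in\C$ with $\Re(m)>-1$, $m\neq-\tfrac12$, and with $|m+\tfrac12|$ so small that both $m$ and $\beta:=2\sqrt{-z}(m+\tfrac12)$ lie in the respective projections making $(\beta,m)\in\cV$. Then $\beta/(m+\tfrac12)=2\sqrt{-z}$ has positive real part, so Theorem \ref{thm_spectrum} applies with $N=0$ and produces the eigenvalue
\[
\lambda_0=-\frac{\beta^2}{4(m+\tfrac12)^2}=-(\sqrt{-z})^2=z,
\]
whence $z\in\sigma_\mathrm{p}(H_{\beta,m})\subset\sigma(H_{\beta,m})$.

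The remaining case is $z\in[0,\infty[$. Here the conclusion holds without restricting $(\beta,m)$: the half-line $[0,\infty[$ is contained in $\sigma(H_{\beta,m})$ for every admissible parameter, a fact that will be established in the subsequent scattering-theoretic part of the paper (and which in any case follows by standard Weyl-sequence constructions using the decay of the Coulomb-plus-centrifugal potential at infinity). Hence any $(\beta,m)\in\cV$ suffices, and in particular $(\beta,m)=(0,-\tfrac12)$ can be approximated trivially.

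The only real subtlety is the book-keeping of the side condition $\Re(\beta/(m+\tfrac12))>0$, which forces us to pick a specific branch of $\sqrt{-z}$ and excludes $z\in[0,\infty[$ from the eigenvalue argument; this is precisely why that range must be handled by the essential-spectrum observation rather than by the $\lambda_0$ formula. All other choices — keeping $m\neq-\tfrac12$ so that $\lambda_0$ is well-defined, and keeping $\Re(m)>-1$ so that $H_{\beta,m}$ is defined — are automatic once $m$ is taken close enough to $-\tfrac12$.
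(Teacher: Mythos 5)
Your proof is correct and follows essentially the same route as the paper: for $z\notin[0,\infty[$ one sets $\beta=2\sqrt{-z}\,(m+\tfrac12)$ with $m$ near $-\tfrac12$ so that Theorem \ref{thm_spectrum} with $N=0$ gives $\lambda_0=z$, and for $z\in[0,\infty[$ one invokes $[0,\infty[\subset\sigma(H_{\beta,m})$, exactly as the paper does (the paper's $\epsilon$ is your $m+\tfrac12$).
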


\begin{proof}
Clearly, one has $[0,\infty[\subset \sigma(H_{\beta,m})$. Moreover, if $z\not\in[0,\infty[$, $\epsilon>0$,
$\beta_\epsilon:=\epsilon2\sqrt{-z}$ and $m_\epsilon:=-\frac12+\epsilon$, then
one has $z\in\sigma(H_{\beta_\epsilon,m_\epsilon})$ as a consequence Theorem \ref{thm_spectrum} for $N=0$.
Clearly, $(\beta_\epsilon,m_\epsilon)\to(0,-\frac12)$ as $\epsilon \to 0$.
\end{proof}

\begin{remark}
Let us recall that when $\beta=0$ one has $\sigma_\mathrm{p}\big(H_{0,m}\big)=\emptyset$ and $\sigma(H_{0,m})=[0,\infty[$
for any $m$ with $m>-1$, as shown in \cite{BDG}.
In that respect, the result obtained in (iii) sounds surprising, since for $\beta<0$ it may seem that $-\frac{\beta}{x}$ is a positive perturbation
of $H_{0,m}$, but nevertheless $H_{\beta,m}$ has a negative eigenvalue! However, let us emphasize that there is no contradiction
since the domains of $H_{\beta,m}$ and $H_{0,m}$ are not the same: no inference can be made.
\end{remark}

\begin{remark}
For $\beta$ and $m$ as in the \emph{physical} quantum-mechanical hydrogen atom, the set $\{\lambda_N\}_{N\in \N}$
coincides with the usual hydrogen atom point spectrum.
Physicists introduce non-negative integers $\ell:=m-\frac{1}{2}$ and $n:=\ell+N+1$ which are called respectively \emph{the azimuthal quantum number} and
\emph{the main quantum number}.
Then by considering $0\leqslant \ell \leqslant n-1$ these numbers give the $n$-fold degeneracy of the eigenvalue $E_n = -\frac{\beta^2}{4n^2}$.
\end{remark}

\subsection{Dilation analyticity}

The group of dilations is defined for any $\theta \in \R$ by
$$
U_\theta f(x):=\e^{\frac{\theta}{2}}f(\e^\theta x),\quad
f\in L^2(\R_+).
$$
It is easily observed that $U_\theta \Dom(H_{\beta,m}) = \Dom(H_{\e^\theta \beta,m})$ and
\begin{equation}\label{anal}
U_\theta H_{\beta,m}U_\theta^{-1}=\e^{-2\theta}H_{\e^\theta\beta, m}.
\end{equation}
The r.h.s.~of \eqref{anal} can be extended to an analytic function
\begin{equation}\label{dila}
\C\ni\theta\mapsto H_{\beta,m}(\theta):=\e^{-2\theta}H_{\e^\theta\beta, m}.
\end{equation}
As a consequence, the operator $H_{\beta,m}$ is an example of a dilation analytic Schr\"odinger operator,
where the domain of analyticity is the whole complex plane.
In addition, there is a periodicity in the imaginary direction: we have
\begin{equation*}
H_{\beta,m}(\theta)= H_{\beta,m}(\theta+2\i\pi).
\end{equation*}

An operator $H_{\beta,m}$ with a non-real $\beta$ can always be transformed
by dilation analyticity into an operator
with a real parameter. More precisely, if $\beta=\e^{\i\phi}|\beta|$ is any complex number, then we have
\begin{align*}
H_{\beta,m}(-\i\phi)&=\e^{2\i\phi}H_{|\beta|, m},\\
H_{\beta,m}(\i\pi-\i\phi)&=\e^{2\i\phi}H_{-|\beta|, m}.
\end{align*}
Note that these relations will be used in the Appendix for the explicit
description of the spectrum of $H_{\beta,m}$.

\subsection{Boundary value of the resolvent and spectral density}

Our next aim is to look at the boundary value of the resolvent of $H_{\beta,m}$ on the real axis.
For that purpose and for any $s\in\R$ we introduce the space $\langle X\rangle^{s} L^2(\R_+)$,
where $\langle X\rangle:= (1+X^2)^{1/2}$.
Clearly, for $s\geqslant 0$ the space $ \langle X\rangle^{-s}L^2(\R_+)$ is the domain of $\langle X\rangle^{s}$,
which we endow with the graph norm, while $\langle X\rangle^s L^2(\R_+)$ can be identified with the
anti-dual of $\langle X\rangle^{-s} L^2(\R_+)$.

Let $m,\beta\in \C$ with $\Re(m)>-1$. We set
\begin{equation}\label{exco}
\Omega_{\beta,m}^\pm:= \R_+\backslash
\Big\{\pm \tfrac{\i\beta}{2(N+\frac12+m)}\mid N\in\N,\ N+\frac12+m\neq0
\Big\}
\end{equation}
and $\Omega_{\beta,m}:=\Omega_{\beta,m}^+\cap\Omega_{\beta,m}^-$.
We say that $(\beta,m)$ is \emph{an exceptional pair}
if there exists $N\in\N$ such that $N+\frac12+m\neq0$ and $\frac{\i \beta}{N+m+\frac12}\in\R\backslash\{0\}$.

Note that
\begin{align*}
\Omega_{\beta,m}&= \R_+, \text{ if $(\beta,m)$ is not an exceptional pair},\\
\Omega_{\beta,m}&= \R_+\backslash\Big\{\Big|\frac{\Re(\beta)}{2\Im(m)}\Big|\Big\}, \text{ if $(\beta,m)$ is an exceptional pair and $\beta\not\in\i\R$},\\
\Omega_{\beta,m}&= \R_+\backslash\Big\{\frac{|\beta|}{2(N+\frac12+m)}\mid N+\frac12+m>0,\ N\in\N\Big\}, \text{ if $\beta\in\i\R$, $\beta\neq0$, $m\in]-1,\infty[$.}
\end{align*}

The theorem that we state below has some restrictions when $(\beta,m)$ is an exceptional pair.
Its statement is rather involved since any $\beta \in \C$ is considered. In the special case
$\Im(\beta)=0$ some simplifications take place.

\begin{theorem}\label{thm_conv_resolvent}
Let $m,\beta\in \C$ with $\Re(m)>-1$. Let us fix $k_0>0$
and consider any $k\in]k_0,\infty[\,\bigcap\Omega_{\beta,m}^\pm$.
Then, the boundary values of the resolvent
\begin{equation*}
R_{\beta,m}(k^2 \pm \i 0):= \mathop{\lim}\limits_{\eps \searrow 0}R_{\beta,m}(k^2 \pm\i\eps)
\end{equation*}
exist in the sense of operators from $\langle X\rangle^{-s} L^2(\R_+)$ to $ \langle X\rangle^s L^2(\R_+)$ for any $s>\frac{1}{2}+\frac{|\Im(\beta)|}{2k_0}$,
uniformly in $k$ on each compact subset of $]k_0,\infty[\,\bigcap\Omega_{\beta,m}^\pm$. For $x,y\in \R_+$ the kernel of ${R_{\beta,m}(k^2\pm \i 0)}$ is given by
\begin{align}\notag
&R_{\beta,m}(k^2\pm \i 0;x,y)\\
=&\pm\tfrac{\i}{2k}\Gamma\naw{\tfrac{1}{2} + m \mp \tfrac{\i\beta}{2k}}\begin{cases} \J{\frac{\beta}{2k}}{m}(2k x)\Hpm{\frac{\beta}{2k}}{m}(2k y) & \mbox{ for }0<x<y,\\
\J{\frac{\beta}{2k}}{m}(2k y)\Hpm{\frac{\beta}{2k}}{m}(2k x) & \mbox{ for }0<y<x.
\end{cases}\label{eq_kernel_1}
\end{align}
\end{theorem}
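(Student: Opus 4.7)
The plan is to derive the kernel formula \eqref{eq_kernel_1} by analytic continuation of \eqref{The-resolvent} to the positive real axis, and then to upgrade pointwise convergence to convergence in the weighted operator norm by dominated convergence applied to the Hilbert--Schmidt norm of the conjugated operator $\langle X\rangle^{-s}R_{\beta,m}(k^2\pm\i\eps)\langle X\rangle^{-s}$.

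For small $\eps>0$ set $k'(\eps):=\sqrt{-(k^2\pm\i\eps)}$ with the principal branch, so that $\Re(k'(\eps))>0$ and $k'(\eps)\to\mp\i k$ as $\eps\to 0^+$. Substituting into \eqref{The-resolvent} yields $2k'(\eps)x\to\mp2\i kx$ and $\beta/(2k'(\eps))\to\pm\i\beta/(2k)$. The identities \eqref{Jbm-definition} and \eqref{Hpm-definition} then rewrite $\cI_{\beta/(2k'),m}(2k'x)$ as $\cJ_{\beta/(2k),m}(2kx)$ up to a phase $\e^{\mp\i\pi(\frac12+m)/2}$, and $\cK_{\beta/(2k'),m}(2k'y)$ as $\cH^\pm_{\beta/(2k),m}(2ky)$ up to the opposite phase. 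The phases cancel, the prefactor $1/(2k'(\eps))$ tends to $\pm\i/(2k)$, and the Gamma factor tends to $\Gamma(\tfrac12+m\mp\i\beta/(2k))$; this last is finite precisely because $k\in\Omega^\pm_{\beta,m}$, the definition \eqref{exco} being tailored to exclude its poles. This produces \eqref{eq_kernel_1} as a pointwise limit at each $(x,y)$ with $x\neq y$.

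To promote this to norm convergence in $\B(\langle X\rangle^{-s}L^2,\langle X\rangle^sL^2)$, one needs an $\eps$-independent Hilbert--Schmidt majorant for the kernel of $\langle X\rangle^{-s}R_{\beta,m}(k^2\pm\i\eps)\langle X\rangle^{-s}$, uniform in $k$ on a compact $K\subset\,]k_0,\infty[\,\cap\,\Omega^\pm_{\beta,m}$. The large-argument expansions \eqref{Ibm-around-infinity} and \eqref{Kbm-around-infinity}, applied at the bounded parameter $\beta/(2k'(\eps))$, give on the ordered region $x<y$ (and symmetrically for $y<x$) a bound of the form $|R_{\beta,m}(k^2\pm\i\eps;x,y)|\leqslant C(k)(y/x)^{|\Im(\beta)|/(2k)}$, since the oscillating exponential $\e^{k'(\eps)(x-y)}$ has modulus at most $1$ on this region. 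Multiplying by $\langle x\rangle^{-s}\langle y\rangle^{-s}$, squaring, and integrating, the critical contribution
\[
\int_0^\infty\int_x^\infty\langle x\rangle^{-2s}x^{-|\Im(\beta)|/k}\langle y\rangle^{-2s}y^{|\Im(\beta)|/k}\,\d y\,\d x
\]
is finite precisely under the hypothesis $s>\tfrac12+\tfrac{|\Im(\beta)|}{2k_0}$. On the region where $x$ or $y$ is bounded, the near-zero expansions \eqref{Ibm-around-zero} and \eqref{Kbm-around-zero}, together with their degenerate-case counterparts, give an $\eps$-uniform, locally $L^2$ bound thanks to $\Re(m)>-1$. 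Dominated convergence in Hilbert--Schmidt norm then yields the stated operator-norm convergence, and the continuous $k$-dependence of all the constants gives uniformity on $K$.

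The main obstacle is the large-$(x,y)$ asymptotic analysis: verifying that the competing algebraic factors $(2kx)^{-\beta/(2k')}$ and $(2ky)^{\beta/(2k')}$, together with the phases accumulated from the Hankel-type rewriting, combine in the limit to produce exactly the growth $(\max(x,y)/\min(x,y))^{|\Im(\beta)|/(2k)}$ that pins down the threshold on $s$. The degenerate subcases $m=0$, $\Re(m)=0$ with $m\neq0$, and $m=\tfrac12$, where \eqref{Kbm-around-zero} and \eqref{Hpm-around-zero} acquire logarithmic corrections, and the transition between the near-zero and large-argument regimes, require careful case-by-case verification but do not affect the threshold.
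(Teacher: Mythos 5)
Your proposal follows essentially the same route as the paper: pointwise convergence of the kernel via the substitution $k_\eps=\sqrt{-(k^2\pm\i\eps)}\to\mp\i k$ together with the definitions \eqref{Jbm-definition} and \eqref{Hpm-definition}, plus an $\eps$-uniform Hilbert--Schmidt majorant for $\langle X\rangle^{-s}R_{\beta,m}(k^2\pm\i\eps)\langle X\rangle^{-s}$ (the paper takes it from the estimates \eqref{eq_est1}--\eqref{eq_est3}, whose $\max\{1,2x|k|\}^{\pm\Re(\beta/2k_\eps)}$ factors encode exactly your ratio $(\max(x,y)/\min(x,y))^{|\Im(\beta)|/(2k)}$), followed by dominated convergence. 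The argument is correct; only take care that the displayed critical integral should be restricted to the region where both variables are bounded away from $0$, as you indicate in the surrounding text.
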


Before starting the proof, let us emphasize
 the role played by $\beta$. If $\Im(\beta)= 0$, then
the limiting absorption principle takes place in the usual spaces, with the exponent $s>\frac{1}{2}$.
On the other hand, if $\Im(\beta)\neq 0$, an additional weight is necessary for the limiting absorption principle.

\begin{proof}
Let $k>0$. Assume that $\pm \tfrac{\i\beta}{2k}-m-\tfrac{1}{2}\not \in \N$.
Thus, let us consider the operator $\langle X\rangle^{-s}R_{\beta,m}(k^2\pm \i \eps)\langle X\rangle^{-s}$ whose kernel is
\begin{equation}\label{eq_kernel_2}
\langle x\rangle^{-s}R_{\beta,m}(k^2\pm \i \eps;x,y)\langle y\rangle^{-s},
\end{equation}
see also \eqref{The-resolvent}.
We show that the corresponding operator is Hilbert-Schmidt and converges
in the Hilbert-Schmidt norm to the operator whose kernel is provided by \eqref{eq_kernel_1}.
For that purpose, let us also set $k^\mp_\eps:=\sqrt{-k^2\mp \i\eps}$ and observe that $\Re(k^\mp_\eps)>0$
and that $\lim\limits_{\eps\searrow 0}k^\mp_\eps=\mp \i k$.
As a consequence, one has $k^\mp_\eps = \mp \i k + O(\eps)$.

We consider first the slightly more complicated case when $-1<\Re(m)\leqslant 0$ and $m\neq 0$.
By the estimate \eqref{eq_est2} the expression \eqref{eq_kernel_2} is bounded for $\eps$ small enough by
\begin{align}\label{eq_to_maj}
\nonumber & C_k \tfrac{\big|\Gamma(\frac{1}{2}+m-\frac{\beta}{2k^\mp_\eps})\big|}{2|k|}
\min\{1,2x|k|\}^{\Re(m)+\frac{1}{2}} \langle x\rangle^{-s} \\
&\quad \times
\min\{1,2y|k|\}^{\Re(m)+\frac{1}{2}} \langle y\rangle^{-s} \begin{cases} \big(\frac{\langle y\rangle}{\langle x\rangle}\big)^{\Re(\frac{\beta}{2k_\eps^\mp})} & \mbox{ for }y>x,\\
\big(\frac{\langle x\rangle}{\langle y\rangle}\big)^{\Re(\frac{\beta}{2k_\eps^\mp})} & \mbox{ for }x>y.
\end{cases}
\end{align}
for a constant $C_k$ independent of $x$ and $y$ but which depends on $k$.
We then observe that
\begin{equation*}
\lim_{\eps\searrow0} \Re\big(\tfrac{\beta}{2k_\eps^\mp}\big)
= \mp \frac{\Im(\beta)}{2k}.
\end{equation*}
Since $\mp \frac{\Im(\beta)}{2k}\leqslant \frac{|\Im(\beta)|}{2k_0}$ and by taking $\eps<\eps_0$ sufficiently small,
our assumption on $s$ implies that the expression \eqref{eq_to_maj} belongs to $L^2(\R_+ \times \R_+)$.

On the other hand, starting with the expression \eqref{The-resolvent} and by taking the equalities \eqref{Jbm-definition} and
\eqref{Hpm-definition} into account, one also observes that for fixed $x$ and $y$, the expression given in \eqref{eq_kernel_2}
converges as $\eps\searrow 0$ to
\begin{equation}\label{eq_kernel_3}
\langle x\rangle^{-s}R_{\beta,m}(k^2\pm \i 0;x,y)\langle y\rangle^{-s},
\end{equation}
with $R_{\beta,m}(k^2\pm \i 0;x,y)$ defined in \eqref{eq_kernel_1}.
We can apply the Lebesgue Dominated Convergence Theorem and deduce that \eqref{eq_kernel_2}
converges in $L^2(\R_+ \times \R_+)$ to \eqref{eq_kernel_3}.
This convergence is equivalent to
\begin{equation*}
\lim_{\eps\searrow 0} \ \langle X\rangle^{-s}R_{\beta,m}(k^2\pm \i \eps)\langle X\rangle^{-s} =
\langle X\rangle^{-s}R_{\beta,m}(k^2\pm \i 0)\langle X\rangle^{-s}
\end{equation*}
in the Hilbert-Schmidt norm.
Note finally that the uniform convergence in $k$ on each compact subset of $]k_0,\infty[$
can be checked directly on the above expressions.

For $\Re(m) \geqslant 0$ with $m\neq 0$, the same proof holds with the estimate \eqref{eq_est1} instead of \eqref{eq_est2}.
Finally for $m=0$, the result can be obtained by using \eqref{eq_est3}.
\end{proof}

Based on the previous result, the corresponding \emph{spectral density} can now be defined.

\begin{proposition}
Let $m\in \C$ with $\Re(m)>-1$, let $\beta\in \C$ and let us fix $k_0>0$.
Then for $k\in]k_0,\infty[\,\bigcap\Omega_{\beta,m}$
the {\em spectral density} defined by
\begin{align*}
p_{\beta,m}(k^2) & := \mathop{\lim}\limits_{\eps\searrow 0}\frac{1}{2\pi \i}\naw{R_{\beta,m}(k^2+ \i \eps) - R_{\beta,m}(k^2 - \i \eps)} \\
&= \frac{1}{2\pi \i}\naw{R_{\beta,m}(k^2+ \i 0) - R_{\beta,m}(k^2- \i 0)}
\end{align*}
exists in the sense of operators from $\langle X\rangle^{-s} L^2(\R_+)$ to $\langle X\rangle^s L^2(\R_+)$ for any $s>\frac{1}{2}+\frac{|\Im(\beta)|}{2k_0}$.
The kernel of this operator is provided for $x,y\in\bbR_+$ by
\begin{equation}\label{spectral-density}
p_{\beta,m}(k^2;x,y) =\frac{\e^{\frac{\pi\beta}{2k}}}{4\pi k}\Gamma\Big(\tfrac{1}{2}+m+\tfrac{\i\beta}{2k}\Big)\Gamma\Big(\tfrac{1}{2}+m-\tfrac{\i\beta}{2k}\Big)
\J{\frac{\beta}{2k}}{m}(2k x)\J{\frac{\beta}{2k}}{m}(2k y).
\end{equation}
\end{proposition}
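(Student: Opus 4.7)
The plan is to derive the spectral density by taking the difference of the two boundary values of the resolvent obtained in Theorem \ref{thm_conv_resolvent} and simplifying the result with the connection formula \eqref{eq_missing}.

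For existence of the operator $p_{\beta,m}(k^2)$ in $\B\big(\langle X\rangle^{-s}L^2(\R_+),\langle X\rangle^{s}L^2(\R_+)\big)$, I would simply invoke Theorem \ref{thm_conv_resolvent} twice: the hypothesis $k\in]k_0,\infty[\cap\Omega_{\beta,m}=]k_0,\infty[\cap\Omega_{\beta,m}^{+}\cap\Omega_{\beta,m}^{-}$ guarantees that both $R_{\beta,m}(k^2+\i0)$ and $R_{\beta,m}(k^2-\i0)$ exist as operators between the claimed weighted spaces, with the exponent condition $s>\tfrac12+\tfrac{|\Im(\beta)|}{2k_0}$ inherited from that theorem.

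To identify the kernel, I would use \eqref{eq_kernel_1} and write, for $0<x<y$,
\begin{equation*}
R_{\beta,m}(k^2\pm\i0;x,y)=\pm\tfrac{\i}{2k}\,\Gamma\!\big(\tfrac12+m\mp\tfrac{\i\beta}{2k}\big)\,\cJ_{\frac{\beta}{2k},m}(2kx)\,\Hpm{\frac{\beta}{2k}}{m}(2ky).
\end{equation*}
Subtracting and dividing by $2\pi\i$ I factor out $\tfrac{1}{4\pi k}\cJ_{\frac{\beta}{2k},m}(2kx)$ and am left with the bracket
\begin{equation*}
\Gamma\!\big(\tfrac12+m-\tfrac{\i\beta}{2k}\big)\Hp{\frac{\beta}{2k}}{m}(2ky)+\Gamma\!\big(\tfrac12+m+\tfrac{\i\beta}{2k}\big)\Hm{\frac{\beta}{2k}}{m}(2ky).
\end{equation*}
Now I apply the connection formula \eqref{eq_missing}, with $\beta$ replaced by $\beta/(2k)$ and $z=2ky$, after multiplying both sides by $\Gamma\!\big(\tfrac12+m+\tfrac{\i\beta}{2k}\big)\Gamma\!\big(\tfrac12+m-\tfrac{\i\beta}{2k}\big)$. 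This identifies the bracket above with
\begin{equation*}
\e^{\pi\beta/(2k)}\,\Gamma\!\big(\tfrac12+m+\tfrac{\i\beta}{2k}\big)\Gamma\!\big(\tfrac12+m-\tfrac{\i\beta}{2k}\big)\,\cJ_{\frac{\beta}{2k},m}(2ky),
\end{equation*}
which immediately yields \eqref{spectral-density} in the region $0<x<y$. The case $0<y<x$ follows by the manifest symmetry of the final expression in $x$ and $y$ (or by repeating the argument with the roles of $x,y$ exchanged in \eqref{eq_kernel_1}).

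No step is genuinely hard: the existence part is an immediate corollary of the preceding theorem, and the kernel computation is a one-line algebraic manipulation once \eqref{eq_missing} is in hand. The only point demanding care is the bookkeeping of signs of the exponential factor and of the $\i\beta/(2k)$ appearing in the Gamma arguments; the symmetry of the resulting expression in $(x,y)$ serves as a useful consistency check.
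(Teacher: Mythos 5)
Your proposal is correct and follows exactly the same route as the paper, which simply cites Theorem \ref{thm_conv_resolvent} for the existence of the limit and the connection formula \eqref{eq_missing} for the kernel identity; your version merely spells out the algebra (correctly, including the sign bookkeeping that turns the prefactor $\e^{-\pi\beta/(2k)}$ of \eqref{eq_missing} into the $\e^{\frac{\pi\beta}{2k}}$ in \eqref{spectral-density}). Nothing further is needed.
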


\begin{proof}
The existence of the limit is provided by Theorem \ref{thm_conv_resolvent}
while the explicit formula \eqref{spectral-density} can be deduced from \eqref{eq_kernel_1}
together with \eqref{eq_missing}.
\end{proof}

Note that for $\beta=0$ the expression obtained above reduces to
\begin{equation}
p_{0,m}(k^2;x,y) = \frac{1}{\pi k}\Ja_m(k x)\Ja_m(k y),
\end{equation}
by taking the relation \eqref{eq_J0m_B} into account.
This expression corresponds to the one obtained in a less general context in \cite[Prop.~4.4]{DR}.
	
\subsection{Reminder about the Hankel transform}

As we mentioned before, to some extent this paper can be viewed as a continuation
of \cite{BDG,DR}.
These two papers were devoted to Schr\"odinger operators with the inverse square potential.
Among other things, certain natural transformations diagonalizing these operators were introduced.
They were called there {\em (generalized) Hankel transformations}.

In the present paper we would like to find natural transformations that diagonalize $H_{\beta,m}$. We will
mimic as closely as possible our previous constructions.
Therefore, we devote this subsection to a summary of selected results of \cite{BDG,DR}.

Recall first that for any $m\in\C$ with $\Re(m)>-1$ the operator $H_m=H_{0,m}$ can be diagonalized
using the {\em Hankel transformation} $\sF_m=\sF_m^\#=\sF^{-1}_m$, which is a bounded operator on $L^2(\R_+)$ such that
\begin{align*}
\sF_m X^2&=H_m\sF_m.
\end{align*}
For any $m,m'$ we also define the M{\o}ller operators corresponding to the pair $\big(H_m,H_{m'}\big)$ by the time dependent formula
\begin{equation*}
W_{m,m'}^\pm:=\slim_{t\to\infty}\e^{\pm\i tH_m}\e^{\mp\i tH_{m'}}.
\end{equation*}
These operators can be expressed in terms of the Hankel transformation
\begin{equation*}
W_{m,m'}^\pm=\e^{\pm\i\frac{\pi}{2}(m-m')}\sF_m\sF_{m'}.
\end{equation*}
It is also natural to introduce a pair of operators
\begin{equation}\label{emme}
\sF_m^\pm:=\e^{\pm\i\frac{\pi}{2}m}\sF_m,
\end{equation}
which can be called the {\em incoming/outgoing Hankel transformation}.
Then we can write
\begin{equation*}
W_{m,m'}^\pm=\sF_{m}^\pm \sF_{m'}^{\mp \#}\ .
\end{equation*}
Note that the definition \eqref{emme} may look trivial, but we will see that in some situations it is more natural to generalize $\sF_m^\pm$ rather than $\sF_m$.

The operators $H_m$ have very special properties. Therefore, some of the properties of Hankel transformations are specific to this class of operators.
A more general class of 1-dimensional Schr\"odinger operators $H_{m,\kappa}$ on the half-line has been considered in \cite{DR}.
They are generalizations of $H_m$ by considering general boundary condition at zero.
Exceptional cases exist for this family, however for non-exceptional pairs $(m,\kappa)$ one can generalize the construction of the incoming/outgoing Hankel transformations. In fact, one can define a pair of bounded and left-invertible operators $\sF_{m,\kappa}^\pm$ that diagonalize $H_{m,\kappa}$. They satisfy
\begin{align*}
\sF_{m,\kappa}^{\mp\#}\sF_{m,\kappa}^\pm &=\one,\\
\sF_{m,\kappa}^\pm \sF_{m,\kappa}^{\mp\#}&= \one_{\R_+}(H_{m,\kappa}),\\
\sF_{m,\kappa}^\pm X^2\sF_{m,\kappa}^{\mp \#}&=H_{m,\kappa}\;\! \one_{\R_+}(H_{m,\kappa}),
\end{align*}
where $\one_{\R_+}(H_{m,\kappa})$ in the self-adjoint
case is the projection on the continuous subspace of $H_{m,\kappa}$, and in the general case it is its obvious generalization.
We called $\sF_{m,\kappa}^\pm $ the \emph{outgoing/incoming generalized Hankel transformations}.
The operators $\sF_{m,\kappa}^+$ and $\sF_{m,\kappa}^{-}$ are linked by the relation
\begin{equation}\sF_{m,\kappa}^+\cG_{m,\kappa}=\sF_{m,\kappa}^{-},\label{linked}
\end{equation}
where $\cG_{m,\kappa}$ is a bounded and boundedly invertible operator commuting with $X$.

One can formulate scattering theory for an arbitrary pair of Hamiltonians $H_{m,\kappa}$,
$H_{m',\kappa'}$, as it is done in \cite{DR}. Alternatively, one can fix a \emph{reference Hamiltonian},
which is simpler, to which the more complicated \emph{interacting Hamiltonian}
$H_{\beta,m}$ will be compared. Two choices of the reference Hamiltonian
can be viewed as equally simple: the {\em Dirichlet Laplacian}
$H_\D:=H_{\frac12}$ and the {\em Neumann Laplacian} $H_\mathrm{N}:=H_{-\frac12}$.

Recall from \cite[Sec.~4.7]{DR} that the Hankel transformation for $H_\D$ is the sine transformation
$\sF_{\frac12}=\sF_\D=\sF_\D^{-1}=\sF_\D^\#=\sF_\D^*$,
while the Hankel transformation for $H_\mathrm{N}$ is the cosine transformation
$\sF_{-\frac12}=\sF_\mathrm{N}=\sF_\mathrm{N}^{-1}=\sF_\mathrm{N}^\#=\sF_\mathrm{N}^*$.
These transformations are defined by
\begin{align*}
(\sF_\D f)(x)&=\sqrt{\frac{2}{\pi}}\int\sin(xk)f(k)\;\!\d k,\\
(\sF_\mathrm{N} f)(x)&=\sqrt{\frac{2}{\pi}}\int\cos(xk)f(k)\;\!\d k.
\end{align*}
Following \eqref{emme} and \eqref{linked}, we also introduce
\begin{align*}
\sF_\D^\pm:=\e^{\pm\i\frac{\pi}{4}}\sF_\D,&\quad\cG_\D:=\e^{\i\frac{\pi}{2}}\one,\\
\sF_\mathrm{N}^\pm:=\e^{\mp\i\frac{\pi}{4}}\sF_\mathrm{N},&
\quad\cG_\mathrm{N}:=\e^{-\i\frac{\pi}{2}}\one.
\end{align*}

For any non-exceptional $\beta,m$ with $m>-1$ one can introduce its M{\o}ller operators with respect to the Dirichlet and Neumann dynamics:
\begin{align*}
W_{m,\kappa,\D}^\pm&:=\slim_{t\to\infty}\e^{\pm\i tH_{m,\kappa}}\e^{\mp\i tH_\D},\\
W_{m,\kappa,\mathrm{N}}^\pm&:=\slim_{t\to\infty}\e^{\pm\i tH_{m,\kappa}}\e^{\mp\i tH_\mathrm{N}}.
\end{align*}
The corresponding scattering operators are then defined by
\begin{equation*}
S_{m,\kappa;\D} = W_{m,\kappa;\D}^{-\#} W_{m,\kappa;\D}^- \ \hbox{ and }\
S_{m,\kappa;\mathrm{N}} := W_{m,\kappa;\mathrm{N}}^{-\#} W_{m,\kappa;\mathrm{N}}^-.
\end{equation*}
We can also express the M{\o}ller operators in terms of the generalized Hankel transformations
\begin{equation*}
W_{m,\kappa;\D}^\pm =\sF_{m,\kappa}^\pm\sF_\D^{\mp\#} \hbox{ and }\
W_{m,\kappa;\mathrm{N}}^\pm =\sF_{m,\kappa}^\pm\sF_\mathrm{N}^{\mp\#}.
\end{equation*}
By conjugating with $\sF_\D$, resp. $\sF_\mathrm{N}$, the scattering operator
can be brought to a diagonal
form, where up to an inessential factor it coincides with $\cG_{m,\kappa}$\;\!:
\begin{equation*}
\sF_\D S_{m,\kappa;\D}\sF_\D = \i\cG_{m,\kappa}\ \hbox{ and } \
\sF_\mathrm{N}S_{m,\kappa;\mathrm{N}}\sF_\mathrm{N} = -\i\cG_{m,\kappa}.
\end{equation*}

\subsection{Hankel-Whittaker transformation}

It is natural to ask whether the operators $H_{\beta,m}$
considered in this paper also possess diagonalizing operators
and a satisfactory scattering theory.
There exists actually a candidate for a generalization of incoming/outgoing Hankel transformations $\sF_m^\pm$.
For any $\beta,m\in \C$ with $\Re(m)>-1$ let us define the kernel
\begin{equation}\label{eq_def_F}
\sF^\pm_{\beta,m}(x,k) := \frac {1}{\sqrt{2\pi}} \e^{\pm \i \frac{\pi}{2}m}\e^{\frac{\pi\beta}{4k}}
\Gamma\big(\tfrac{1}{2}+m\pm\tfrac{\i\beta}{2k}\big)\J{\frac{\beta}{2k}}{m}(2xk),
\end{equation}
where $x,k\in \R_+$.
This kernel can be used to define a linear transformation on any $f\in C_\mathrm{c}(\R_+)$\;\!:
\begin{equation*}
\big(\sF_{\beta,m}^\pm f\big)(x):=\int_0^\infty\sF_{\beta,m}^\pm(x,k)f(k)\;\!\d k.
\end{equation*}
We call $\sF_{\beta,m}^\pm$ the \emph{outgoing/incoming Hankel--Whittaker transformation}.

We also introduce the function $g_{\beta,m}:\R_+\to \C$ by
\begin{equation}\label{eq_def_G}
g_{\beta,m}(k):=\e^{-\i\pi m}\frac{\Gamma(\frac12+m-\i\frac{\beta}{2k})}
{\Gamma(\frac12+m+\i\frac{\beta}{2k})}
\end{equation}
and the corresponding multiplication operator
\begin{equation*}
\cG_{\beta,m}:=g_{\beta,m}(X),
\end{equation*}
which we can call the \emph{intrinsic scattering operator}.

Let us collect the most obvious properties of
$\sF_{\beta,m}^\pm$ and $\cG_{\beta,m}$.
Recall that the set $\Omega_{\beta,m}$ has been introduced just after \eqref{exco},
and that if $(\beta,m)$ is not an exceptional pair then $\Omega_{\beta,m}=\R_+$.

\begin{theorem}
Let $m,\beta\in \C$ with $\Re(m)>-1$, let us also fix $k_0>0$ and let $s>\frac{1}{2}+\frac{|\Im(\beta)|}{2k_0}$.
\begin{enumerate}
\item[(i)] $\sF_{\beta,m}^\pm$ maps $C_{\rm c}\big(]k_0,\infty[\,\bigcap\Omega_{\beta,m}\big)$ into $ \langle X\rangle^{s}L^2(\R_+)$.
\item[(ii)] If $h\in C_{\rm c}\big(]k_0,\infty[\,\bigcap\Omega_{\beta,m}\big)$, then in the sense of quadratic forms on
$\langle X\rangle^{-s}L^2(\R_+)$ we have
\begin{equation}\label{diago}
\int_0^\infty h(k^2)\;\! p_{\beta,m}(k^2)\;\!\d k^2 =\sF_{\beta,m}^\mp h(X^2)\sF_{\beta,m}^{\pm\#}.
\end{equation}
\item[(iii)] The equalities $\sF_{\beta,m}^+\cG_{\beta,m}=\sF_{\beta,m}^-$ and $\sF_{\beta,m}^-\cG_{\beta,m}^{-1}=\sF_{\beta,m}^+$ hold.
\item[(iv)]
For fixed $k\in \Omega_{\beta,m}$ the following asymptotics hold as $x\to\infty$\;\!:
\begin{align}
\nonumber \sF_{\beta,m}^+(x,k) = & \frac{1}{\sqrt{2\pi}}\Big(\e^{-\i\frac{\pi}{4}}\e^{\i kx}(2kx)^{\i\frac{\beta}{2k}}\big(1+O(x^{-1})\big) \\
\label{mimi1} & \qquad +g_{\beta,m}^{-1}(k)\e^{\i\frac{\pi}{4}}\e^{-\i kx}(2kx)^{-\i\frac{\beta}{2k}}\big(1+O(x^{-1})\big)\Big),\\
\nonumber \sF_{\beta,m}^-(x,k) = & \frac{1}{\sqrt{2\pi}}\Big(\e^{\i\frac{\pi}{4}}\e^{-\i kx}(2kx)^{-\i\frac{\beta}{2k}}\big(1+O(x^{-1})\big) \\
\label{mimi2} & \qquad +g_{\beta,m}(k)\e^{-\i\frac{\pi}{4}}\e^{\i kx}(2kx)^{\i\frac{\beta}{2k}}\big(1+O(x^{-1})\big)\Big).
\end{align}
\end{enumerate}
\end{theorem}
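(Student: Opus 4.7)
My plan is to prove items (iii), (iv), (i), (ii) in that order, since the first two are essentially algebraic consequences of the definitions and the last two reduce to integrability estimates and kernel identification. For (iii), I observe that only the factors $\e^{\pm\i\pi m/2}$ and $\Gamma(\tfrac12+m\pm\tfrac{\i\beta}{2k})$ in \eqref{eq_def_F} depend on the sign, so the ratio $\sF_{\beta,m}^+(x,k)/\sF_{\beta,m}^-(x,k)$ equals $\e^{\i\pi m}\Gamma(\tfrac12+m+\tfrac{\i\beta}{2k})/\Gamma(\tfrac12+m-\tfrac{\i\beta}{2k})$, which by \eqref{eq_def_G} is precisely $g_{\beta,m}(k)^{-1}$. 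Since $\cG_{\beta,m}$ acts by multiplication by $g_{\beta,m}$, both identities follow at the level of integral kernels.

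For (iv), I substitute $\beta\mapsto\beta/(2k)$ and $z\mapsto 2kx$ in the two-term expansion \eqref{Jbm-around-infinity} of $\cJ_{\beta,m}$ and multiply by the prefactor of $\sF_{\beta,m}^\pm$ from \eqref{eq_def_F}. For $\sF_{\beta,m}^+$, the contribution of $\cJ$ that carries the denominator $\Gamma(\tfrac12+m+\tfrac{\i\beta}{2k})$ has this factor exactly cancelled by the Gamma prefactor, while the accumulated exponential phases collapse to $\e^{-\i\pi/4}$, yielding the outgoing term of \eqref{mimi1}. In the other term the Gamma factors survive as the ratio $g_{\beta,m}^{-1}(k)$, the phases give $\e^{\i\pi/4}$, and one obtains the incoming term; the $O(x^{-1})$ errors propagate directly from \eqref{Jbm-around-infinity}. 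The computation for $\sF_{\beta,m}^-$ is symmetric.

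For (i), writing $(\sF_{\beta,m}^\pm h)(x)=\int_{\mathrm{supp}\,h}\sF_{\beta,m}^\pm(x,k)h(k)\;\!\d k$, I bound $\sF_{\beta,m}^\pm(x,k)$ uniformly on the compact $\mathrm{supp}\,h\subset \,]k_0,\infty[\,\cap\,\Omega_{\beta,m}$. Near $x=0$, \eqref{Jbm-around-zero} yields $|\sF_{\beta,m}^\pm(x,k)|=O(x^{\frac12+\Re(m)})$, which is locally $L^2$ since $\Re(m)>-1$. Near $x=\infty$, (iv) yields $|\sF_{\beta,m}^\pm(x,k)|=O(x^{|\Im(\beta)|/(2k)})$; since $k$ stays above $k_0$ on $\mathrm{supp}\,h$, the exponent is bounded by $|\Im(\beta)|/(2k_0)$, so $(\sF_{\beta,m}^\pm h)(x)=O(x^{|\Im(\beta)|/(2k_0)})$, which belongs to $\langle X\rangle^{s}L^2(\R_+)$ precisely when $s>\tfrac12+|\Im(\beta)|/(2k_0)$.

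For (ii), using $\sF_{\beta,m}^{\pm\#}(k,y)=\sF_{\beta,m}^\pm(y,k)$, the kernel of $\sF_{\beta,m}^\mp h(X^2)\sF_{\beta,m}^{\pm\#}$ is $\int_0^\infty\sF_{\beta,m}^\mp(x,k)h(k^2)\sF_{\beta,m}^\pm(y,k)\;\!\d k$. A direct inspection of \eqref{eq_def_F} gives $\sF_{\beta,m}^\mp(x,k)\sF_{\beta,m}^\pm(y,k)=2k\,p_{\beta,m}(k^2;x,y)$: the phases $\e^{\mp\i\pi m/2}\e^{\pm\i\pi m/2}$ cancel and the two Gamma factors combine to reproduce \eqref{spectral-density}. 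The change of variable $\lambda=k^2$, $\d\lambda=2k\;\!\d k$, then produces the LHS. The main obstacle is justifying this identity as an equality of quadratic forms on $\langle X\rangle^{-s}L^2(\R_+)$: this follows by combining (i), which ensures absolute convergence, with Fubini, the delicate point being the bookkeeping of the weight $\langle X\rangle^s$ near infinity that forces the threshold $s>\tfrac12+|\Im(\beta)|/(2k_0)$.
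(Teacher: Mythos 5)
Your proposal is correct and follows essentially the same route as the paper, which disposes of (ii) and (iii) as ``direct computations'', derives (iv) by substituting into the expansion \eqref{Jbm-around-infinity}, and proves (i) from the behaviour of $\J{\frac{\beta}{2k}}{m}(2kx)$ near $0$ and near $\infty$, with the growth $x^{|\Im(\beta)|/(2k_0)}$ of the factor $(2kx)^{\pm\i\beta/(2k)}$ accounting for the threshold $s>\frac12+\frac{|\Im(\beta)|}{2k_0}$. Your explicit verification of the kernel identity $\sF_{\beta,m}^\mp(x,k)\sF_{\beta,m}^\pm(y,k)=2k\,p_{\beta,m}(k^2;x,y)$ and of the phase cancellations in (iii) and (iv) simply fills in the details the paper leaves implicit.
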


\begin{proof}
The proof of (i) reduces to showing that the map $x\mapsto \langle x\rangle^{-s} \sup_{k\in K} \J{\frac{\beta}{2k}}{m}(2kx)$ is in $L^2(\R_+)$
for any $k$ in a compact set $K\subset ]k_0,\infty[\,\bigcap\Omega_{\beta,m}$. The $L^2$-integrability near $0$ follows from \eqref{Jbm-around-zero}, while
the $L^2$-integrability near infinity follows from \eqref{Jbm-around-infinity}.
Note that the factor $x^{\pm \i\beta}$, which becomes $(2kx)^{\pm \i \frac{\beta}{2k}}$ after the required change of variables, imposes the
dependence on $k_0$ for the lower limit of the index $s$.
The proofs of (ii) and (iii) consist in direct computations.
Finally, (iv) can be obtained by taking again into account the asymptotic expansion of $\J{\beta}{m}$ provided in \eqref{Jbm-around-infinity}.
\end{proof}

Note that \eqref{diago} essentially says that $\sF_{\beta,m}^\pm$ diagonalize the continuous part of
$H_{\beta,m}$, since the l.h.s.~of \eqref{diago} can be interpreted as $h(H_{\beta,m})$.
In the self-adjoint case, this would correspond to the absolutely continuous part of $H_{\beta,m}$.
Clearly, this condition does not fix $\sF_{\beta,m}^\pm$ completely.
The additional condition for our choice of $\sF_{\beta,m}^\pm$ comes from scattering theory, which is expressed in the asymptotics
\eqref{mimi1} and \eqref{mimi2}.
In that framework the functions $x\mapsto \sF_{\beta,m}^\pm(x,k)$ can be viewed as outgoing/incoming distorted waves (or generalized eigenfunctions)
of $H_{\beta,m}$ associated with the eigenvalue $k^2$. Note that if we set $\beta=0$, then
\eqref{mimi1} and \eqref{mimi2} have the form of usual distorted waves in the short-range case.
On the other hand, the factors $(kx)^{\i\beta}$ are needed because of the long-range part of the potential,
while the factors $\e^{\pm\i\frac{\pi}{4}}$ are related to the {\em Maslov index} and are needed to make our definitions consistent
with the case $\beta=0$ described in \cite{DR}.

Let us now recall from \cite{BDG,DR} that $\sF_{0,m}^\pm$ are unitary for real $m$, and are bounded for more general $m$.
It is natural to ask about the boundedness of $\sF_{\beta,m}^\pm$ in the general framework introduced here,
but they seem to be rather ill-behaved operators.
Note that the operators $\cG_{\beta,m}$ are better behaved, and their behavior is easier to study:

\begin{proposition}
\begin{enumerate}
\item[(i)] If $m,\beta$ are real, then $\cG_{\beta,m}$ is unitary.
\item[(ii)] If $\beta$ is real and $\Re(m)\neq-\frac12$, then $\cG_{\beta,m}$ is bounded and boundedly invertible.
\item[(iii)] In all other cases $\cG_{\beta,m}$ is either unbounded or has an unbounded inverse.
\end{enumerate}
\end{proposition}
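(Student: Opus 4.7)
Since $\cG_{\beta,m}=g_{\beta,m}(X)$ is a multiplication operator on $L^2(\R_+)$, it is unitary iff $|g_{\beta,m}|\equiv 1$, bounded iff $g_{\beta,m}\in L^\infty(\R_+)$, and boundedly invertible iff additionally $1/g_{\beta,m}\in L^\infty(\R_+)$. The entire plan reduces to analyzing the scalar function
\begin{equation*}
|g_{\beta,m}(k)|=e^{\pi\Im(m)}\,\frac{|\Gamma(\tfrac12+m-\tfrac{\i\beta}{2k})|}{|\Gamma(\tfrac12+m+\tfrac{\i\beta}{2k})|},\qquad k\in\R_+.
\end{equation*}

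For part (i), if $m,\beta\in\R$, then the two arguments $\tfrac12+m\mp\tfrac{\i\beta}{2k}$ are complex conjugates of each other, hence $\Gamma(\tfrac12+m-\tfrac{\i\beta}{2k})=\overline{\Gamma(\tfrac12+m+\tfrac{\i\beta}{2k})}$, the ratio has modulus one, and $|e^{-\i\pi m}|=1$; so $|g_{\beta,m}|\equiv 1$ and $\cG_{\beta,m}$ is unitary. For part (ii), with $\beta\in\R$ and $\Re(m)\neq-\tfrac12$, the two arguments share the real part $\tfrac12+\Re(m)$. Our hypotheses $\Re(m)>-1$ and $\Re(m)\neq-\tfrac12$ keep this value away from the pole set $\{0,-1,-2,\dots\}$ of $\Gamma$, so $g_{\beta,m}$ is continuous and nonvanishing on $\R_+$. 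At $k\to\infty$, both Gamma arguments tend to $\tfrac12+m$ and the ratio tends to $1$. At $k\to 0^+$, the imaginary parts behave like $\pm\tfrac{\beta}{2k}$ and the standard Stirling asymptotic $|\Gamma(x+\i y)|\sim\sqrt{2\pi}\,|y|^{x-\frac12}e^{-\pi|y|/2}$ (applied with fixed $x=\tfrac12+\Re(m)$) makes the exponential factors cancel and the power factors converge, yielding a finite positive limit for the ratio. Hence $|g_{\beta,m}|$ is a continuous positive function on $\R_+$ with finite positive limits at both endpoints, so $\cG_{\beta,m}$ and $\cG_{\beta,m}^{-1}$ are bounded.

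For part (iii), the residual situation splits into two subcases. Subcase A: $\beta\in\R\setminus\{0\}$, $\Re(m)=-\tfrac12$ and $\Im(m)\neq 0$. Writing $m=-\tfrac12+\i b$ with $b\neq 0$, the two Gamma arguments become $\i(b\mp\tfrac{\beta}{2k})$, both purely imaginary. Exactly one of the equations $b\mp\tfrac{\beta}{2k}=0$ has a positive solution $k_*=\big|\tfrac{\beta}{2b}\big|\in\R_+$, at which $\Gamma$ of the corresponding argument has a pole while the other Gamma factor is finite and nonzero. Hence $g_{\beta,m}$ has a pole or a zero at $k_*$, and either $\cG_{\beta,m}$ or its inverse is unbounded. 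Subcase B: $\Im(\beta)\neq 0$. Now the \emph{real parts} $\tfrac12+\Re(m)\mp\tfrac{\Im(\beta)}{2k}$ of the two arguments diverge in opposite directions as $k\to 0^+$. I would apply Stirling directly to the Gamma whose argument stays in the sector $|\arg z|<\pi-\eps$, and use the reflection formula $\Gamma(z)\Gamma(1-z)=\pi/\sin(\pi z)$ to convert the other Gamma (whose argument heads to $-\infty$) into a quotient to which Stirling applies. After cancellation of the $|\sin(\pi z)|$ and exponential-linear Stirling factors, the surviving leading term in $\ln|g_{\beta,m}(k)|$ as $k\to 0^+$ is $\tfrac{\Im(\beta)}{k}\ln(1/k)$, which diverges to $+\infty$ or $-\infty$ according to the sign of $\Im(\beta)$. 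Therefore either $g_{\beta,m}$ or $1/g_{\beta,m}$ is unbounded on $\R_+$.

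The main obstacle is clearly the asymptotic analysis in subcase B: one must carry out the Stirling expansion on both Gamma factors (one via reflection) and track the cancellations carefully enough to isolate the $\tfrac{\Im(\beta)}{k}\ln(1/k)$ dominant term. The other assertions rest only on complex conjugation, continuity, and a one-variable Stirling estimate.
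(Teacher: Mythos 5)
Your proposal is correct and follows essentially the same route as the paper's (very terse) proof: conjugation symmetry of $\Gamma$ for (i), avoidance of the poles of $\Gamma$ together with Stirling asymptotics at $k\to0$ and $k\to\infty$ for (ii), and for (iii) the combination of a local pole/zero of $g_{\beta,m}$ in the exceptional purely-imaginary-argument case with the Stirling-driven divergence $\ln|g_{\beta,m}(k)|\sim\frac{\Im(\beta)}{k}\ln(1/k)$ as $k\to0$ when $\Im(\beta)\neq0$ --- your version simply supplies the details the paper leaves implicit. The only caveat, shared with the paper, is that your case split in (iii) silently drops $\beta=0$ with $\Re(m)=-\frac12$, $\Im(m)\neq0$, where $g_{0,m}=\e^{-\i\pi m}$ is a nonzero constant and the assertion of (iii) actually fails as literally stated.
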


\begin{proof}
For (i), it is sufficient to recall that $\Gamma(\overline{z})=\overline{\Gamma(z)}$.
For (ii), by assuming that $\Re(m)\neq -\frac{1}{2}$ we make sure that neither the numerator nor the denominator of $g_{\beta,m}$ go through the value $\Gamma(0)$.
In addition, by using Stirling's formula one observes that $g_{\beta,m}(k)$ remains bounded for $k\to 0$ and for $k\to \infty$.
Finally, in the case (iii) either the numerator or the denominator of $g_{\beta,m}$ can have local singularities, and in addition
either $g_{\beta,m}(k)$ or $g_{\beta,m}^{-1}(k)$ are unbounded for $k\to 0$.
This last result is again a consequence of Stirling's formula.
\end{proof}

We conjecture that $\sF_{\beta,m}^\pm$ is unbounded in $L^2(\R_+)$ for all non-real $\beta$.
If $\Im(\beta)=0$ but $\Im(m)\neq0$, we do not know.
For real $\beta$ and $m$, which correspond to self-adjoint $H_{\beta,m}$,
the transformations $\sF_{\beta,m}^\pm$ are bounded, as we discuss in the next subsections.

\subsection{Hankel-Whittaker transformation for real parameters}

Throughout this and the next subsection we assume that $\beta,m\in\R$ with $m>-1$.
The operators $H_{\beta,m}$ are then self-adjoint and their spectral and scattering theory is well understood.

In the real case, the Hankel-Whittaker transformation satisfies
\begin{equation*}
\sF_{\beta,m}^{\pm*}=\sF_{\beta,m}^{\mp\#}.
\end{equation*}
Because of this identity, we can avoid using the Hermitian conjugation in our formulas in favor of transposition.
We do this because we would like that our formulas are easy to generalize to
the non-self-adjoint case, where so far their meaning is to a large extent unclear.

\begin{theorem}\label{rewrite}
$\sF_{\beta,m}^{\pm}$ are isometries that diagonalize
$H_{\beta,m}$ on the range of the spectral projection of
$H_{\beta,m}$ onto $\R_+$\;\!:
\begin{align}
\sF_{\beta,m}^{\mp \#} \sF_{\beta,m}^{\pm} &= \one,\label{rel1}\\
\sF_{\beta,m}^\pm \sF_{\beta,m}^{\mp \#}& = \one_{\R_+}(H_{\beta,m}),\label{rel2}\\
\sF_{\beta,m}^\pm X^2&= H_{\beta,m}\sF_{\beta,m}^\pm.\label{rel3}
\end{align}
\end{theorem}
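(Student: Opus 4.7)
The plan is to derive the three identities from the quadratic form diagonalization formula (ii) of the preceding theorem on $\sF^\pm_{\beta,m}$, combined with the spectral theorem for the self-adjoint operator $H_{\beta,m}$. In the real case $\beta,m\in\R$ with $m>-1$, the pair $(\beta,m)$ is never exceptional, so $\Omega_{\beta,m}=\R_+$; moreover the function $\cJ_{\frac{\beta}{2k},m}(2kx)$ is real for real $\beta,m$ and $x,k>0$ (via \eqref{eq_missing} together with $\overline{\Hp{\beta}{m}(z)} = \Hm{\beta}{m}(z)$ for real arguments), which combined with the explicit formula \eqref{eq_def_F} yields $\overline{\sF^\pm_{\beta,m}(x,k)} = \sF^\mp_{\beta,m}(x,k)$. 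Hence $(\sF^\pm_{\beta,m})^* = \sF^{\mp\#}_{\beta,m}$, and I may use transpose and Hermitian adjoint interchangeably below.

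For the intertwining \eqref{rel3}, I would first note that a scaling of the trigonometric-type Whittaker equation shows $x\mapsto\cJ_{\frac{\beta}{2k},m}(2kx)$ is annihilated by $L_{\beta,m^2}-k^2$, with the asymptotics \eqref{Jbm-around-zero} at $x=0$ of the form $c_k\;\!x^{\frac12+m}\bigl(1-\tfrac{\beta}{1+2m}x+O(x^2)\bigr)$, exactly the boundary behavior singled out by \eqref{eq_def_H}. Together with the oscillatory decay \eqref{Jbm-around-infinity} at infinity, this shows that for $f\in C_{\rm c}(\R_+)$ one has $\sF^\pm_{\beta,m}f\in\Dom(H_{\beta,m})$, and differentiation under the integral sign gives $H_{\beta,m}\sF^\pm_{\beta,m}f = \sF^\pm_{\beta,m}X^2f$, which extends to $\Dom(X^2)$ by closedness of $H_{\beta,m}$.

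For \eqref{rel2}, I would apply Stone's formula to $H_{\beta,m}$, together with Theorem \ref{thm_conv_resolvent}, to identify $p_{\beta,m}$ as the spectral density on the continuous part:
\[
\int_0^\infty h(k^2)\;\!p_{\beta,m}(k^2)\;\!\d k^2 = h(H_{\beta,m})\;\!\one_{\R_+}(H_{\beta,m})
\]
as quadratic forms on $\langle X\rangle^{-s}L^2(\R_+)$ with $s>\frac12$, for $h\in C_{\rm c}(\R_+)$. Comparing with identity (ii) of the preceding theorem produces $\sF^\mp_{\beta,m}h(X^2)\sF^{\pm\#}_{\beta,m} = h(H_{\beta,m})\;\!\one_{\R_+}(H_{\beta,m})$. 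Choosing $h_n\nearrow 1$ uniformly bounded and pointwise, monotone convergence in the strong operator topology yields $\sF^\mp_{\beta,m}\sF^{\pm\#}_{\beta,m} = \one_{\R_+}(H_{\beta,m})$, and taking Hermitian adjoint together with $(\sF^\pm_{\beta,m})^* = \sF^{\mp\#}_{\beta,m}$ delivers \eqref{rel2}.

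From \eqref{rel2}, $\sF^\pm_{\beta,m}$ is a partial isometry of $L^2(\R_+)$ with final space $\mathrm{Ran}\;\!\one_{\R_+}(H_{\beta,m})$, so $P := \sF^{\mp\#}_{\beta,m}\sF^\pm_{\beta,m}$ is the orthogonal projection onto $(\ker\sF^\pm_{\beta,m})^\perp$. By \eqref{rel3} and its transpose, $P$ commutes with $X^2$, hence $P=\one_B(X)$ for some Borel $B\subset\R_+$. Invoking Weyl--Titchmarsh theory, the continuous part of $H_{\beta,m}$ has spectral multiplicity one, so its spectral transform is a unitary $U\colon\mathrm{Ran}\;\!\one_{\R_+}(H_{\beta,m})\to L^2(\R_+,\d k^2)$; then $U\sF^\pm_{\beta,m}$ commutes with multiplication by $k^2$ and, by spectral uniqueness, acts as multiplication by some $\phi\in L^\infty(\R_+)$. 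The identity $(U\sF^\pm_{\beta,m})(U\sF^\pm_{\beta,m})^* = \one$ on $L^2(\R_+,\d k^2)$ forces $|\phi|=1$ a.e., so $U\sF^\pm_{\beta,m}$ is unitary, giving \eqref{rel1}. I expect the main obstacle to be this final multiplicity-one input: in the limit-circle range $|m|<1$ one must invoke the specific boundary condition in \eqref{eq_def_H} to ensure that the particular self-adjoint realization $H_{\beta,m}$ has simple continuous spectrum, which requires either a direct Weyl-disk analysis at the regular endpoint $0$ or appealing to the general Sturm--Liouville machinery for the half-line with one limit-circle endpoint.
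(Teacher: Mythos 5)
Your proposal is correct in substance and shares the paper's core for \eqref{rel2} --- Stone's formula plus the limiting absorption principle to identify $p_{\beta,m}$ as the spectral density, then comparison with the quadratic-form identity \eqref{diago} --- but it diverges from the paper in two places. First, for \eqref{rel3} you argue directly that $\sF_{\beta,m}^\pm f\in\Dom(H_{\beta,m})$ for $f\in C_{\rm c}(\R_+)$ via the boundary asymptotics \eqref{Jbm-around-zero} and differentiation under the integral, whereas the paper extracts the intertwining from the same Stone-formula identity \eqref{stone2} together with the factorization $p_{\beta,m}(k^2;x,y)=\frac{1}{2k}\sF_{\beta,m}^-(x,k)\sF_{\beta,m}^+(y,k)$; your route is more hands-on about domains but needs the (standard, yet unstated) oscillatory integration by parts in $k$ to see that $\sF_{\beta,m}^\pm f$ is $L^2$ at infinity. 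Second, and more significantly, for the isometry \eqref{rel1} the paper never invokes simplicity of the spectrum of $H_{\beta,m}$: it only uses that $X^2$ has multiplicity one to write $P=\sF^*\sF=\one_\Xi(X^2)$ (obtaining the commutation of $P$ with $\one_I(X^2)$ purely algebraically, by squaring $\sF\one_I(X^2)\sF^*=\one_I(H)$), and then rules out $\Xi\neq\R_+$ by an elementary kernel argument --- continuity and pointwise non-vanishing of $x\mapsto\sF_{\beta,m}^\pm(x,k_0)$ show that $\sF$ cannot annihilate the characteristic function of a small interval around any $k_0$. You instead pass through a spectral representation $U$ of $H_{\beta,m}\!\restriction_{\mathrm{Ran}\,\one_{\R_+}(H_{\beta,m})}$ and the maximal-abelian argument $U\sF=M_\phi$, $|\phi|=1$; this works, but it imports exactly the input you flag as the main obstacle (simple, purely absolutely continuous spectrum on $\R_+$ for the particular realization \eqref{eq_def_H}, including in the limit-circle range $|m|<1$), and it also requires a small bookkeeping care with the Jacobian between $\d k$ and $\d k^2$ hidden in \eqref{stone5}. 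The classical Weyl--Titchmarsh fact you cite (separated boundary condition at $0$, limit point at $\infty$ $\Rightarrow$ simple spectrum) does make your step legitimate, but the paper's kernel-nonvanishing argument buys the same conclusion without any external spectral-multiplicity input, which is why the obstacle you anticipate never arises there.
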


\begin{proof}
For any $0<a<b$, we can apply Stone's formula
\begin{equation}
\one_{]a,b[}(H_{\beta,m})= \slim_{\epsilon\searrow0}\frac{1}{2\pi\i}\int_a^b(R_{\beta,m}(\lambda+\i\epsilon)- R_{\beta,m}(\lambda-\i\epsilon)\big)\;\!\d \lambda.
\label{stone1}\end{equation}
We can reinterpret \eqref{stone1}
in the sense of a quadratic form on appropriate weighted spaces, writing
\begin{equation}\label{stone2}
\one_{]a,b[}(H_{\beta,m})= \int_a^b p_{\beta,m}(k^2)\;\!\d k^2.
\end{equation}
Now \eqref{rel2} and \eqref{rel3} follow from \eqref{stone2} and from the identity
\begin{equation}
p_{\beta,m}(k^2;x,y)=
\frac{1}{2k} \sF_{\beta,m}^-(x,k)\sF_{\beta,m}^+(y,k).
\label{stone5}\end{equation}

It remains to prove \eqref{rel1}. To simplify notation, we will write $\sF$ for $\sF_{\beta,m}^\pm$ and $H$ for $H_{\beta,m}$.
By \eqref{stone2} and \eqref{stone5}, for any interval $I$ we have
\begin{equation}\label{squa}
\sF\one_I(X^2) \sF^*=\one_I(H).
\end{equation}
By squaring \eqref{squa} we obtain the equality
\begin{align}\label{squa1}
\sF\one_I(X^2) \sF^*\sF\one_I(X^2) \sF^*
&=\one_I(H).
\end{align}
By setting then $P:= \sF^*\sF$ and by comparing \eqref{squa} and \eqref{squa1} we infer that
\begin{equation}\label{squa2}
P\one_I(X^2)P\one_I(X^2)P=P\one_I(X^2)P.
\end{equation}
Clearly, the r.h.s.~of \eqref{squa2} is equal to $P\one_I(X^2)^2P$, and therefore
$$
P\one_I(X^2)(\one-P)\one_I(X^2)P=0.
$$
Equivalently one has
\begin{equation*}
\big((\one-P)\one_I(X^2)P\big)^*(\one-P)\one_I(X^2)P=0.
\end{equation*}
Consequently one gets $(\one-P)\one_I(X^2)P=0$ and $P\one_I(X^2)(\one-P)=0$.
By subtraction we finally obtain
\begin{equation*}
P\one_I(X^2)=\one_I(X^2)P.
\end{equation*}
Thus $P$ is a projection commuting with all spectral projections of $X^2$ onto intervals. But $X^2$ has multiplicity $1$.
Therefore, there exists a Borel set $\Xi\subset\R_+$ such that
\begin{equation*}
P=\one_\Xi(X^2)=\sF^* \sF.
\end{equation*}

Suppose that $\R_+\backslash\Xi$ has a positive measure. Then we can find $k_0\in\R_+$ such that for any $\epsilon>0$
$$
I_\epsilon:=[k_0-\epsilon,k_0+\epsilon]\setminus\Xi
$$
has also a positive measure. Let $f_\epsilon$ be the characteristic function of $I_\epsilon$, understood as an element of $L^2(\R_+)$. Then one infers that
\begin{equation}\label{kerno}
\|f_\epsilon\|\neq0 \quad \hbox{ and }\quad \sF f_\epsilon=0.
\end{equation}

From the explicit formula for $\sF(x,k)$ we immediately see that for any $k_0\in\R_+$ we can find $x_0\in\R_+$ such that
$\sF(x_0,k_0)\neq0$. We also know that $\sF(x,k)$ is continuous in both variables. Therefore, we can find $\epsilon>0$ such that
for $x\in[x_0-\epsilon,x_0+\epsilon]$ and $k\in[k_0-\epsilon,k_0+\epsilon]$ we have
\begin{equation*}
|\sF(x,k)-\sF(x_0,k_0)|>\frac12|\sF(x_0,k_0)|.
\end{equation*}

Now one has
\begin{equation*}
\big(\sF f_\epsilon\big)(x)=\int_{I_\epsilon}\sF(x,k)\d k,
\end{equation*}
and therefore,
\begin{equation*}
\big|\big(\sF f_\epsilon\big)(x)\big|\geqslant \frac12|I_\epsilon|\,|\sF(x_0,k_0)|,\quad
x\in[x_0-\epsilon,x_0+\epsilon],
\end{equation*}
where $|I_\epsilon|$ denotes the Lebesgue measure of $I_\epsilon$.
Hence $ \sF f_\epsilon\neq0$, which is a contradiction with \eqref{kerno}.
\end{proof}

In the case of real $m,\beta$, we have $|g_{\beta,m}(k)|=1$ for any $k\in \R_+$.
Therefore, the whole information about $g_{\beta,m}(k)$ is contained in its argument.
One half of the argument of
$g_{\beta,m}(k)$ is called the {\em phase shift}
\begin{equation*}
\delta_{\beta,m}(k):=-\frac{\pi}{2}m+\frac{1}{\i}\log\Big(\Gamma
\big(\tfrac{1}{2}+m-\tfrac{\i\beta}{2k}\big)\Big).
\end{equation*}
We have the relations
\begin{align*}
g_{\beta,m}(k)&=\e^{\i 2\delta_{\beta,m}(k)},\\
\cG_{\beta,m}&=\e^{\i 2\delta_{\beta,m}(X)}.
\end{align*}

In the real case, one can avoid using the incoming/outgoing Hankel-Whittaker
transformations $\sF_{\beta,m}^{\pm}$, and instead introduce a single $\sF_{\beta,m}$ given by the kernel
\begin{equation*}
\sF_{\beta,m}(x,k) := \frac {1}{\sqrt{2\pi}} \e^{\frac{\pi\beta}{4k}}
\Big|\Gamma\big(\tfrac{1}{2}+m\pm\tfrac{\i\beta}{2k}\big)\Big|\J{\frac{\beta}{2k}}{m}(2xk).
\end{equation*}
Note that
\begin{equation*}
\sF_{\beta,m}^\pm= \sF_{\beta,m}\e^{\mp \i \delta_{\beta,m}(X)}.
\end{equation*}

We can rewrite Theorem \ref{rewrite} in terms of the operator
$\sF_{\beta,m}$\;\!:
\begin{align*}
\sF_{\beta,m}^{ \#} \sF_{\beta,m} &= \one,\\
\sF_{\beta,m} \sF_{\beta,m}^{\#}& = \one_{\R_+}(H_{\beta,m}),\\
\sF_{\beta,m} X^2&= H_{\beta,m}\sF_{\beta,m}.
\end{align*}
However, with $\sF_{\beta,m}$ one loses the analyticity, therefore we prefer to continue using $\sF_{\beta,m}^{\pm}$.

\begin{remark}
In the setting of the Coulomb problem, when $m+\frac{1}{2}=\ell \in \N$ and $\beta \in \R$,
the expression
$$
\delta_\ell(k):=\arg\Big(\Gamma\big(\ell+1-\i\frac{\beta}{2k}\big)\Big)
$$
is called \emph{the Coulomb phase shift}.
Note that an expression close to \eqref{eq_def_G} was introduced in \cite[Eq.~(2.3a)]{TB}.
In the setting of the Coulomb potential in $d=3$, an additional function called \emph{the Gamow factor} is often introduced, see
for example \cite[Eq.~14.1.7]{AS}. In our framework this factor does not seem to play an important role.
\end{remark}

\subsection{Scattering theory for real parameters}

Since the Coulomb potential is long-range, we do not have the standard short-range scattering theory between arbitrary $H_{\beta,m}$ and $H_{\beta',m'}$.
However, if we fix $\beta$ then the scattering theory between $H_{\beta,m}$ and $H_{\beta,m'}$ is short-range.
One can argue that for $\beta\neq0$ there is only one natural reference Hamiltonian,
namely $H_{\beta,\mathrm{S}}:=H_{\beta,-\frac12}= H_{\beta,\frac12}$. Here we use se subscript $S$ for \emph{standard}.
The situation of two equally justified reference Hamiltonians $H_{0,-\frac12}=H_\mathrm{N}$ and $H_{0,\frac12}=H_\D$ seems to be specific for $\beta=0$.

By the standard methods of time-dependent long-range scattering theory, as described for example in \cite{DG0},
we can show the existence for any real $\beta,m$ with $m>-1$ of the M{\o}ller operators
\begin{equation*}
W_{\beta,m;\beta,\mathrm{S}}^\pm :=\slim_{t\to\infty}\e^{\pm\i tH_{\beta,m}}\e^{\mp\i tH_{\beta,\mathrm{S}}} \one_{\R_+}(H_{\beta,\mathrm{S}}).
\end{equation*}
These operators can also be expressed in terms of the Hankel-Whittaker transform:
\begin{equation*}
W_{\beta,m;\beta,\mathrm{S}}^\pm =\sF_{\beta,m}^\pm\sF_{\beta,\mathrm{S}}^{\mp\#}
=\sF_{\beta,m}\e^{\mp\i (\delta_{\beta,m}(X)-\delta_{\beta,\mathrm{S}}(X))} \sF_{\beta,\mathrm{S}}^{\#}.
\end{equation*}

In order to compare distinct $\beta$ and $\beta'$ we need to use modified wave operators.
There exist various constructions, and we select a construction involving a time-independent modifier
that is similar in some sense to the celebrated Isozaki-Kitada construction.
As the reference Hamiltonian we use $H_\D$. The modifier is chosen to be $\sF_{\beta,\mathrm{S}}^\pm\sF_\D^{\mp \#}$.
Note that the modifier does not depend on $m$.
It depends on $\pm$, and as mentioned above this allows us to obtain expressions analytic in the parameters.
With the results obtained so far, one can easily prove the following statement:

\begin{theorem}
If $\beta,m\in\R$, $m>-1$, then there exist
\begin{equation*}
W_{\beta,m;\D}^\pm :=\slim_{t\to\infty}\e^{\pm\i tH_{\beta,m}}
\sF_{\beta,\mathrm{S}}^\pm\sF_\D^{\mp\#}\e^{\mp\i tH_{\D}},
\end{equation*}
and the following equalities hold:
\begin{equation}\label{wave1}
W_{\beta,m;\D}^\pm
=\sF_{\beta,m}^\pm\sF_{\D}^{\mp\#}
=\sF_{\beta,m}\e^{\mp \i (
\delta_{\beta,m}(X)-\delta_{\D}(X))}
\sF_{\D}^{\#}.
\end{equation}
In addition, the scattering operator
$S_{\beta,m;\D}:= W_{\beta,m;\D}^{-\#} W_{\beta,m;\D}^-$
satisfies
\begin{equation*}
\sF_\D S_{\beta,m;\D}\sF_\D = \i\cG_{\beta,m}.
\end{equation*}
\end{theorem}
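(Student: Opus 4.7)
The plan is to reduce the existence of $W_{\beta,m;\D}^\pm$ to the short-range wave operators $W_{\beta,m;\beta,\mathrm{S}}^\pm$ whose existence was established earlier in this subsection, and then to read off the scattering identity from the linking relation (iii) of the previous theorem. The key observation is that the time-independent modifier $\sF_{\beta,\mathrm{S}}^\pm\sF_\D^{\mp\#}$ converts the Dirichlet dynamics into the standard dynamics. Indeed, combining the intertwining identity \eqref{rel3} (applied via functional calculus both to the Dirichlet pair, $\sF_\D^{\mp\#}e^{\mp\i tH_\D} = e^{\mp\i tX^2}\sF_\D^{\mp\#}$, and to the standard pair, $\sF_{\beta,\mathrm{S}}^\pm e^{\mp\i tX^2} = e^{\mp\i tH_{\beta,\mathrm{S}}}\sF_{\beta,\mathrm{S}}^\pm$) rewrites
\begin{equation*}
e^{\pm\i tH_{\beta,m}}\sF_{\beta,\mathrm{S}}^\pm\sF_\D^{\mp\#}e^{\mp\i tH_\D}
= e^{\pm\i tH_{\beta,m}}e^{\mp\i tH_{\beta,\mathrm{S}}}\sF_{\beta,\mathrm{S}}^\pm\sF_\D^{\mp\#}.
\end{equation*}
Since the range of $\sF_{\beta,\mathrm{S}}^\pm$ is contained in $\one_{\R_+}(H_{\beta,\mathrm{S}})L^2(\R_+)$ by \eqref{rel2}, the spectral projection implicit in $W_{\beta,m;\beta,\mathrm{S}}^\pm$ acts trivially, and the preceding theorem together with \eqref{rel1} telescopes the strong limit to $\sF_{\beta,m}^\pm\sF_{\beta,\mathrm{S}}^{\mp\#}\sF_{\beta,\mathrm{S}}^\pm\sF_\D^{\mp\#} = \sF_{\beta,m}^\pm\sF_\D^{\mp\#}$, which is the first form asserted in \eqref{wave1}.

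The second form follows by substituting the factorizations $\sF_{\beta,m}^\pm = \sF_{\beta,m}e^{\mp\i\delta_{\beta,m}(X)}$ and the analogous $\sF_\D^\pm = \sF_\D e^{\mp\i\delta_\D(X)}$ with $\delta_\D = -\pi/4$ (consistent with $\sF_\D^\pm = e^{\pm\i\pi/4}\sF_\D$). Because multiplication operators coincide with their transposes and $\sF_\D^\# = \sF_\D$, one has $\sF_\D^{\mp\#} = e^{\pm\i\delta_\D(X)}\sF_\D^\#$, and the two phase factors merge into the single factor $e^{\mp\i(\delta_{\beta,m}(X)-\delta_\D(X))}$.

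For the scattering operator, expanding and using $(AB)^\# = B^\#A^\#$ gives
\begin{equation*}
S_{\beta,m;\D} = W_{\beta,m;\D}^{-\#}W_{\beta,m;\D}^- = \sF_\D^+\sF_{\beta,m}^{-\#}\sF_{\beta,m}^-\sF_\D^{+\#}.
\end{equation*}
From (iii), $\sF_{\beta,m}^- = \sF_{\beta,m}^+\cG_{\beta,m}$, and since $\cG_{\beta,m}^\# = \cG_{\beta,m}$ one obtains $\sF_{\beta,m}^{-\#} = \cG_{\beta,m}\sF_{\beta,m}^{+\#}$; the middle product then reduces to $\cG_{\beta,m}$ by a second use of \eqref{rel1}, so $S_{\beta,m;\D} = \sF_\D^+\cG_{\beta,m}\sF_\D^{+\#}$. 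Conjugating by $\sF_\D$ and exploiting that $\sF_\D$ is an involution with $\sF_\D^\# = \sF_\D$, the two scalar factors $e^{\i\pi/4}$ merge into $e^{\i\pi/2} = \i$, which yields $\sF_\D S_{\beta,m;\D}\sF_\D = \i\cG_{\beta,m}$. The only point that requires some care is verifying that the range condition in \eqref{rel2} lets the earlier short-range existence result apply directly to the modifier; beyond that, the argument is purely algebraic bookkeeping enabled by the rich structure unpacked in the preceding subsections.
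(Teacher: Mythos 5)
Your proof is correct and is essentially the argument the paper intends: the paper gives no written proof beyond ``with the results obtained so far, one can easily prove the following statement,'' and your reduction via the intertwining relations \eqref{rel3} to the already-established $W_{\beta,m;\beta,\mathrm{S}}^\pm$, followed by the telescoping through \eqref{rel1}--\eqref{rel2} and the linking relation $\sF_{\beta,m}^-=\sF_{\beta,m}^+\cG_{\beta,m}$, is exactly that bookkeeping. All the sign and transpose manipulations (including $\delta_\D=-\pi/4$ and the merging of the two $\e^{\i\pi/4}$ factors into $\i$) check out.
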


Let us now compare what we obtained with the literature.
Recall from \eqref{eq_def_F} that the kernel of $\sF_{\beta,m}^{\pm}(x,k)$ given by
\begin{align}
\nonumber &\sF^\pm_{\beta,m}(x,k) \\\label{eq_f2} & = \frac {1}{\sqrt{2\pi}} \e^{\pm \i \frac{\pi}{2}m}\e^{\frac{\pi\beta}{4k}}
\Gamma\big(\tfrac{1}{2}+m\pm\tfrac{\i\beta}{2k}\big)\J{\frac{\beta}{2k}}{m}(2xk) \\
\notag& = \frac {1}{\sqrt{2\pi}} \e^{\pm \i \frac{\pi}{2}m}\e^{\frac{\pi\beta}{4k}}
\Gamma\big(\tfrac{1}{2}+m\pm\tfrac{\i\beta}{2k}\big) (2kx)^{\frac{1}{2}+m} \e^{-\i kx}
{}_1\mathbf{F}_1\Big(\frac{1}{2}+m+\i \frac{\beta}{2k};\,1+2m;\,2\i kx\Big).
\end{align}
For the scattering operators, we obtain the multiplication operator
\begin{equation}\label{eq_f3}
\big(\sF_\D S_{\beta,m;\D}\sF_\D\big)(k)
= \e^{\i\pi(\frac{1}{2}-m)}\frac{\Gamma(\frac12+m-\i\frac{\beta}{2k})}{\Gamma(\frac12+m+\i\frac{\beta}{2k})}.
\end{equation}

Without any surprise, the expressions obtained in \eqref{eq_f2} and \eqref{eq_f3} coincide with the ones available in the literature,
as for example in \cite{GPT,MOC,Mic,Muk,Yaf}. Note that in these references, only the cases $m\geqslant 0$ are considered,
and most of the time only the case $m=\ell+\frac{1}{2}$ with $\ell \in \N$.

Let us conclude by one feature of the scattering theory for Whittaker operators that is worth pointing out.
The common wisdom says that for long-range potentials modified M{\o}ller operators, and hence also modified scattering operator,
are not canonically defined, namely, they are defined only up to an arbitrary momentum dependent phase factor.
However, in the case of Whittaker operators there exists a choice that can be viewed as canonical, namely the one
provided in \eqref{wave1}.

\appendix

\section{Pictures of spectrum}

In the appendix we provide a few pictures of the spectrum of the operators $H_{\beta,m}$.
We concentrate on the case of non-zero $\beta$, since for $\beta=0$ the spectrum is simply $[0,\infty[$.
By a scaling argument, it is enough to consider $|\beta|=1$.
We restrict ourselves to $m$ with a non-zero imaginary part, since this situation corresponds to the most interesting spectrum.

First we present a few examples of the spectrum for a fixed $\Im(m)$ and a fixed non-zero real $\beta$.
We set $m_\i:=\Im(m)=-2.4$ and $\beta=1$, and we select 8 values of the real part of $m$, namely
$$
m_\real:=\Re(m)=-0.75,\ -0.5,\ -0.25,\ 0,\ 0.25,\ 0.5,\ 1,\ 2.
$$
The spectrum is marked with the blue color.

In all these pictures the point spectrum is located on the same trajectory,
\begin{equation*}
\left\{-\frac{|\beta|^2}{4\big(t+\i \Im(m)\big)^2}\ |\ t\in\R\cup\{\infty\}\right\},
\end{equation*}
which depends only on $\beta$ and $\Im( m)$. This trajectory is marked with a thin gray line.

The point spectrum is a sequence of points on the lower half-plane converging to $0$, moving clockwise as $\Re(m)$ increases.
For $\Re(m)\in]-1,\frac12]$, one of the points of the sequence is \emph{hidden on the non-physical sheet of the complex plane}
and is not an eigenvalue. It is marked in red and called a \emph{resonance}.
When $\Re(m)$ crosses $-\frac12$, it appears on the \emph{physical sheet} and becomes an eigenvalue.

Note that for any $-1<m_\real\leqslant -\frac12$, we have the identity
\begin{equation}
\sigma(H_{1,m_\real+\i m_\i})=
\sigma(H_{1,m_\real+1+\i m_\i}).
\end{equation}
Therefore, Fig.~1 and 2 have the same spectrum as Fig.~5 and 6. However on Fig.~1 and 2 we have in addition a resonance.

\begin{figure}[H]
\begin{subfigure}{0.5\textwidth}
\includegraphics[scale=0.7]{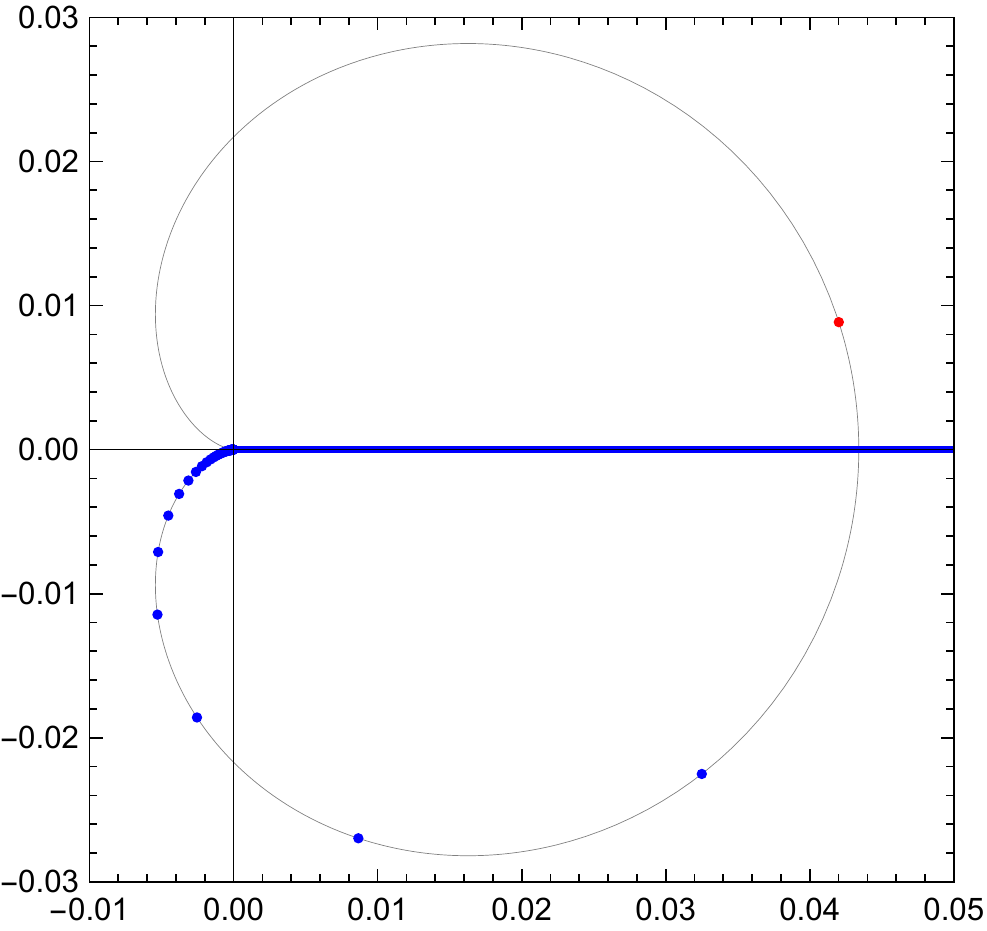}
\caption*{Fig.~1. $\sigma(H_{1,-0.75-2.4\i})$}
\end{subfigure}
\begin{subfigure}{0.5\textwidth}
\includegraphics[scale=0.7]{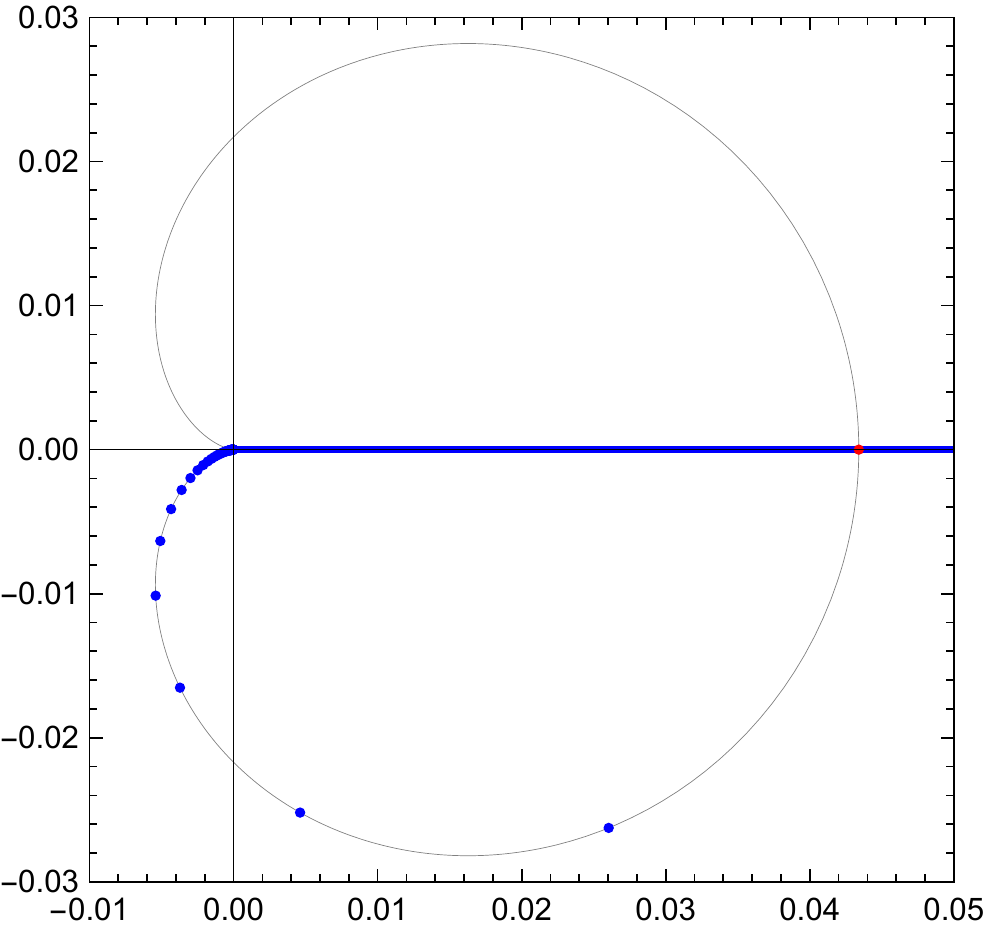}
\caption*{Fig.~2. $\sigma(H_{1,-0.5-2.4\i})$}
\end{subfigure}
\end{figure}

\begin{figure}[H]
\begin{subfigure}{0.5\textwidth}
\includegraphics[scale=0.7]{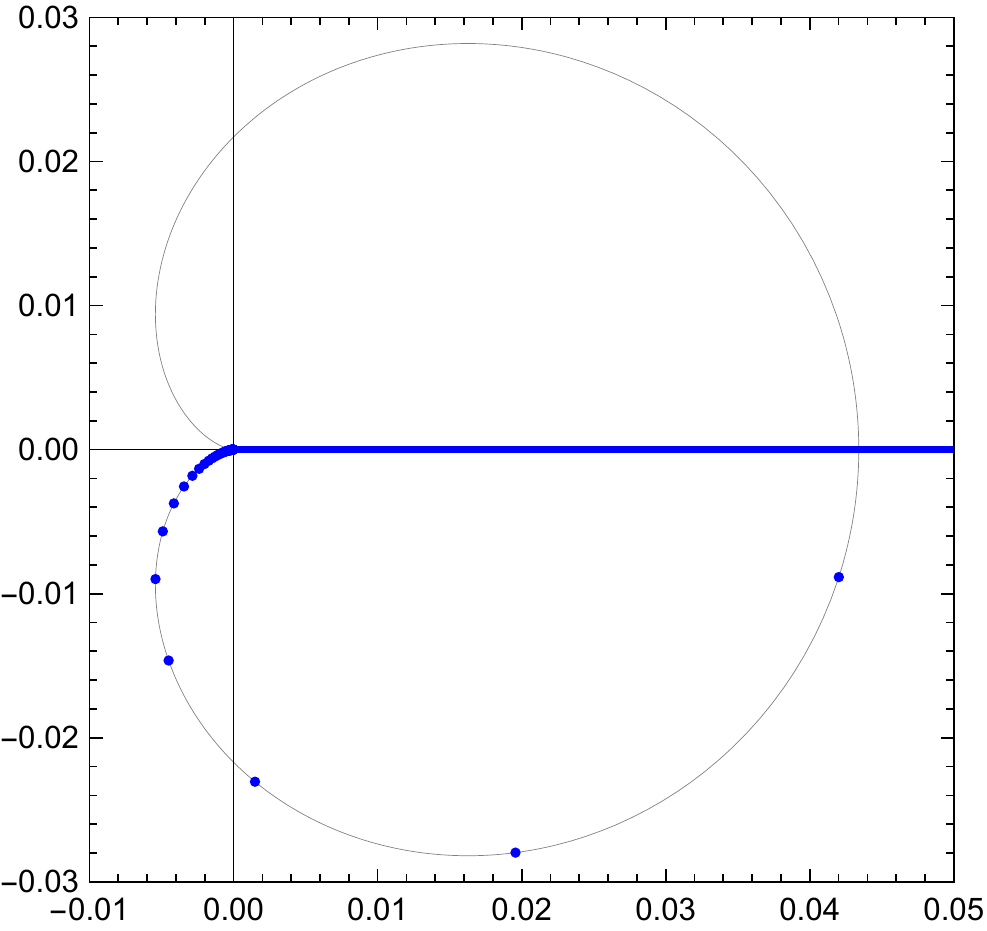}
\caption*{Fig.~3. $\sigma(H_{1,-0.25-2.4\i})$}
\end{subfigure}
\begin{subfigure}{0.5\textwidth}
\includegraphics[scale=0.7]{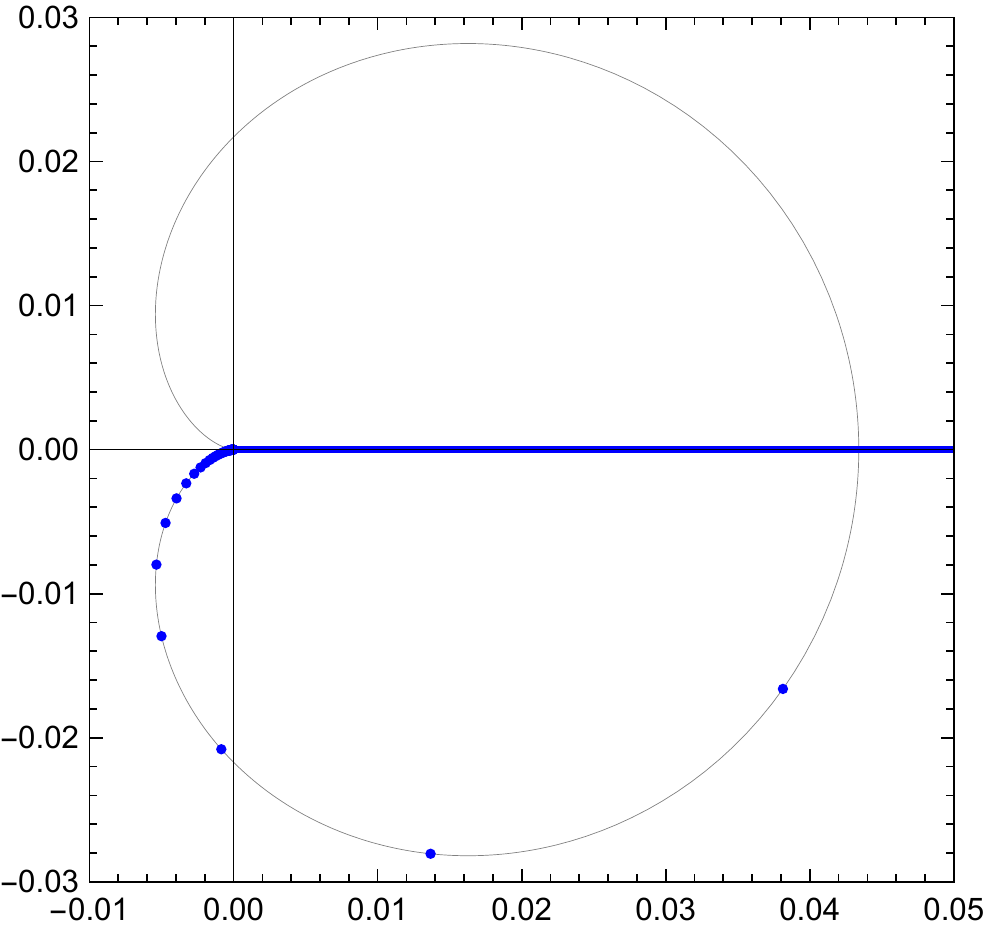}
\caption*{Fig.~4. $\sigma(H_{1,0-2.4\i})$}
\end{subfigure}
\end{figure}

\begin{figure}[H]
\begin{subfigure}{0.5\textwidth}
\includegraphics[scale=0.7]{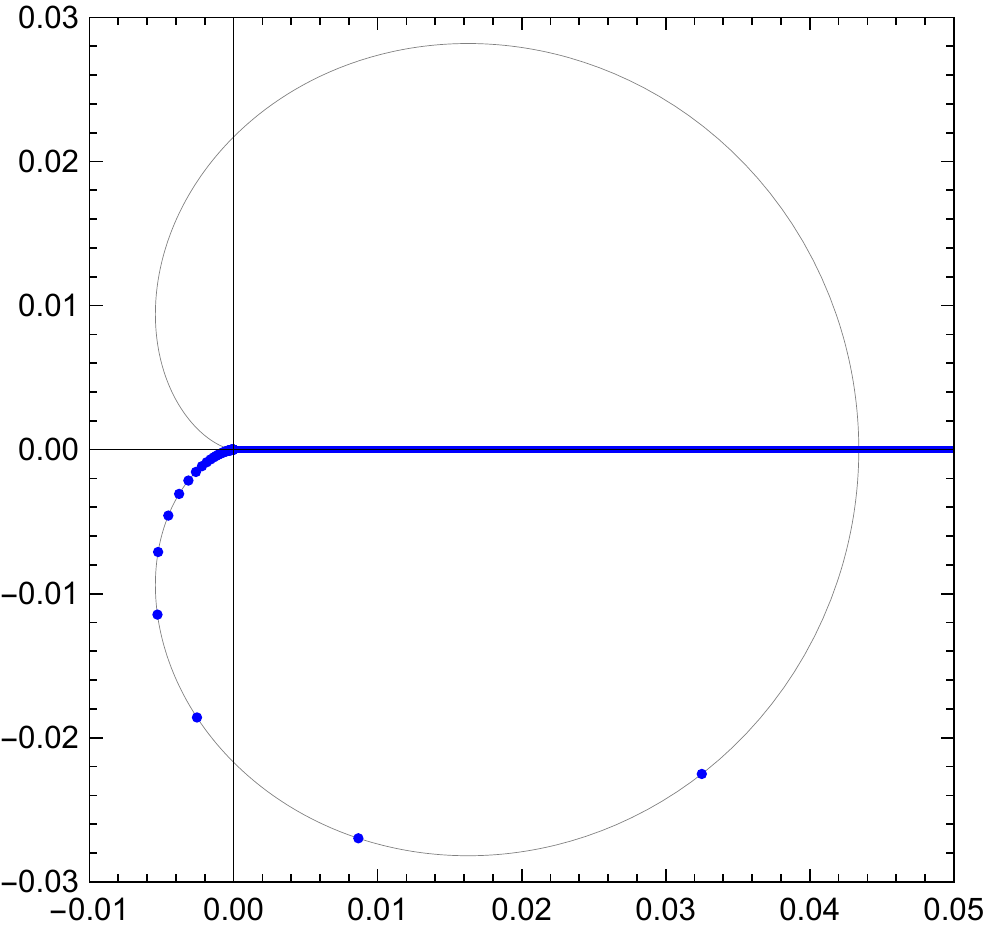}
\caption*{Fig.~5. $\sigma(H_{1,0.25-2.4\i})$}
\end{subfigure}
\begin{subfigure}{0.5\textwidth}
\includegraphics[scale=0.7]{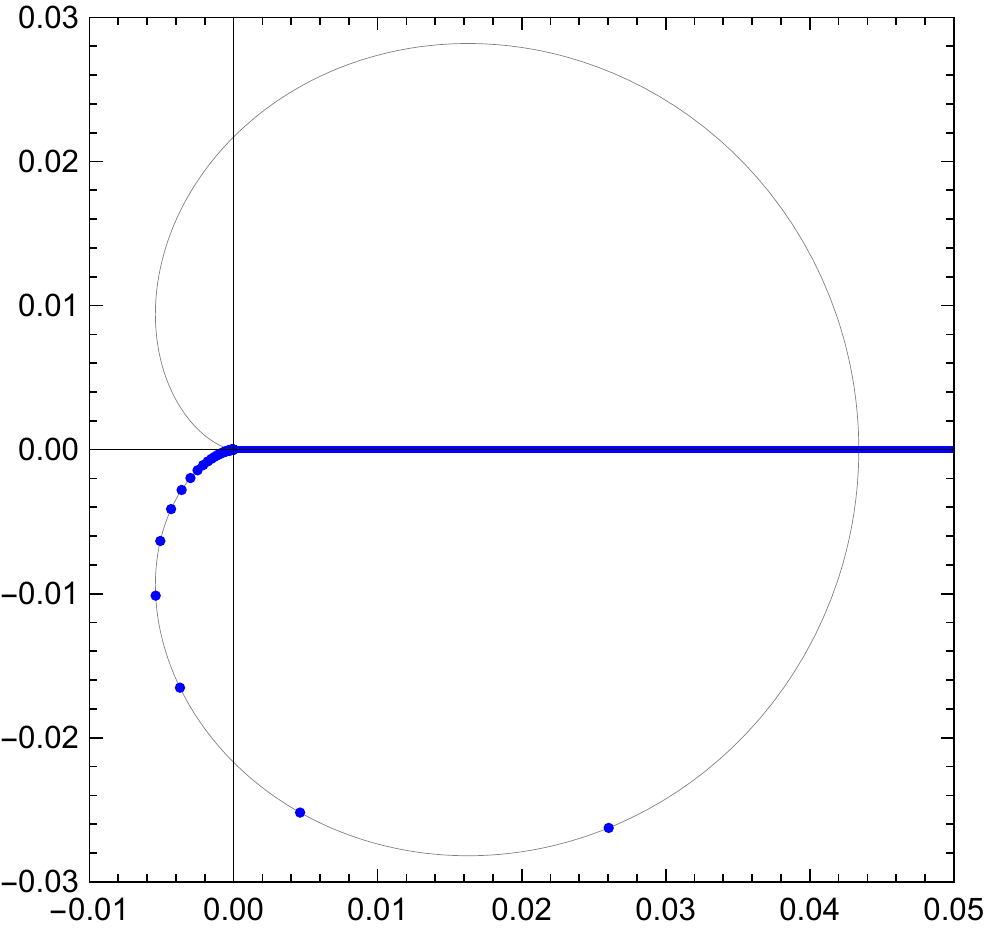}
\caption*{Fig.~6. $\sigma(H_{1,0.5-2.4\i})$}
\end{subfigure}
\end{figure}

\begin{figure}[H]
\begin{subfigure}{0.5\textwidth}
\includegraphics[scale=0.7]{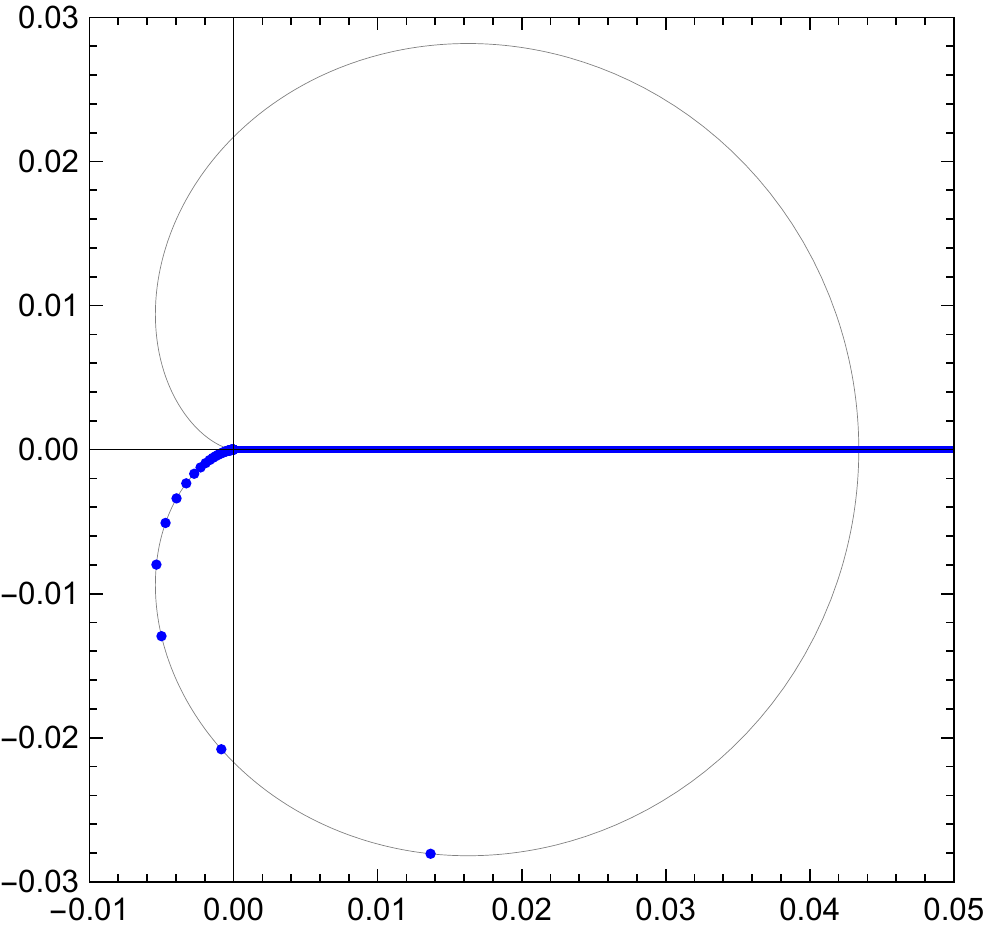}
\caption*{Fig.~7. $\sigma(H_{1,1-2.4\i})$}
\end{subfigure}
\begin{subfigure}{0.5\textwidth}
\includegraphics[scale=0.7]{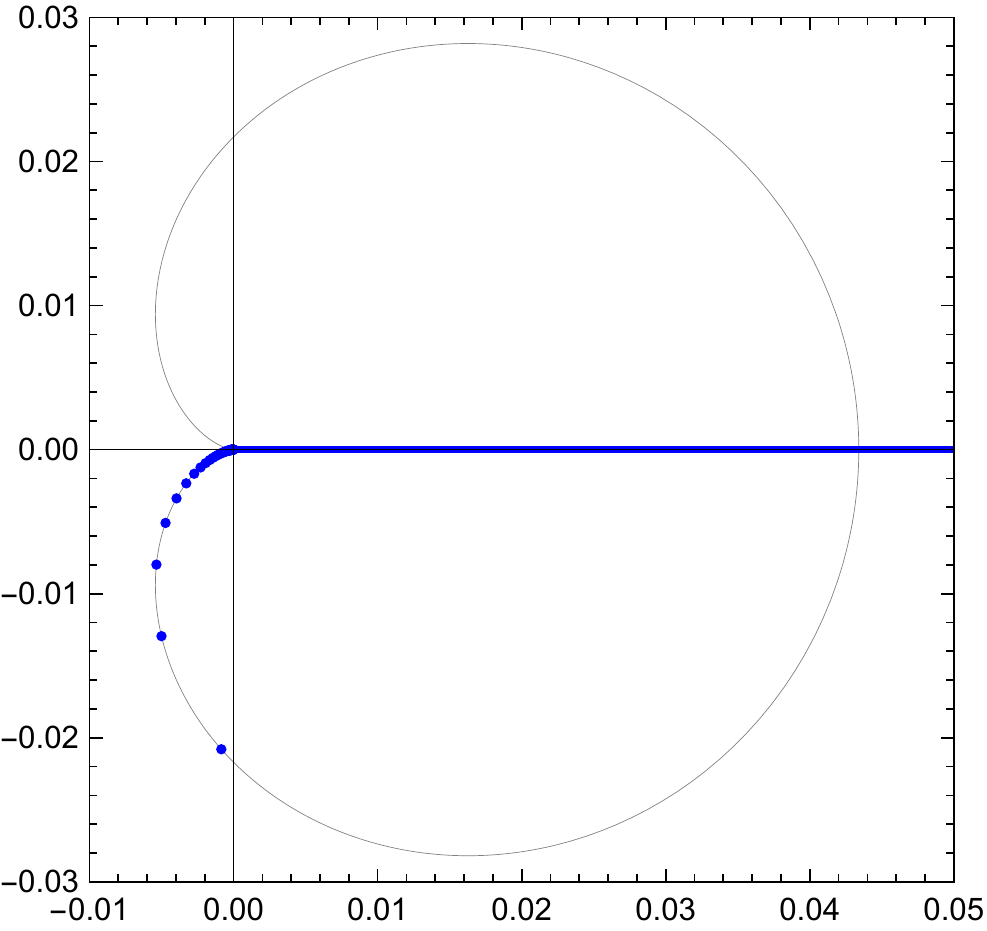}
\caption*{Fig.~8. $\sigma(H_{1,2-2.4\i})$}
\end{subfigure}\end{figure}

Next we show the spectrum for fixed $m$ and $|\beta|$.
More precisely, we present
$$
\e^{-\i2\phi}\sigma\Big(H_{\beta,-0.75+3.2\i}\Big),\quad \beta=\e^{\i\phi} \quad \hbox{for} \quad \phi=\frac{n}{8}\pi \quad \hbox{with}\quad n=0,\dots,15.
$$
This is suggested by the dilation analyticity theory, see \eqref{dila}.
With this choice the point spectrum does not move.
The continuous spectrum, on the other hand, rotates as $\e^{-\i2\phi}$, like a giant hand of a clock.
Eigenvalues hit by the continuous spectrum disappear and become resonances.
Then they reappear when the hand of the clock comes again.
The spectrum is again marked in blue and resonances in red.
We have selected $m$ such that $-1<\Re(m)<\frac12$ on purpose.
The spectrum is then more interesting because there is a lonely resonance on the upper half-plane which appears as an eigenvalue
for some phases of $\beta$.

\begin{figure}[H]
\begin{subfigure}{0.5\textwidth}
\includegraphics[scale=0.7]{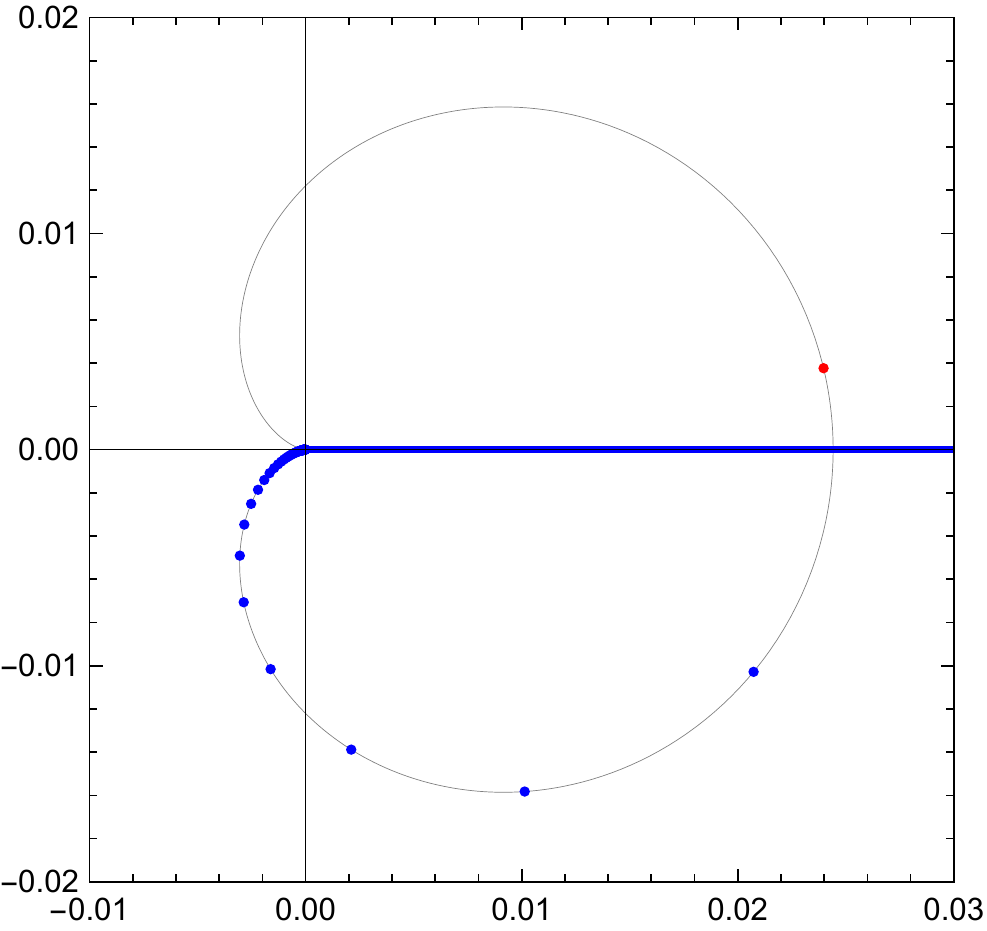}
\caption*{Fig.~9. $\phi=0$}
\end{subfigure}
\begin{subfigure}{0.5\textwidth}
\includegraphics[scale=0.7]{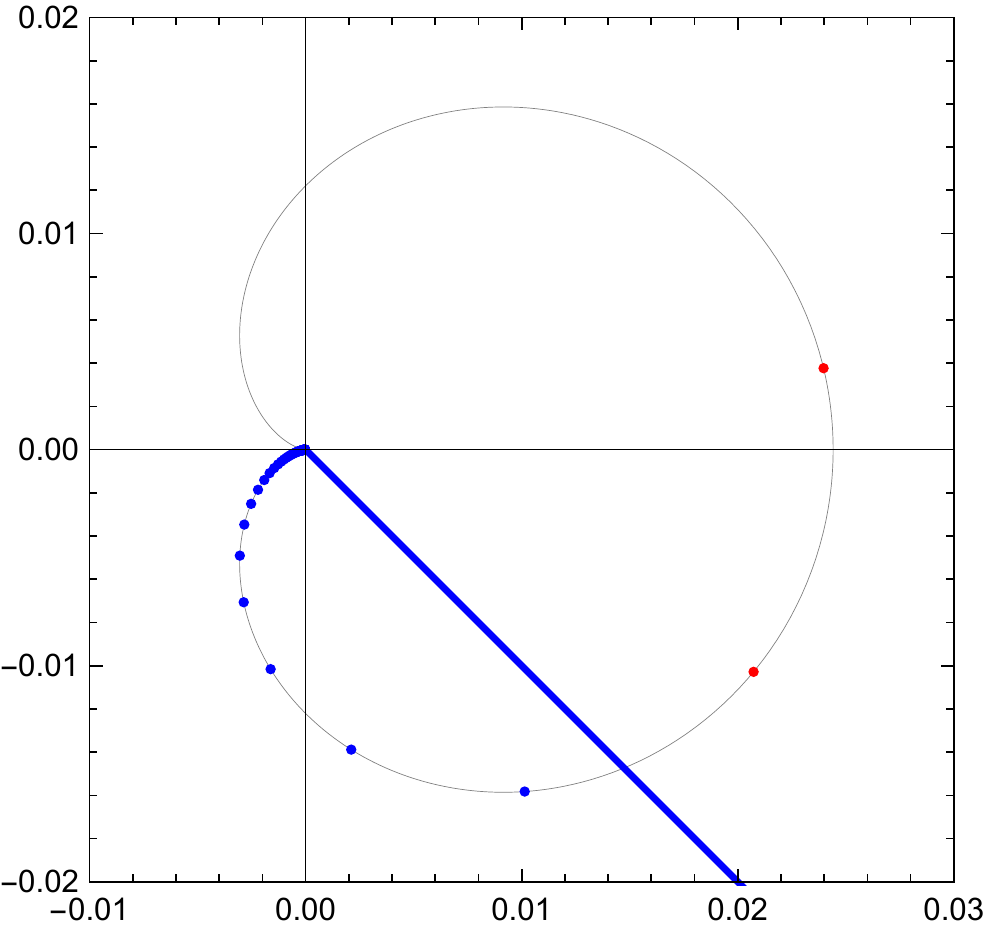}
\caption*{Fig.~10. $\phi=\frac18\pi$}
\end{subfigure}\end{figure}

\begin{figure}[H]
\begin{subfigure}{0.5\textwidth}
\includegraphics[scale=0.7]{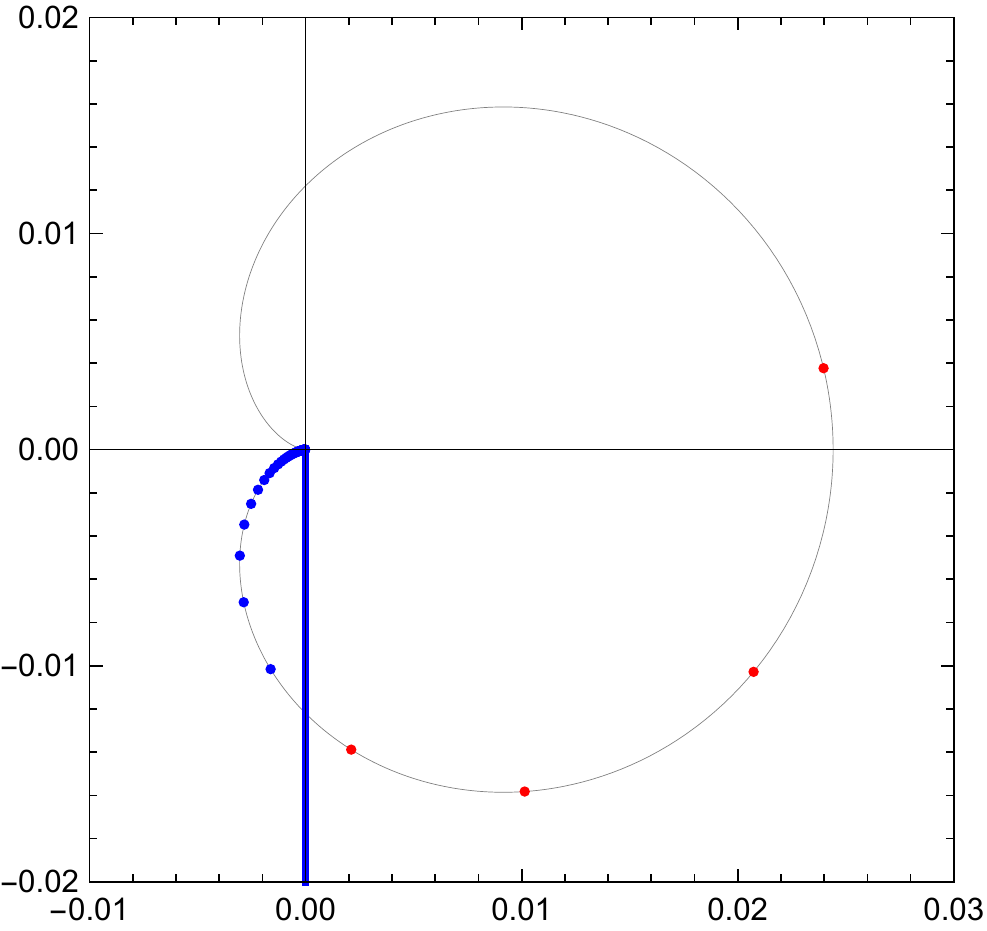}
\caption*{Fig.~11. $\phi=\frac14\pi$}
\end{subfigure}
\begin{subfigure}{0.5\textwidth}
\includegraphics[scale=0.7]{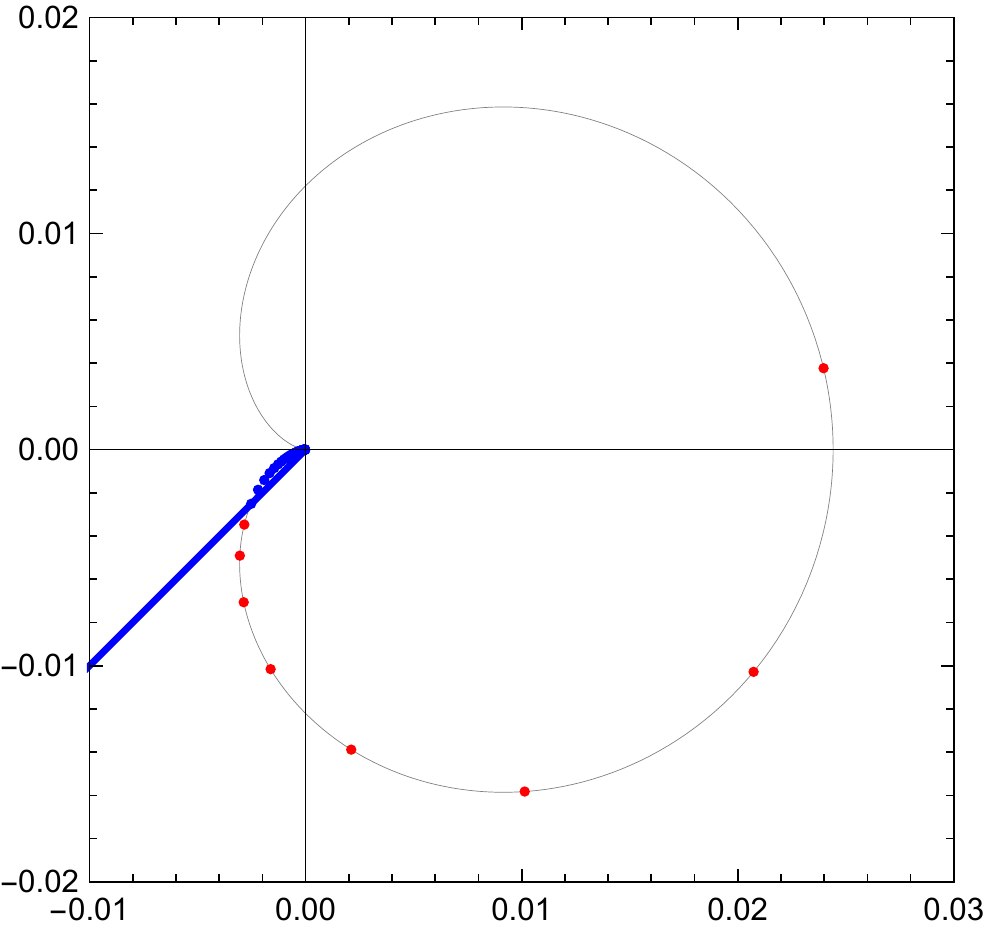}
\caption*{Fig.~12. $\phi=\frac38\pi$}
\end{subfigure}\end{figure}

\begin{figure}[H]
\begin{subfigure}{0.5\textwidth}
\includegraphics[scale=0.7]{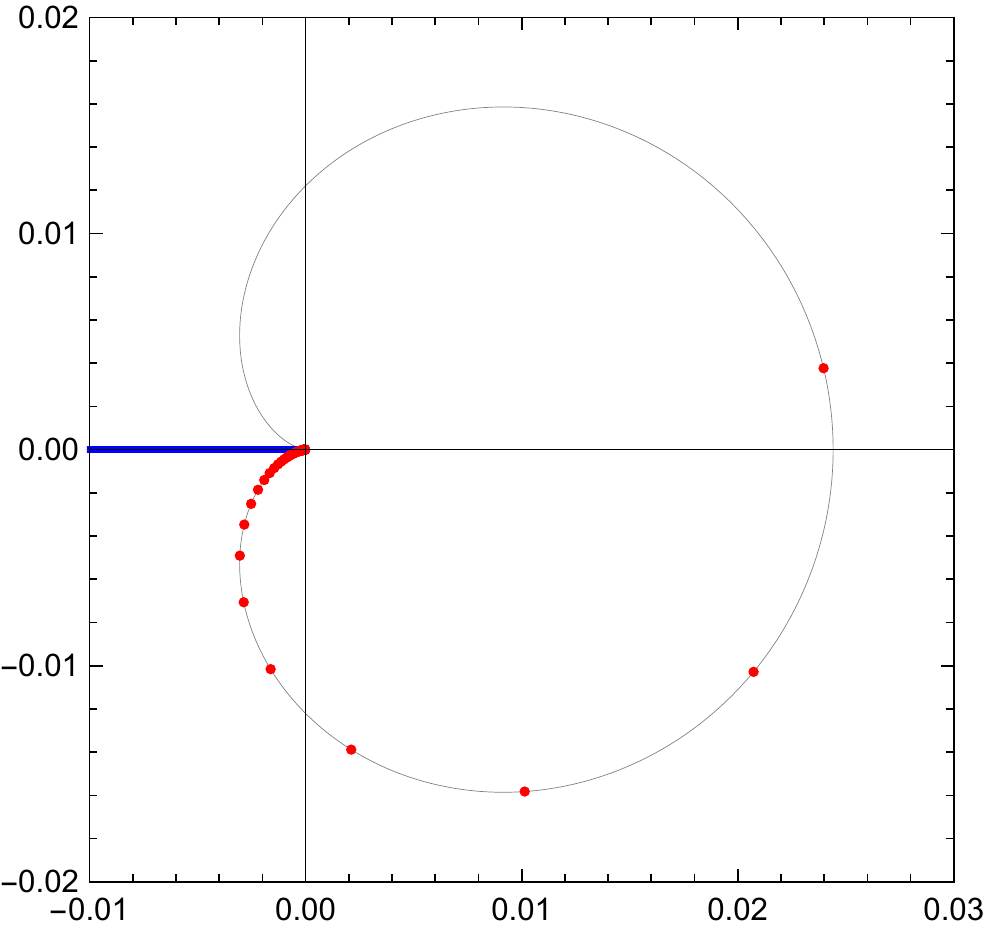}
\caption*{Fig.~13. $\phi=\frac12\pi$}
\end{subfigure}
\begin{subfigure}{0.5\textwidth}
\includegraphics[scale=0.7]{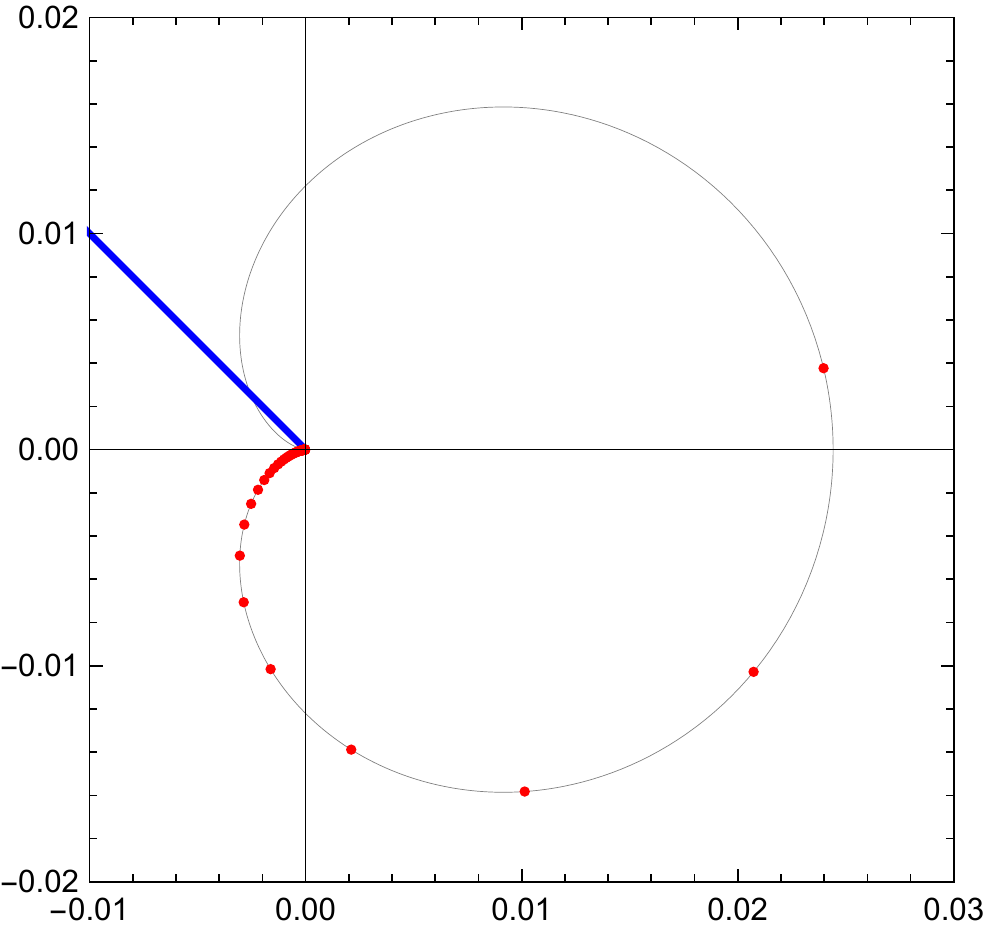}
\caption*{Fig.~14. $\phi=\frac58\pi$}
\end{subfigure}\end{figure}

\begin{figure}[H]
\begin{subfigure}{0.5\textwidth}
\includegraphics[scale=0.7]{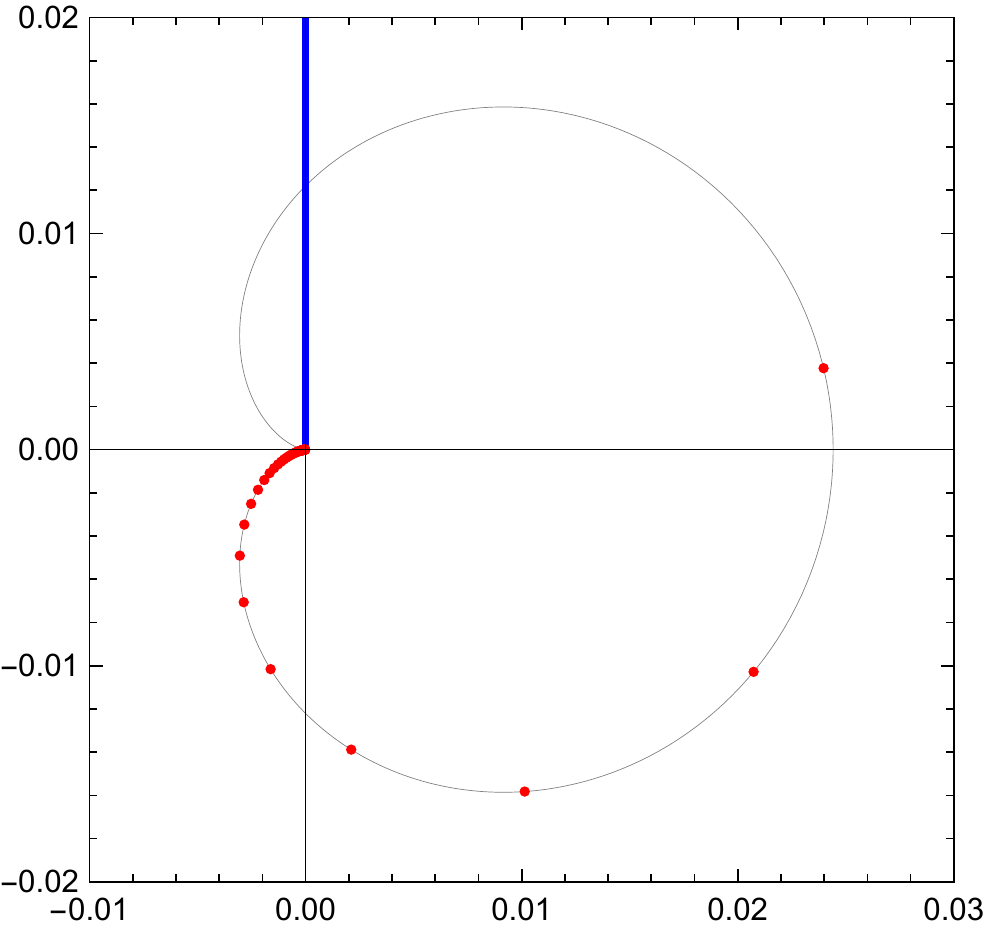}
\caption*{Fig.~15. $\phi=\frac34\pi$}
\end{subfigure}
\begin{subfigure}{0.5\textwidth}
\includegraphics[scale=0.7]{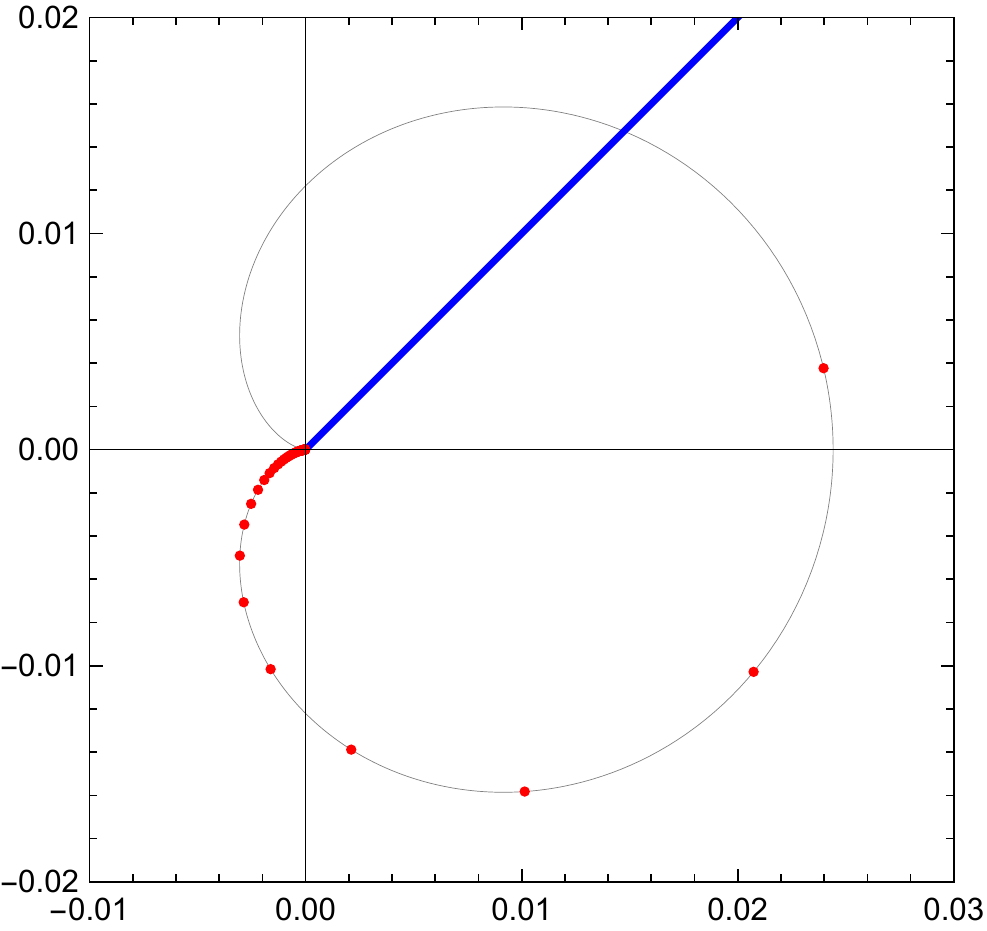}
\caption*{Fig.~16. $\phi=\frac78\pi$}
\end{subfigure}\end{figure}

\begin{figure}[H]
\begin{subfigure}{0.5\textwidth}
\includegraphics[scale=0.7]{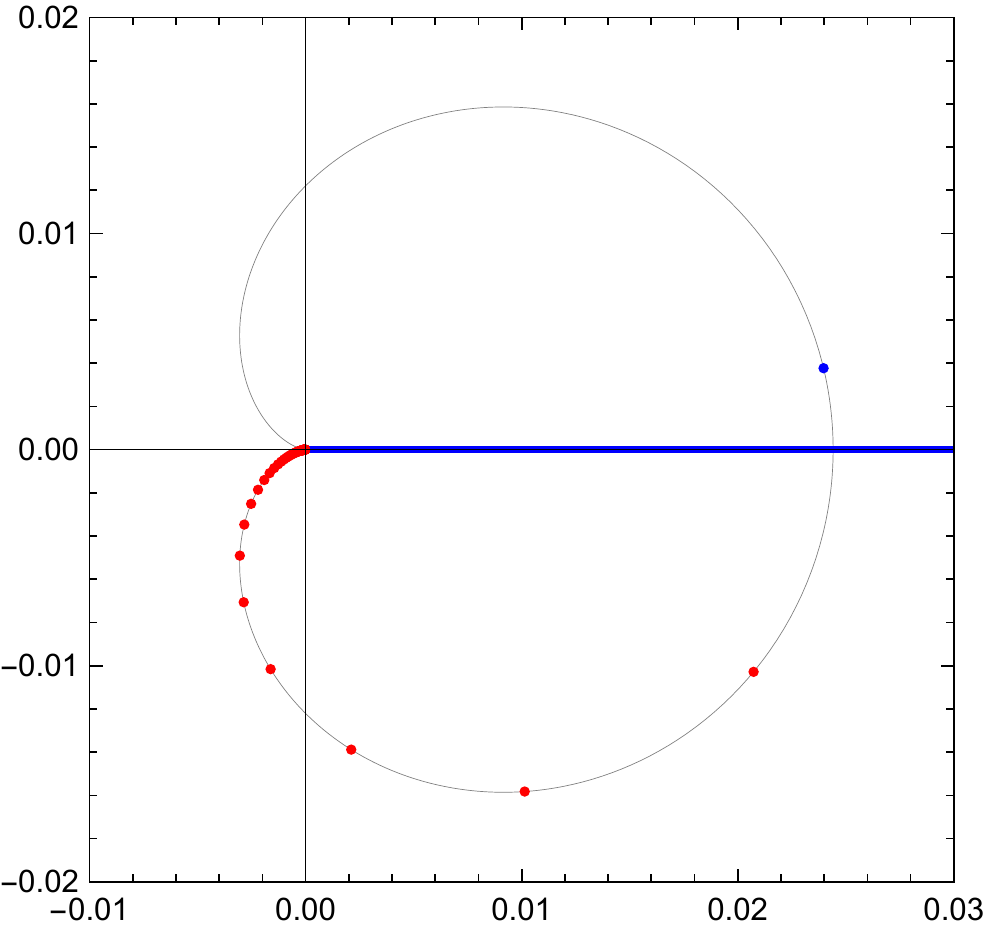}
\caption*{Fig.~17. $\phi=\pi$}
\end{subfigure}
\begin{subfigure}{0.5\textwidth}
\includegraphics[scale=0.7]{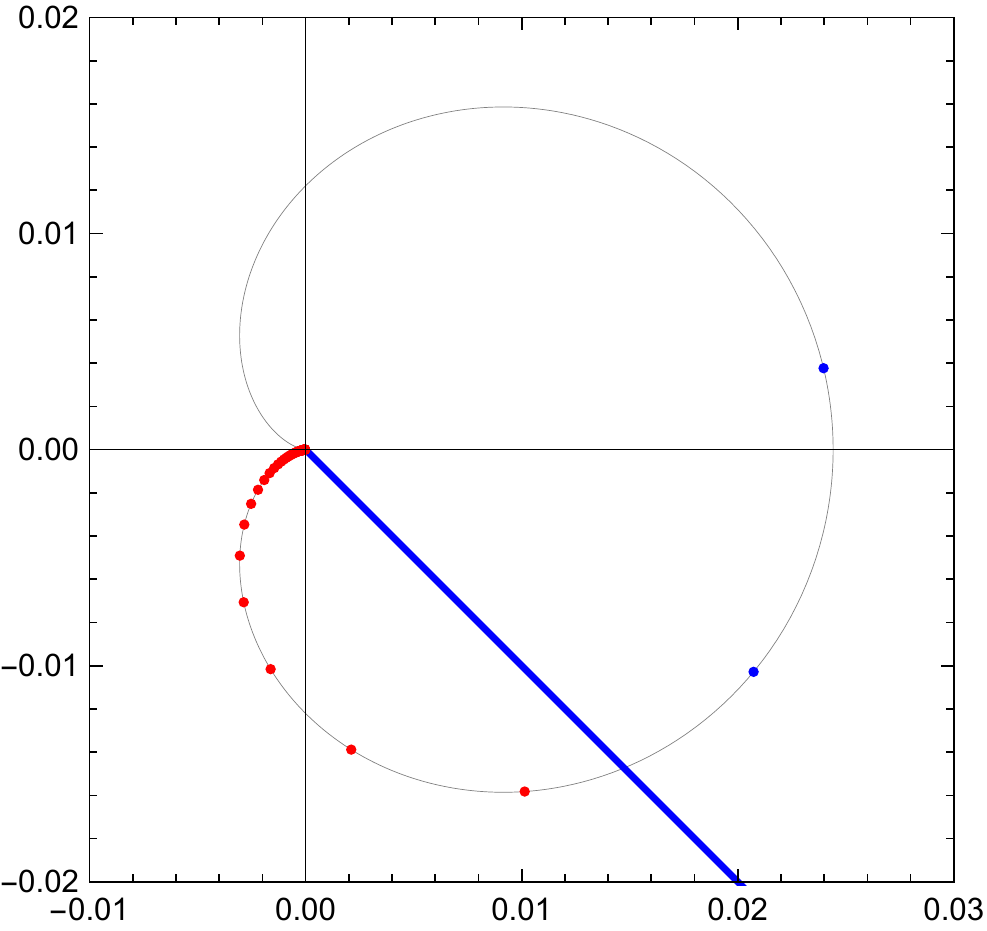}
\caption*{Fig.~18. $\phi=\frac98\pi$}
\end{subfigure}\end{figure}

\begin{figure}[H]
\begin{subfigure}{0.5\textwidth}
\includegraphics[scale=0.7]{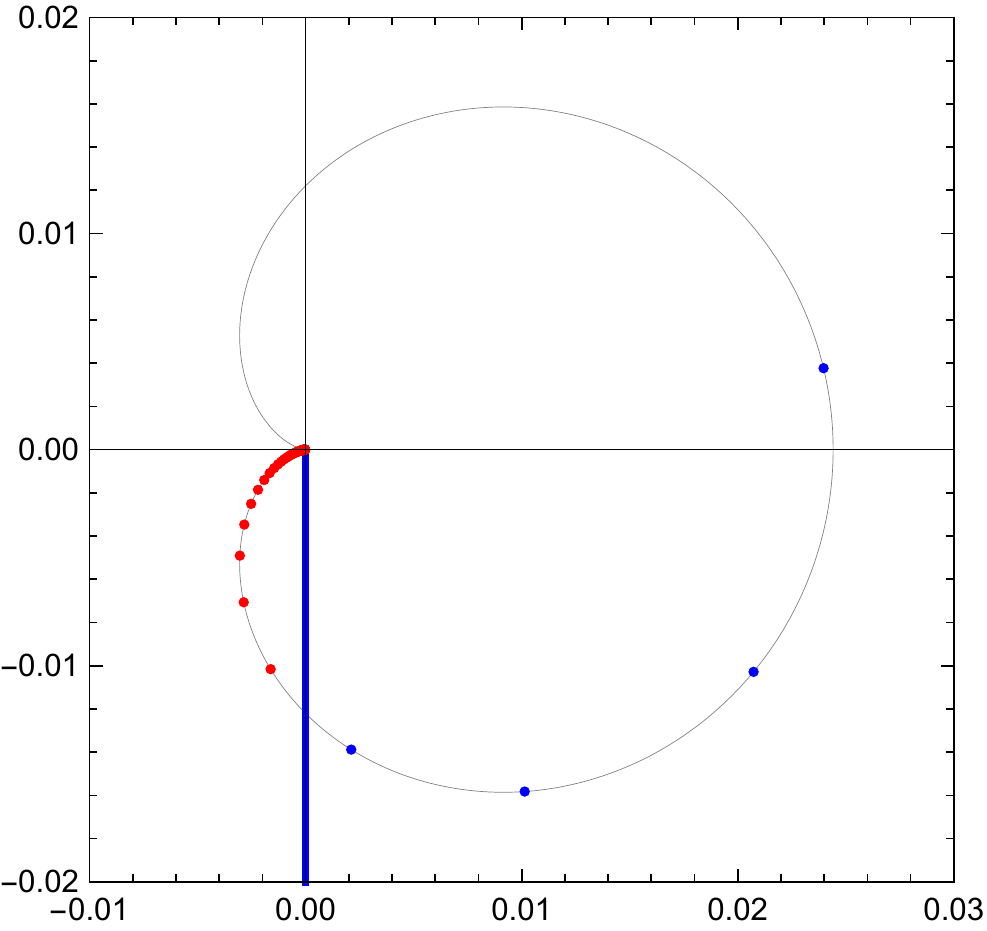}
\caption*{Fig.~19. $\phi=\frac54\pi$}
\end{subfigure}
\begin{subfigure}{0.5\textwidth}
\includegraphics[scale=0.7]{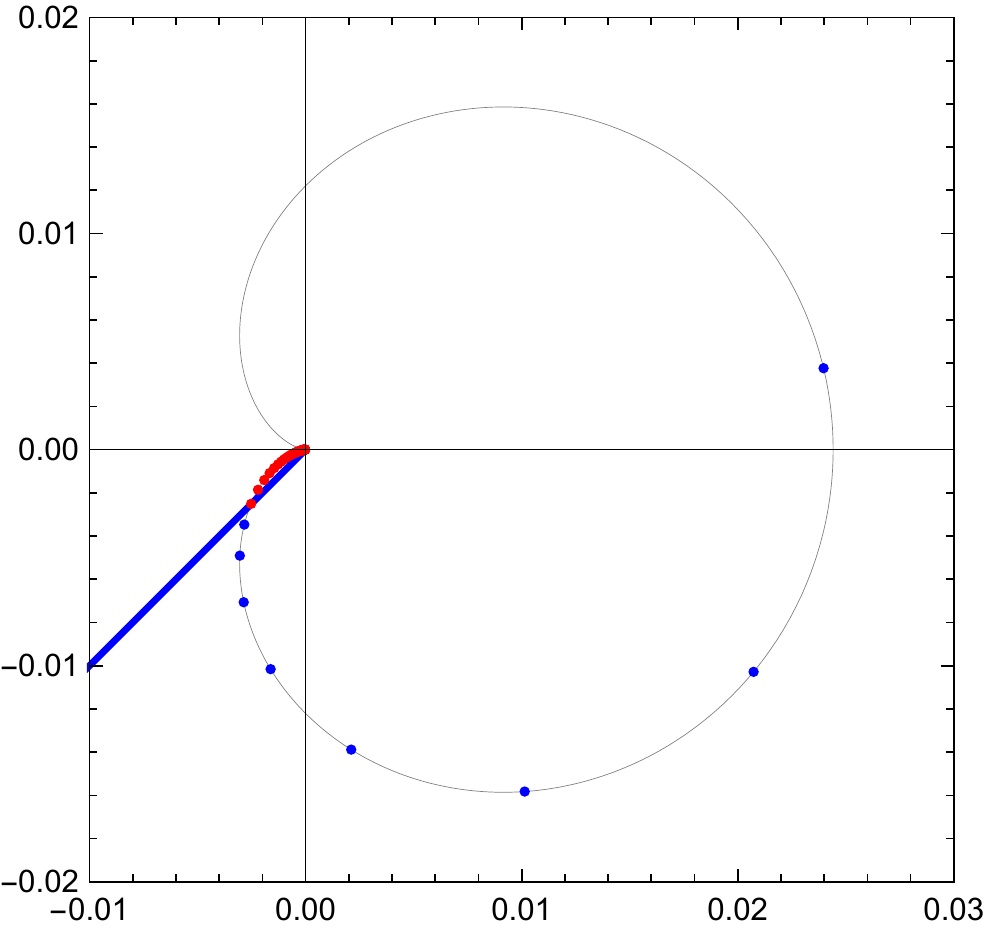}
\caption*{Fig.~20. $\phi=\frac{11}{8}\pi$}
\end{subfigure}\end{figure}

\begin{figure}[H]
\begin{subfigure}{0.5\textwidth}
\includegraphics[scale=0.7]{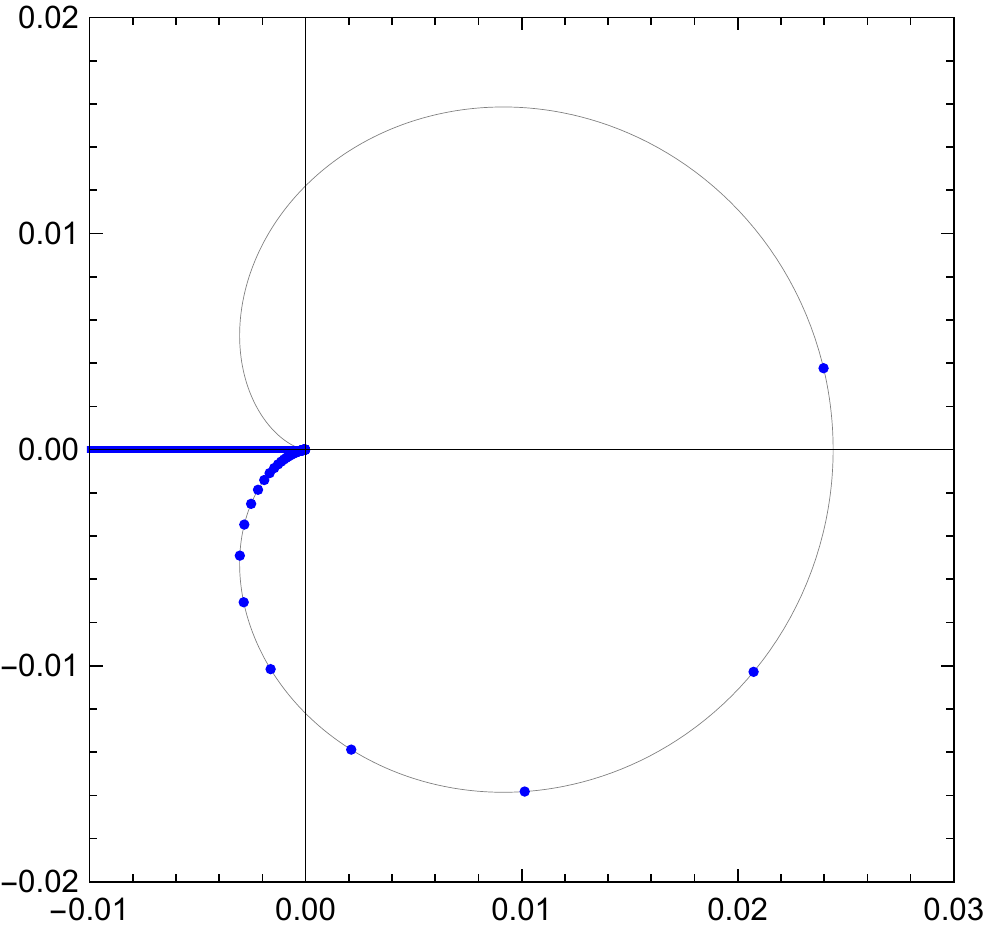}
\caption*{Fig.~21. $\phi=\frac32\pi$}
\end{subfigure}
\begin{subfigure}{0.5\textwidth}
\includegraphics[scale=0.7]{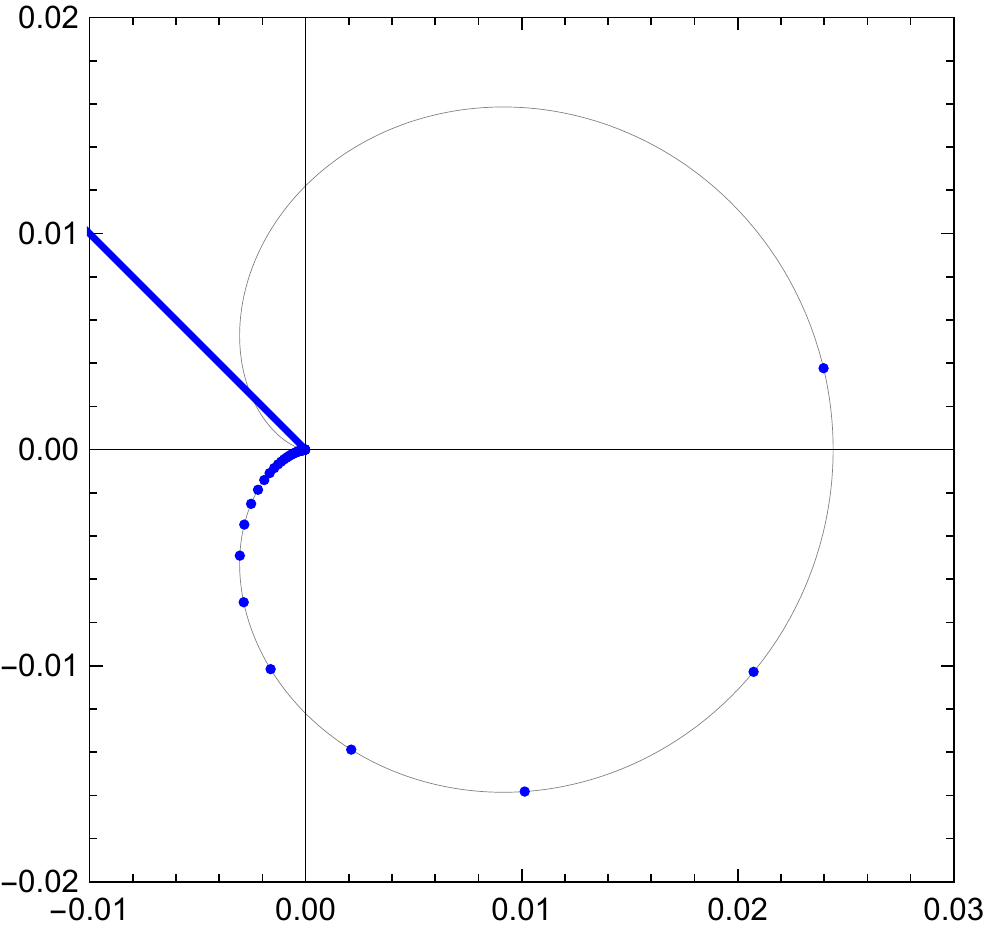}
\caption*{Fig.~22. $\phi=\frac{13}{8}\pi$}
\end{subfigure}\end{figure}

\begin{figure}[H]
\begin{subfigure}{0.5\textwidth}
\includegraphics[scale=0.7]{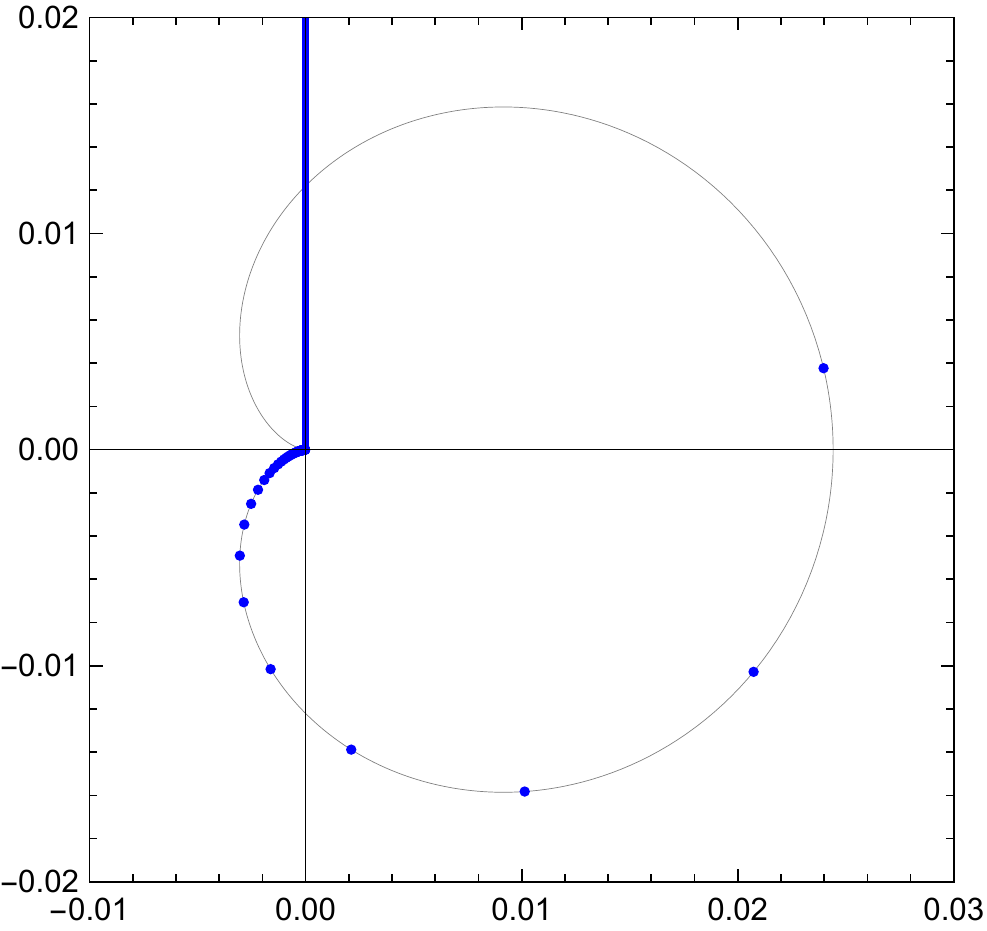}
\caption*{Fig.~23. $\phi=\frac74\pi$}
\end{subfigure}
\begin{subfigure}{0.5\textwidth}
\includegraphics[scale=0.7]{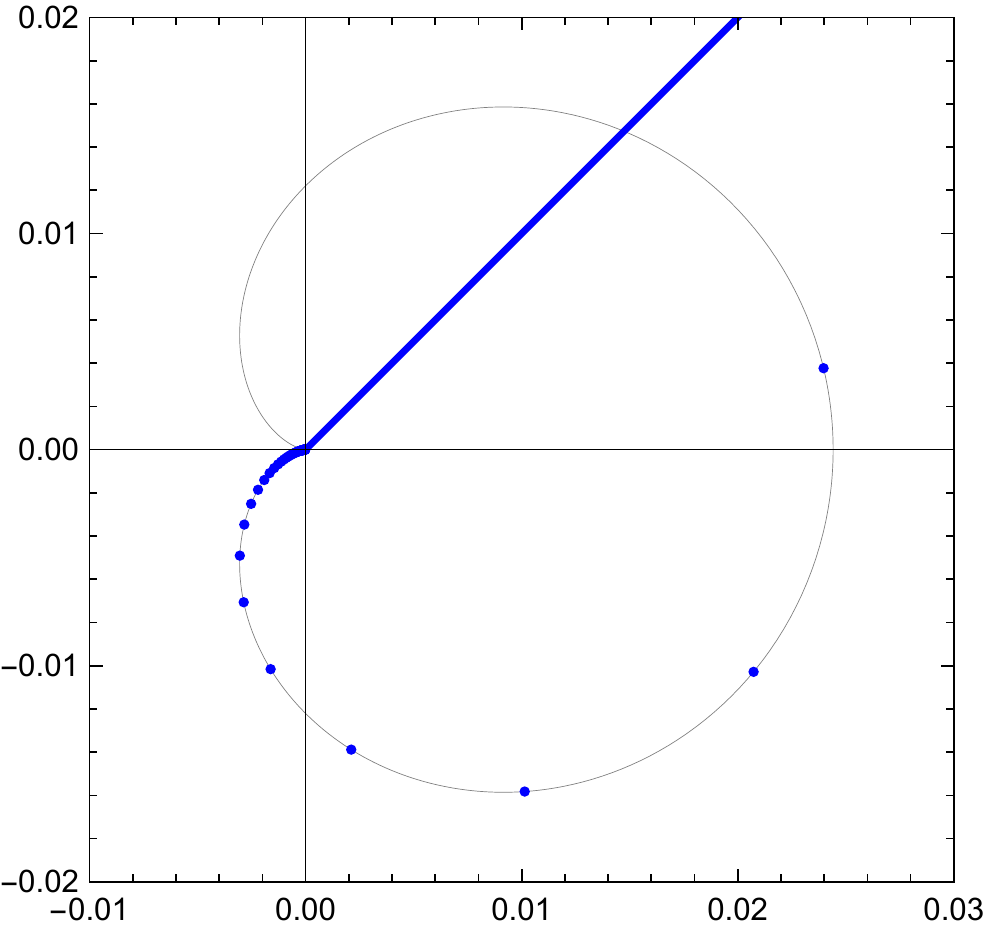}
\caption*{Fig.~24. $\phi=\frac{15}{8}\pi$}
\end{subfigure}\end{figure}


\begin{thebibliography}{aaa}

\bibitem{AS}
M. Abramowitz, I. Stegun,
\emph{Handbook of mathematical functions with formulas, graphs, and mathematical tables},
National Bureau of Standards Applied Mathematics Series {\bf 55}, Washington, D.C. 1964.

\bibitem{BDG}
L. Bruneau, J. Derezi\'nski, V. Georgescu,
\emph{Homogeneous Schr\"odinger operators on half-line},
Ann. Henri Poincar\'e {\bf 12} no. 3 (2011), 547--590.

\bibitem{confluent}
H. Buchholz,
\emph{The confluent hypergeometric function with special emphasis on its applications},
Springer Tracts in Natural Philosophy {\bf 15}, Springer-Verlag New York Inc., New York, 1969.

\bibitem{BG}
W. Bulla, F. Gesztesy,
\emph{Deficiency indices and singular boundary conditions in quantum mechanics},
J. Math. Phys. {\bf 26} no. 10 (1985), 2520--2528.

\bibitem{DG0}
J. Derezi\'{n}ski, C. G\'erard,
\emph{Scattering Theory of Classical and Quantum $N$-Particle Systems},
Texts and Monographs in Physics, Springer, 1997.

\bibitem{DR}
J. Derezi\'nski, S. Richard,
\emph{On Schr\"odinger operators with inverse square potentials on the half-line},
Ann. Henri Poincar\'e {\bf 18} no. 3 (2017), 869--928.

\bibitem{Dol}
J.D. Dollard, \emph{Asymptotic convergence and the Coulomb interaction}, J. Mathematical Phys. {\bf 5} (1964), 729--738.

\bibitem{Gas}
D. Gaspard,
\emph{Connection formulas between Coulomb wave functions},
Preprint arXiv 1804.10976.

\bibitem{Ges}
F. Gesztesy, \emph{On the one-dimensional Coulomb Hamiltonian}, J. Phys. A {\bf 13} no. 3 (1980), 867--875.

\bibitem{GPT}
F. Gesztesy, W. Plessas, B. Thaller,
\emph{On the high-energy behaviour of scattering phase shifts for Coulomb-like potentials},
J. Phys. A {\bf 13} no. 8 (1980), 2659--2671.

\bibitem{GTV}
D.M. Gitman, I.V. Tyutin, B.L. Voronov, \emph{Self-adjoint extensions in quantum mechanics.
General theory and applications to Schr\"odinger and Dirac equations with singular potentials},
Progress in Mathematical Physics {\bf 62}, Birkh\"auser/Springer, New York, 2012.

\bibitem{Gui}
J.C. Guillot, \emph{Perturbation of the Laplacian by Coulomb like potentials},
Indiana Univ. Math. J. {\bf 25} no. 12 (1976), 1105--1126.

\bibitem{Her}
I.W. Herbst, \emph{On the connectedness structure of the Coulomb S-matrix}, Comm. Math. Phys. {\bf 35} (1974), 181--191.

\bibitem{Hum}
J. Humblet, \emph{Analytical structure and properties of Coulomb wave functions for real and complex energies},
Ann. Physics {\bf 155} no. 2 (1984), 461--493.

\bibitem{KP}
M. Khalile, K. Pankrashkin,
\emph{Eigenvalues of Robin Laplacians in infinite sectors},
Math. Nachr. {\bf 291} (2018), 928--965.

\bibitem{LL}
L.D. Landau, E.M. Lifshitz, \emph{Quantum mechanics: non-relativistic theory},
third edition of volume 3, Pergamon press, reprinted 1981.

\bibitem{MOC}
D. Marchesin, M.L. O'Carroll,
\emph{Time-dependent and time-independent potential scattering for asymptotically Coulomb potentials},
J. Mathematical Phys. {\bf 13} (1972), 982--990.

\bibitem{Mic}
N. Michel, \emph{Direct demonstration of the completeness of the eigenstates of the Schr\"odinger equation
with local and nonlocal potentials bearing a Coulomb tail}, J. Math. Phys. {\bf 49} no. 2 (2008), 022109.

\bibitem{Muk}
N. Mukunda, \emph{Completeness of the Coulomb wave functions in quantum mechanics},
American Journal of Physics {\bf 46} (1978), 910--913.

\bibitem{MZ}
D. Mulherin, I.I. Zinnes, \emph{Coulomb scattering. I. Single channel},
J. Mathematical Phys. {\bf 11} (1970), 1402--1408.

\bibitem{Sea}
M.J. Seaton, \emph{Coulomb functions for attractive and repulsive potentials and for positive and negative energies},
Comput. Phys. Comm. {\bf 146} no. 2 (2002), 225--249.

\bibitem{NIST}
I.J. Thompson, \emph{Coulomb functions}, Chapter 33 of NIST Digital Library of Mathematical Functions,
http://dlmf.nist.gov/

\bibitem{R}
F. Rellich,
\emph{Die zul\"assigen Randbedingungen bei den singul\"aren Eigenwertproblemen der mathematischen Physik.
(Gew\"ohnliche Differentialgleichungen zweiter Ordnung)}, Math. Z. {\bf 49} (1944), 702--723.

\bibitem{TB}
I.J. Thompson, A.R. Barnett,
\emph{Coulomb and Bessel functions of complex arguments and order}, J. Comput. Phys. {\bf 64} no. 2 (1986), 490--509.

\bibitem{Whi}
E.T. Whittaker, \emph{An expression of certain known functions as generalized hypergeometric functions},
Bull. Amer. Math. Soc. {\bf 10} (1903), 125--134.

\bibitem{Yaf}
D. Yafaev, \emph{On the classical and quantum Coulomb scattering},
J. Phys. A {\bf 30} no. 19 (1997), 6981--6992.

\end{thebibliography}
\end{document}